
\documentclass[a4paper]{amsart}
\usepackage{amsthm}
\usepackage{amsmath}
\usepackage{geometry}
\usepackage{amsfonts}
\usepackage{graphicx}
\usepackage{array}
\usepackage{amssymb}

\setcounter{MaxMatrixCols}{10}

\newtheorem{theorem}{Theorem}
\newtheorem{corollary}[theorem]{Corollary}
\newtheorem{proposition}[theorem]{Proposition}
\newtheorem{definition}[theorem]{Definition}
\newtheorem{lemma}[theorem]{Lemma}\newtheorem{example}[theorem]{Example}

\numberwithin{equation}{section}
\numberwithin{theorem}{section}

\begin{document}
\title[On Low's reconstruction theorem]{\textsc{On the space of light rays of a space-time and a reconstruction theorem by Low}}
\author{A. Bautista, A. Ibort}
\address{Depto. de Matem\'aticas, Univ. Carlos III de Madrid, Avda. de la
Universidad 30, 28911 Legan\'es, Madrid, Spain.}
\email{abautist@math.uc3m.es, albertoi@math.uc3m.es}
\author{J. Lafuente}
\address{Depto. de Geometr\'{\i}a y Topolog\'{\i}a, Univ. Complutense de
Madrid, Avda. Complutense s/n, 28040 Madrid, Spain.}
\email{jlafuente@mat.ucm.es}
\date{}
\thanks{This work has been partially supported by the Spanish MICIN grant
MTM 2010-21186-C02-02 and QUITEMAD P2009 ESP-1594.  A.I. wants to thank the program ``Salvador de Madariaga''
for partial support during the stay at the Dept. of Maths. Univ. California at Berkeley where part of this work was done..}

\begin{abstract}   A reconstruction theorem in terms of the
topology and geometrical structures on the spaces of light rays and skies of a given space--time is discussed.
This result can be seen as part of Penrose and Low's programme intending to describe the causal structure
of a space--time $M$ in terms of the topological and geometrical properties of the space of light rays, i.e., unparametrized time-oriented null geodesics, $\mathcal{N}$.
In the analysis of the reconstruction problem  it becomes instrumental the structure of the space of skies, i.e., of congruences of
light rays.
It will be shown that the space of skies $\Sigma$ of a strongly causal skies distinguishing space--time $M$ carries a
canonical differentiable structure diffeomorphic to the original manifold $M$.
Celestial curves, this is, curves in $\mathcal{N}$ which are everywhere tangent to skies, play a fundamental role
in the analysis of the geometry of the space of light rays.  It will be shown that a celestial curve is induced by a
past causal curve of events iff the legendrian isotopy defined by it is non-negative.  This result extends in a nontrivial way some recent results by Chernov \emph{et al} on Low's Legendrian conjecture.   Finally, it will be shown
that a celestial causal map between the space of light rays of two
strongly causal spaces (provided that the target space is null non--conjugate) is necessarily induced
from a conformal immersion and conversely.  These results make explicit the fundamental role played by the collection of skies, a collection of legendrian spheres
with respect to the canonical contact structure on $\mathcal{N}$, in characterizing the causal structure of space--times.
\end{abstract}

\maketitle
\tableofcontents

\section{Introduction}

In this paper the problem of reconstructing a space-time $M$ from
the topology and geometry of its space of future oriented, unparametrized null geodesics $\mathcal{N}$
or, for brevity, light rays, will be addressed.  This problem can be seen as part of a programme proposed by R. Penrose and
developed partially by R. Low in which a systematic discussion of causality
properties of Lorentzian space--times in terms of the topology of the
corresponding spaces of null geodesics \cite{Lo88}, \cite{Lo90}, \cite{Lo94}, \cite{Lo06} is intended.   Low's conjecture that states that two events
in a time--oriented Lorentzian manifold are causally related iff their corresponding skies, which are legendrian knots with respect to the canonical contact structure in the space of null geodesics, are
linked, constitutes one of its most salient outcomes. Recently it was shown by Chernov and Rudyak \cite%
{Ch08} and Chernov and Nemirovski \cite{Ch10} that Low's conjecture is
actually true in a globally hyperbolic space with a Cauchy surface whose
universal covering is diffeomorphic to an open domain in $\mathbb{R}^{n}$.
Thus the exploration of the relation between the causal properties of a
conformal class of Lorentzian metrics and the topological properties
of skies in the manifold of light rays opens a new and
exciting relation between the topology and causality relations of Lorentzian
space--times and the topology of contact manifolds.

In this paper we will analyze a theorem sketched in Low's papers on the
possibility of recovering the conformal structure of the original
space--time from the space of skies which constitutes a family of Legendrian (possibly
linked) spheres in the contact manifold of  light rays of
the original manifold. Such theorem provides a way to \textquotedblleft come
back\textquotedblright\ from the space of  light rays to the conformal
structure that could contribute to clarify the relation between causality and topological linking.

In the analysis presented here a paramount role is played by the space of skies $\Sigma$
 of the space--time $M$ where the sky $S(x)$ of a given point $x\in M$
is the congruence of  light rays passing through it.
It is well--known that if the space-time $M$, i.e., a time--oriented Lorentzian manifold, is strongly causal then the space of light rays has a smooth structure \cite{Lo89}.  Moreover if we assume that the space $M$ is sky distinguishing, this
is $S(x)\neq S(y)$ if $x\neq y$, then it will be shown  (Section 3, Thm. \ref{theorem2} and Cor. \ref{corollary3.12}) that the space of skies $\Sigma$ carries
a canonical topology as well as a canonical differentiable structure
defined using exclusively the contact structure of the manifold $\mathcal{N}$ and
that such smooth structure is diffeomorphic to the smooth structure of
the original space--time (Corollary 3.9).  The proof of these results are based on
the construction of a basis for the topology of the space of skies by 
regular open subsets of $\Sigma$ where regular means that the corresponding
tangent spaces to the skies elements of the open set ``pile up'' nicely
defining a regular submanifold on the tangent space to $M$.   The proof of this
statement constitutes the main part of section 3, Thm. 3.6, where a new technique
of convergence of families of Jacobi fields is used.

Now we will turn our strategy to study under what circumstances a smooth map
between the spaces of light rays corresponding to two space-times induces
a conformal transformation among them or, in other words, we would like to explore in what sense
the space of light rays of a given space--time characterizes it.   
It is clear that such a map should satisfy strong conditions. 
We will show that such analysis relies heavily on the study of celestial curves.
A celestial curve is a regular curve in $\mathcal{N}$ whose velocity
vector is always tangent to some sky. These curves induce legendrian
isotopies between skies.  It will be shown in Sections 5 and 6, Thms. \ref{cor00300} and \ref{causal_legendrian},
that a curve $\Gamma$ in $\mathcal{N}$ is a causal celestial curve iff
it defines a non--negative legendrian isotopy of skies.  This result extends in
a  non-trivial way results obtained by Chernov \emph{et al} in their analysis
of Low's legendrian conjecture \cite{Ch10}.

Finally the uniqueness of the reconstruction will be discussed In Section 6.
It is clear that diffeomorphisms on $\mathcal{N}$ preserving skies, i.e., inducing  
a diffeomorphism in the original space--time, obviously
preserve celestial curves.    
Then it will be shown that, if we have two strongly causal space--times $M_1$
and $M_2$ such that their spaces of  light rays are diffeomorphic by a
diffeomorphism that transforms causal celestial curves into causal celestial curves, then it induces a conformal immersion $M_1\subset M_2$ provided that the space $M_2$ is null non-conjugate, this is there are no
conjugate points along null geodesic segments.
This theorem provides the uniqueness result we were looking for and it is the best 
that can be obtained as the discussion of the example at the end of this section shows.

\section{The space of  light rays of a space--time: its differentiable and contact structure}

Throughout this section, following the flavour of \cite{Lo89}, \cite{Lo00}
and \cite{Lo06}, we will describe the space of  light rays of a
space--time, its contact structure and atlas for its tangent
bundle that will be useful in what follows.

\subsection{The smooth structure of the space of  light rays}

Let us consider a time--oriented $m$-dimensional Lorentz manifold $M$ with
metric $\mathbf{g}$ and conformal metric class $\mathcal{C}$ (we will just call $(M,\mathcal{C})$ a space--time in what
follows). We will denote, as indicated in the introduction, by $\mathcal{N}$ the space of future oriented
unparametrized null geodesics, or simply light rays, in $M$. We are interested in the causal
structure $\mathcal{C}$ and the selected metric $\mathbf{g}\in\mathcal{C}$
should be considered as an auxiliary tool to study $\mathcal{C}$.

Let us denote by $TM$ the tangent bundle of $M$ and by $\pi
_{M}:TM\rightarrow M$ the corresponding canonical projection. The set $%
\mathbb{N}^{+}=\{\xi \in TM:\mathbf{g}\left( \xi ,\xi \right) =0,\xi \neq 0,\xi
\,\,\mathrm{future}\}\subset TM$ defines the subbundle of future null
vectors over $M$. Any element $\xi \in \mathbb{N}^{+}$ defines a unique future
oriented null geodesic $\gamma $ in $M$ such that $\gamma \left( 0\right)
=\pi _{M}\left( \xi \right) $ and $\gamma ^{\prime }\left( 0\right) =\xi $.
Consider the quotient space of $\mathbb{N}^{+}$ with respect to positive scale
transformations, i.e., the quotient space with respect to the dilation, or
Euler vector field $\Delta $ on $\mathbb{N}^{+}$, that is the space of leaves of
the vector field whose flow is given by $e^{t}\xi $, $t\in \mathbb{R}$. In
this way, we obtain the bundle $\mathbb{PN}^{+}$ of future null directions
\begin{equation*}
\mathbb{PN}^{+}=\{\left[ \xi \right] :u\in \left[ \xi \right] \Leftrightarrow
u=\lambda \xi \text{ where }0\neq \lambda \in \mathbb{R}^{+},\xi \in \mathbb{%
N}\}
\end{equation*}%
Now, any $\left[ \xi \right] \in \mathbb{PN}^{+}$ defines an unparametrized
future oriented null geodesic, i.e., a light ray, in $M$ which is the image in $M$ of the null
geodesic $\gamma $ defined by $\xi \in \mathbb{N}^{+}$. We denote by $\pi \colon
\mathbb{PN}^{+}\rightarrow M$ the canonical projection of the bundle $\mathbb{PN}^{+}
$ over $M$. The fibre $\pi ^{-1}(p)$ is diffeomorphic to the standard
sphere $\mathbb{S}^{m-2}$. We observe that the bundle $\mathbb{PN}^{+}$ is
foliated by the lifts of these light rays to $\mathbb{PN%
}^{+}$, which are projections to $\mathbb{PN}^{+}$ of integral curves of the
geodesic spray $X_{\mathbf{g}}$ restricted to $\mathbb{N}^{+}$. We will call $%
\mathcal{F}$ to this foliation. Then, the space of  light rays $\mathcal{N%
}$ can be defined too as the quotient space $\mathbb{PN}^{+}/\mathcal{F}$ or,
equivalently, as the quotient space of $\mathbb{N}^{+}$ by the foliation $%
\mathcal{K}$ whose leaves are the maximal integral submanifolds lying in $%
\mathbb{N}^{+}$ of the integrable distribution defined by $\Delta $ and $X_{%
\mathbf{g}}$, this is $\mathcal{N}\cong \mathbb{PN}^{+}/\mathcal{F}=\mathbb{N}^{+}/%
\mathcal{K}$. We will denote by $\sigma $ the canonical projection $\sigma
\colon \mathbb{PN}^{+}\rightarrow \mathbb{PN}^{+}/\mathcal{F}$ .

The quotient space $\mathbb{PN}^{+}/\mathcal{F}$ is not a differentiable
manifold in general. It is not hard to construct examples (see for instance
examples 2.1 and 2.2 in \cite{Lo89}) of spaces of  light rays whose
topology cannot be induced by any differentiable structure or which are
non-Hausdorff.  Sufficient conditions are given in \cite[%
Proposition 2.1 and 2.2]{Lo89} that guarantee that $\mathcal{N}$ inherits a
differentiable structure.

\begin{proposition}
Let $M$ be a strongly causal space-time of dimension $m$. Then $\mathbb{PN}^{+}/%
\mathcal{F}$ inherits a canonical differentiable structure from $\mathbb{PN}^{+}$
of dimension $2m-3$ such that $\sigma$ is a smooth submersion. Moreover, if $%
M$ is not nakedly singular, then $\mathbb{PN}^{+}/\mathcal{F}$ is Hausdorff.
\end{proposition}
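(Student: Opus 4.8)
The plan is to reduce the statement to a standard criterion for the leaf space of a foliation to be a smooth manifold, and then to extract the required transversals from the causal geometry of $M$. First I would record that the distribution spanned by the Euler field $\Delta$ and the geodesic spray $X_{\mathbf{g}}$ is involutive on $\mathbb{N}^{+}$, so that the foliation $\mathcal{K}$, and hence its projection $\mathcal{F}$ on $\mathbb{PN}^{+}$, genuinely exists; on $\mathbb{PN}^{+}$, whose dimension is $2m-2$, the foliation $\mathcal{F}$ is one--dimensional, so the expected dimension of the quotient is $2m-3$. The question is therefore whether the leaf space of the one--dimensional foliation $\mathcal{F}$ carries a manifold structure making $\sigma$ a submersion. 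The clean conceptual formulation is Godement's criterion: writing $R\subset \mathbb{PN}^{+}\times\mathbb{PN}^{+}$ for the relation ``lying on the same light ray'', it suffices to show that $R$ is an embedded submanifold on which the first projection is a submersion, which yields the smooth structure and the submersion $\sigma$, and that $R$ is closed, which yields the Hausdorff property. I would, however, carry this out concretely by producing an explicit atlas, since the charts are geometrically transparent.

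The charts come from spacelike hypersurfaces. Fix a light ray $\gamma$, represented by a leaf through $[\xi]\in\mathbb{PN}^{+}$ lying over $p=\pi([\xi])$, and choose a spacelike hypersurface $S\subset M$ through $p$. Since every tangent vector to $S$ is spacelike while $\xi$ is null, the geodesic meets $S$ transversally; consequently the set $\Omega_{S}=\{[\eta]\in\mathbb{PN}^{+}:\pi([\eta])\in S\}$ of future null directions based at points of $S$ is a submanifold of $\mathbb{PN}^{+}$ of dimension $(m-1)+(m-2)=2m-3$ which is transverse to $\mathcal{F}$, because the leaf direction projects under $\pi$ onto $\xi\notin T_{p}S$. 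Thus $\Omega_{S}$ is a local transversal of the correct dimension, and assigning to each nearby leaf its intersection point with $\Omega_{S}$ is the candidate chart. The essential point, where strong causality enters, is that this assignment is well defined and injective near $\gamma$: strong causality provides arbitrarily small causally convex neighbourhoods $V$ of $p$, and since $S$ is spacelike, hence locally acausal, while no causal curve can leave $V$ and return, each light ray meets $S\cap V$ at most once. Restricting to the saturation of a small enough $V$, every leaf crosses $\Omega_{S}$ exactly once, so the chart is a bijection onto an open subset of $\Omega_{S}$.

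Finally I would check compatibility and the two topological conclusions. Two transversals $\Omega_{S}$ and $\Omega_{S'}$ are related, along leaves, by following the geodesic flow of $X_{\mathbf{g}}$ from $S$ to $S'$; since this flow is smooth and the transversals are transverse to it, the transition map is a diffeomorphism between open subsets of $(2m-3)$--dimensional manifolds, so the atlas is smooth, its smooth structure is independent of the hypersurfaces chosen and hence canonical, and in these charts $\sigma$ becomes the leafwise projection of a neighbourhood in $\mathbb{PN}^{+}$ onto the transversal $\Omega_{S}$, manifestly a submersion. For the Hausdorff statement one must separate two distinct leaves by saturated open sets; the only obstruction is a pair of light rays that approach one another arbitrarily closely without coinciding, i.e. non--separated leaves, which is precisely the phenomenon excluded by the hypothesis that $M$ has no naked singularities, so this hypothesis yields the closedness of $R$ and hence Hausdorffness. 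I expect the main obstacle to be exactly the well--definedness and injectivity of the charts: translating the metric notion of strong causality into the statement that each leaf meets the local transversal exactly once, equivalently that $R$ is an embedded and, under the extra hypothesis, closed submanifold, is where the real content lies, whereas involutivity, transversality, and smoothness of the transition maps are routine.
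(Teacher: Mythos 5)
Your construction of the smooth structure is sound and is essentially the route the paper itself takes: the paper does not prove this proposition (it quotes it from Low \cite{Lo89}), but the atlas it builds in Section 2.3 is exactly your transversal construction, with $\Omega_{S}$ realized as $\mathbb{PN}^{+}(C)$ for a spacelike local Cauchy surface $C$ of a globally hyperbolic, causally convex normal neighbourhood $V$; injectivity of the chart comes from causal convexity plus local acausality of $C$, and transversality from the dimension count you give. Two minor repairs are needed there: for the chart to cover an \emph{open} set of the quotient you need every leaf meeting the chart's domain to actually cross the transversal, not merely to cross it at most once (the paper arranges this by taking $V$ globally hyperbolic with Cauchy surface $C$, so that $\sigma(\mathbb{PN}^{+}(V))=\sigma(\mathbb{PN}^{+}(C))$; alternatively take the domain to be the $\sigma$-image of the open flow-out of the transversal), and one should say a word about why the chart is a homeomorphism for the quotient topology, which is routine since saturations of open sets are open.

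The genuine gap is the Hausdorff claim. You write that non-separated leaves are ``precisely the phenomenon excluded by the hypothesis that $M$ has no naked singularities,'' but that equivalence \emph{is} the second half of the proposition, not a restatement of the hypothesis: ``not nakedly singular'' is a condition on the causal/geodesic structure of $M$, not a condition phrased in terms of the leaf space, and nothing in your argument derives closedness of the relation $R$ (or separability of distinct leaves) from it. What has to be proved is an implication of the form: if two distinct light rays cannot be separated by saturated open sets, then one can extract a sequence of null directions converging to initial data of both, and a limit-curve/imprisonment argument produces an inextendible null geodesic confined in the causal past of some point, i.e.\ a naked singularity in Low's sense; contrapositively, no naked singularities forces Hausdorffness. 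That argument is the substance of \cite[Prop.~2.2]{Lo89} and is entirely absent from the proposal, which treats the implication as a tautology. The manifold-structure half of your proof stands; the Hausdorff half is asserted rather than proved.
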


Hence, for any strongly causal space--time $M$ without naked singularities,
the space of light rays $\mathcal{N}$ inherits the
structure of a Hausdorff smooth $\left(2m-3\right)$-dimensional differentiable
manifold via the natural identification of $\mathcal{N}$ with $\mathbb{PN}^{+}/%
\mathcal{F}$ and $\sigma:\mathbb{PN}^{+}\rightarrow \mathcal{N}$ is a
submersion. Thus in what follows we will assume that $M$ is a strongly
causal not nakedly singular space--time and we call the space of
light rays $\mathcal{N}$ equipped with
the smooth structure above, the space of  light rays of $M$ (see also for instance
\cite{Po12} for a recent discussion on the topology of the space
of causal curves and separation axioms).

Given a point $p\in M$, the set of light rays passing through $p$ will be called \emph{the sky of} $p$ and it
will be denoted by $S\left( p\right) $, i.e.
\begin{equation*}
S\left( p\right) =\{\gamma \in \mathcal{N}:p\in \gamma \subset M\}.
\end{equation*}%
Notice that the geodesics $\gamma \in S(p)$ are in one--to--one
correspondence with the elements in the fiber $\pi ^{-1}(p)\subset \mathbb{PN%
}^{+}$, hence the sky $S\left( p\right) $ of any point $p\in M$ is diffeomorphic
to the standard sphere $\mathbb{S}^{m-2}$. Now, it is possible to define the
\emph{space of skies} as
\begin{equation*}
\Sigma =\{X\subset \mathcal{N}:X=S\left( p\right) \text{ for some }p\in M\}
\end{equation*}%
and the \emph{sky map} as the application $S:M\rightarrow \Sigma $ that maps every $p$ to $%
S\left( p\right) \in \Sigma $. This sky map $S$ is, by definition of $\Sigma
$, surjective.  If the sky map $S$ is a bijection, its inverse map denoted by
$P=S^{-1}:\Sigma \rightarrow M$ will be called the \emph{parachute map}. An
important part of this paper will be devoted to the study of the natural topological
and differentiable structures on the sky space $\Sigma $ considered as a
collection of subsets of $\mathcal{N}$. In order to understand better the
structures inherited by $\Sigma $ we need to analyze the structure of $T%
\mathcal{N}$ and in particular the canonical contact distribution carried by
it.

\subsection{The tangent bundle and the contact structure on the space of  light rays}

Let us consider $\gamma \in \mathcal{N}$, a tangent vector to $\mathcal{N}$ at $%
\gamma $ is defined by an equivalence class $\Gamma ^{\prime }(0)$ of smooth
curves $\Gamma (s)=\gamma _{s}\in \mathcal{N}$, $s\in (-\epsilon ,\epsilon )$
such that $\Gamma (0)=\gamma $. Choosing a auxiliary metric $\mathbf{g}$
in $\mathcal{C}$, we consider the space $\mathcal{J}(\gamma )$ of
Jacobi fields $J(t)$ along the parametrized geodesics $\gamma (t)$, i.e.,
vector fields along the curve $\gamma (t)$ which are tangent to geodesic
variations $\Gamma (s,t)=\gamma _{s}(t)$ of $\gamma (t)$, $J(t)=\partial
\gamma _{s}(t)/\partial s\mid _{s=0}$, then there is a canonical projection $%
\pi _{\gamma }\colon \mathcal{J}(\gamma )\rightarrow T_{\gamma }\mathcal{N}$
given by $\pi _{\gamma }(J)=\Gamma ^{\prime }(0)$, however such map has a
two--dimensional kernel defined by the Jacobi fields of the form $%
(at+b)\gamma ^{\prime }(t)$. If we denote such Jacobi fields by $\mathcal{J}%
_{\mathrm{tan}}(\gamma )$, then a tangent vector to $\mathcal{N}$ at $\gamma
$ can be identified with an equivalence class $[J]=J+\mathcal{J}_{\mathrm{tan%
}}(\gamma )$, with $J\in \mathcal{J}(\gamma )$. Notice that a vector field $%
J $ along the curve $\gamma (t)$ is a Jacobi field if and only if it
satisfies the Jacobi equation:
\begin{equation}
J^{\prime \prime }+R(J,\gamma ^{\prime })\gamma ^{\prime }=0  \label{jacobi}
\end{equation}%
where \textquotedblleft prime\textquotedblright\ in $J$ means the covariant
derivative with respect the Levi--Civita connection defined by $\mathbf{g}$
along the curve $\gamma (t)$.
Then it follows immediately that any Jacobi vector field
$J(t)$ defined by a geodesic variation $\gamma _{s}(t)$ in $\mathcal{N}$
satisfies
\begin{equation}\label{orthogonal}
\mathbf{g}\left( J(t),\gamma ^{\prime }(t)\right) = \mathrm{constant}.
\end{equation}
In what follows we will identify a Jacobi field $J(t)$ along $\gamma (t)$
with a tangent vector at $\gamma $ understanding by it the equivalence class
$[J]$, i..e, $J(\mathrm{mod}\gamma ^{\prime })$.

There exists a contact structure in $\mathcal{N}$ which arises from the
canonical 1-form $\theta$ on $T^{*}M$ but that can be described explicitly
in terms of Jacobi fields \cite{Lo98}, \cite{Lo06}.
Define for each $\gamma\in \mathcal{N}$ the hyperplane $\mathcal{%
H}_\gamma \subset T_\gamma\mathcal{N}$ given by:
\begin{equation}  \label{contact}
\mathcal{H}_{\gamma}=\lbrace J\in T_{\gamma}\mathcal{N}:\mathbf{g}%
\left(J,\gamma ^{\prime }\right)=0 \rbrace .
\end{equation}

\begin{proposition}
The distribution $\mathcal{H} = \bigcup_{\gamma \in \mathcal{N}} \mathcal{H}%
_{\gamma}$ defines a contact structure on $\mathcal{N}$.
\end{proposition}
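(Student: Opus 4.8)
The plan is to show that the hyperplane field $\mathcal{H} = \bigcup_{\gamma} \mathcal{H}_\gamma$ is a contact distribution, i.e., that it is locally the kernel of a $1$-form $\alpha$ satisfying the maximal non-integrability condition $\alpha \wedge (d\alpha)^{m-2} \neq 0$ (recall $\dim \mathcal{N} = 2m-3$, so $\mathcal{H}_\gamma$ has dimension $2m-4 = 2(m-2)$, the correct even dimension for a contact hyperplane).

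Let me sketch the structure of the argument.

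First I would exhibit $\mathcal{H}$ as the kernel of a globally (or at least locally) defined $1$-form. The natural candidate arises from the canonical $1$-form $\theta$ on $T^*M$, as the text already hints. Concretely, fixing the auxiliary metric $\mathbf{g} \in \mathcal{C}$, the musical isomorphism identifies $\mathbb{N}^+ \subset TM$ with a subset of $T^*M$, and I would pull back $\theta$ along this identification to get a $1$-form $\theta_{\mathbb{N}}$ on $\mathbb{N}^+$. The key geometric fact is that the characteristic directions of $\theta_{\mathbb{N}}$ (namely $\Delta$ and $X_{\mathbf{g}}$, which generate the foliation $\mathcal{K}$) lie in the kernel of $\theta_{\mathbb{N}}$ and in the kernel of $d\theta_{\mathbb{N}}$ restricted appropriately, so that $\theta_{\mathbb{N}}$ descends to a well-defined $1$-form $\alpha$ on the quotient $\mathcal{N} = \mathbb{N}^+/\mathcal{K}$. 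One then checks that $\ker \alpha_\gamma$, read through the Jacobi-field description of $T_\gamma \mathcal{N}$, is exactly the condition $\mathbf{g}(J,\gamma') = 0$ defining $\mathcal{H}_\gamma$; this is a direct computation using \eqref{orthogonal}, since the pairing $\mathbf{g}(J,\gamma')$ is precisely the value of the descended canonical form on the Jacobi-field class $[J]$.

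Next I would verify the non-degeneracy. Because $\alpha$ is a descent of the canonical contact-type form on the space of null geodesics, $d\alpha$ restricted to $\mathcal{H}_\gamma$ is the symplectic form induced by the reduction of the canonical symplectic form $d\theta$ on $T^*M$ along the null geodesic flow. Invariantly, the quotient of $\mathbb{N}^+$ by $\mathcal{K}$ is a contact reduction (null reduction) of $(T^*M, d\theta)$, and it is a standard fact that such a reduction of an exact symplectic manifold by the flow of a homogeneous-degree-one function (here the metric Hamiltonian $\tfrac12 \mathbf{g}(\xi,\xi)$ restricted to its zero level) yields a contact manifold. Thus $d\alpha|_{\mathcal{H}_\gamma}$ is non-degenerate because the symplectic orthogonal of the characteristic direction, modulo that direction, inherits a symplectic (hence non-degenerate) form; equivalently, in Jacobi-field terms, the restriction of $d\alpha$ to $\mathcal{H}_\gamma$ is given by the Wronskian-type pairing $\omega([J_1],[J_2]) = \mathbf{g}(J_1', J_2) - \mathbf{g}(J_1, J_2')$ on orthogonal Jacobi fields modulo the tangential ones, and this pairing is symplectic on the space of such classes.

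The main obstacle I anticipate is the careful bookkeeping in passing to the quotient: one must confirm both that $\theta_{\mathbb{N}}$ is genuinely basic with respect to the two-dimensional foliation $\mathcal{K}$ generated by $\Delta$ and $X_{\mathbf{g}}$ (so that $\alpha$ is well-defined and independent of the representative $\xi \in [\xi]$ and the parametrization of $\gamma$), and that the descended $2$-form $d\alpha$ restricts to a non-degenerate form on $\mathcal{H}_\gamma$ despite the two-dimensional degeneracy already quotiented out. Checking that the Wronskian pairing descends to the equivalence classes $[J] = J + \mathcal{J}_{\mathrm{tan}}(\gamma)$ and is non-degenerate there — i.e. that its radical is exactly $\mathcal{J}_{\mathrm{tan}}(\gamma)$ together with the $\gamma'$-direction already removed — is the heart of the matter and should be done by an explicit Jacobi-field computation, using that $\mathbf{g}(J,\gamma')$ is constant along $\gamma$ and that the tangential Jacobi fields $(at+b)\gamma'$ span the kernel of $\pi_\gamma$.
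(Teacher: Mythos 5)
Your proposal follows essentially the same route as the paper: descend the Liouville form $\theta$ from $T^*M$ to $\mathcal{N}$ along the foliation generated by the Euler field $\Delta$ and the geodesic flow $X_H$ (both of which lie in $\ker\theta$ on the null cone bundle, since $\theta(\Delta)=0$ and $\theta(X_H)=2H=0$ there), and then identify the descended kernel with the condition $\mathbf{g}(J,\gamma')=0$ through the Jacobi-field description of $T_\gamma\mathcal{N}$. The only substantive difference is that you also sketch the verification of maximal non-integrability, via the constancy of the Wronskian pairing $\mathbf{g}(J_1',J_2)-\mathbf{g}(J_1,J_2')$ and the observation that the space of Jacobi fields with $\mathbf{g}(J',\gamma')=\mathbf{g}(J,\gamma')=0$ is precisely the symplectic orthogonal of the isotropic plane $\mathcal{J}_{\mathrm{tan}}(\gamma)$, so the reduced form on $\mathcal{H}_\gamma$ is non-degenerate; this step is correct but is omitted from the printed proof, which stops at the descent argument and defers the contact condition to \cite{Lo98}, \cite{Lo06}, so your outline is if anything more complete than the paper's.
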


The proof of the previous proposition takes advantage of the fact that $%
\mathcal{N}$ has been constructed from $TM$, but it is more
convenient to start from $T^{*}M$ via the diffeomorphism defined by the
metric $\mathbf{g}$. Hence, if $\hat{\mathbf{g}}\colon TM \to T^*M$ denotes
the canonical diffeomorphism defined by the metric $\mathbf{g}$, then $\hat{%
\mathbf{g}}(X_{\mathbf{g}}) = X_H$ is just the Hamiltonian vector field
corresponding to the kinetic energy Hamiltonian $H (x,p)$ on $T^*M$ and $%
\hat{\mathbf{g}}(\Delta)$ is just the Euler field on $T^*M$. But $T^*M$
carries a canonical 1-form $\theta$, its Liouville 1--form. Then we may
restrict $\theta$ to $\mathbb{N}^{+*} := \hat{\mathbf{g}}(\mathbb{N}^{+})$, whose
kernel defines a field $\ker \theta$ of hyperplanes on $\mathbb{N}^{+*}$. The
distribution $\ker \theta$ is invariant with respect to the flow of the Euler vector field
$\Delta$ on $T^*M$ because $L_\Delta \theta = \theta$ and it is also
invariant under the flow of $X_H$ because $L_{X_H}\theta = 0$, so $\ker
\theta$ descends to $\mathbb{PN}^{+*}$ and then to $\mathcal{N}$. This
defines the contact structure \eqref{contact} on $\mathcal{N}$.

Actually, if we denote by $\tilde{\sigma}$ the canonical
projection $\tilde{\sigma} \colon \mathbb{N}^{+*} \to \mathcal{N}$, $\tilde{%
\sigma}(x,p) = \gamma$ where $\gamma$ is the projection on $M$ of the
integral curve of $X_H$ passing at time 0 through $(x,p)$, i.e., $\gamma$ is
the geodesic such that $\gamma(0) = x$ and $\gamma^{\prime }(0) = v$ with $%
\hat{\mathbf{g}}(v) = p$, then a tangent vector $(\dot{x}, \dot{p}) \in
T_{(x,p)}\mathbb{N}^{+*}$ will be in the $\ker\theta$ iff $\langle p,\dot{x}
\rangle = 0$. The tangent vector $(\dot{x}, \dot{p})$ is mapped by $\tilde{%
\sigma}$ into a tangent vector $J$ to $\mathcal{N}$, hence we get eq. %
\eqref{contact}.

Moreover, if $\gamma \in X=S\left( p\right) $ where $X$ is the sky of $p\in
M $ with $\gamma \left( s_{0}\right) =p$, then
\begin{equation}
T_{\gamma }X=\{J\in T_{\gamma }\mathcal{N}:J\left( s_{0}\right) =0\left(
\mathrm{{mod}\gamma ^{\prime }}\right) \}.  \label{tangent_sky}
\end{equation}%
For any $J\in T_{\gamma }X$, since $\mathbf{g}\left( J,\gamma ^{\prime
}\right) $ must be constant and $J\left( s_{0}\right) =0\left( \mathrm{{mod}%
\gamma ^{\prime }}\right) $, then $\mathbf{g}\left( J,\gamma ^{\prime
}\right) =0$ and therefore $T_{\gamma }X\subset \mathcal{H}_{\gamma }$. This
implies that any $T_{\gamma }X$ is a subspace of $\mathcal{H}_{\gamma }$ and
moreover because $\dim X = m -2$, $X$ is a Legendrian manifold of the contact structure on $\mathcal{N%
}$.

\subsection{A smooth atlas for the tangent bundle of the space of  light rays}\label{charts}

We will construct now an atlas for the tangent bundle $T\mathcal{N}$ that is well adapted
to the causal structure of $M$ in the sense that in its definition we will take
advantage that given an event $p$ in a strongly causal space--time $M$ we
can always choose a globally hyperbolic causally convex normal neighborhood $%
V$ of $p$ (see for instance \cite[Thm. 2.1 and Def. 3.22]{Mi08}). Notice that being $V$
causally convex then for any null geodesic $\gamma$ we have that $\gamma\cap
V$ is connected.

First we will consider an atlas for $M$ whose local charts are $\left(
V,\varphi =\left( \mathbf{x}^{1},\ldots \mathbf{x}^{m}\right) \right) $ with
$V$ a globally hyperbolic causally convex normal neighbourhood such that,
without lack of generality, the local hypersurface $C\subset V$ defined by $%
\mathbf{x}^{1}=0$ is a smooth spacelike (local) Cauchy surface, hence each
null geodesic cutting $V$ intersects $C$ at exactly one point. Let $\left\{
E_{1},\ldots ,E_{m}\right\} $ be an orthonormal frame in $V$ such that $%
E_{1} $ is a future oriented timelike vector field in $V$. If $\xi \in T_{p}V$ is written as
$\xi =\sum\limits_{j=1}^{m}\mathbf{u}^{j}E_{j}\left( p\right) $ then $%
(TV,\Phi )$ with:
\begin{equation}
\Phi \colon TV\rightarrow \mathbb{R}^{m};\quad \xi \mapsto \left( \mathbf{x}%
^{1},\ldots ,\mathbf{x}^{m},\mathbf{u}^{1},\ldots ,\mathbf{u}^{m}\right)
\label{equation3.1}
\end{equation}%
is a local coordinate chart in $TM$. Let us denote by $\mathbb{N}^{+}\left(
V\right)$ the
restriction of the bundle $\mathbb{N}^{+}$ to $V$ and by $\mathbb{PN}^{+}\left(
V\right) =\{\left[ \xi \right] \in \mathbb{PN}^{+}:\pi _{M}\left( \left[ \xi %
\right] \right) \in V\}$ the same for $\mathbb{PN}^{+}$. For $\xi \in
\mathbb{N}^{+}\left( V\right) $ we have $\left( \mathbf{u}^{1}\right)
^{2}=\sum\limits_{j=2}^{m}\left( \mathbf{u}^{j}\right) ^{2}$ so, a
coordinate chart in $\mathbb{N}^{+}\left( V\right) $ is given by the map
\begin{equation}
\xi \mapsto \left( \mathbf{x}^{1},\ldots ,\mathbf{x}^{m},\mathbf{u}%
^{2},\ldots ,\mathbf{u}^{m}\right) \in \mathbb{R}^{2m-1}  \label{equation3.2}
\end{equation}%
Taking now homogeneous coordinates $\left[ \mathbf{u}%
^{1},\ldots ,\mathbf{u}^{m}\right]$ for $\left[ \xi \right] \in \mathbb{PN}^{+}%
\left( V\right) $ in \eqref{equation3.2}, or equivalently, fixing $\mathbf{u}%
^{1}=1$ then $\left( \mathbf{u}^{2},\ldots ,\mathbf{u}^{m}\right) $ lies in $%
\mathbb{S}^{m-2}$ and describes a null direction. So, in this way, taking
for example $\mathbf{u}^{2}=\sqrt{1-\left( \mathbf{u}^{3}\right) ^{2}-\cdots
\left( \mathbf{u}^{m}\right) ^{2}}$ we obtain the coordinate chart $[\Phi
]\colon \mathbb{PN}^{+}\left(V\right)\rightarrow \mathbb{R}^{2m-2}$ defined as:
\begin{equation}
\left[ \xi \right] \mapsto \left( \mathbf{x}^{1},\ldots ,\mathbf{x}^{m},%
\mathbf{u}^{3},\ldots ,\mathbf{u}^{m}\right) \in \mathbb{R}^{2m-2}
\label{equation3.3}
\end{equation}%
for $\mathbb{PN}^{+}\left( V\right) $. Let $\mathcal{U}$ be the image of the
projection $\sigma :\mathbb{PN}^{+}\left( V\right) \mapsto \mathcal{N}$. Clearly
$\mathcal{U}\subset \mathcal{N}$ is open. By global hyperbolicity of $V$,
every null geodesic passing through $V$ intersects $C$ at a unique point and
this assures that $\sigma \left( \mathbb{PN}^{+}\left( V\right) \right) =\sigma
\left( \mathbb{PN}^{+}\left( C\right) \right) =\mathcal{U}$. We have assumed
that the Cauchy surface $C$ is a smooth regular submanifold of $V$, this
implies that the bundle $\mathbb{PN}^{+}\left( C\right) $ is a smooth regular
submanifold of $\mathbb{PN}^{+}\left( V\right) $, moreover the map $\left.
\sigma \right\vert _{\mathbb{PN}^{+}\left( C\right) }:\mathbb{PN}^{+}\left( C\right)
\mapsto \mathcal{U}$ is a differentiable bijection. The map $\sigma $ is a
submersion such that, for any $\left[ \xi \right] \in \mathbb{PN}^{+}\left(
V\right) $, the kernel of $d\sigma _{\left[ \xi \right] }$, is the
one--dimensional subspace generated by tangent vectors to curves defining
 light rays, i.e. curves $\lambda \left( s\right) =\left[ \gamma ^{\prime
}\left( s\right) \right] \in \mathbb{PN}^{+}_{\gamma \left( s\right) }$ where $%
\gamma $ is a null geodesic and $\left[ \gamma ^{\prime }\left( s\right) %
\right] =\{\lambda \gamma ^{\prime }\left( s\right) :\lambda \in \mathbb{R}%
\} $.   Because $C$ is a spacelike surface, the kernel of $\left. d\sigma _{\left[
\xi \right] }\right\vert _{\mathbb{PN}^{+}\left( C\right) }$ is trivial, hence $%
\left. d\sigma _{\left[ \xi \right] }\right\vert _{\mathbb{PN}^{+}\left(
C\right) }$ is a surjection between vector spaces of the same dimension,
therefore $\left. \sigma \right\vert _{\mathbb{PN}^{+}\left( C\right) }$ is a
diffeomorphism. We have the following diagram
\begin{equation*}
\begin{tabular}{ccc}
$\mathbb{PN}^{+}\left( V\right) $ & $\overset{\sigma }{\longrightarrow }$ & $%
\mathcal{U}$ \\
$inc\uparrow $ & $\nearrow $ &  \\
$\mathbb{PN}^{+}\left( C\right) $ &  &
\end{tabular}%
\end{equation*}%
So, we can use the restriction of the chart \eqref{equation3.3} to $\mathbb{PN}^{+}%
\left( C\right) $ as a coordinate chart in $\mathcal{U}\subset \mathcal{N}$.
This coordinate chart in $\mathcal{U}$ is given by the map $\psi \colon
\mathcal{U}\rightarrow \mathbb{R}^{2m-3}$:
\begin{equation}
\gamma \mapsto \psi (\gamma )=\left( \mathbf{x}^{2},\ldots ,\mathbf{x}^{m},%
\mathbf{u}^{3},\ldots ,\mathbf{u}^{m}\right) =\left( \mathbf{x},\mathbf{u}%
\right) \in \mathbb{R}^{m-1} \times \mathbb{R}^{m-2} = \mathbb{R}^{2m-3} \label{equation3.4}
\end{equation}%
with $\mathbf{x}=(\mathbf{x}^{2},\ldots ,\mathbf{x}^{m})$ and $\mathbf{u}=(%
\mathbf{u}^{3},\ldots ,\mathbf{u}^{m})$, where $\gamma \left( 0\right) =p\in C\subset
V$ have coordinates $\mathbf{x}$ and $\gamma ^{\prime
}\left( 0\right) = \xi = \sum_{i=1}^m \mathbf{u}^{i}E_{i}\in \mathbb{N}^{+}\left(
C\right) $.

We will define an atlas on $T\mathcal{N}$ by using the open sets $T\mathcal{U%
}$ over the open sets $\mathcal{U}$ defined above. Thus, in order to
complete a chart in $T\mathcal{U}$, we will add
the coordinates for the tangent vectors at every null geodesic $\gamma \in
\mathcal{N}$ with coordinates $\mathbf{x},\mathbf{u}$.
This can be done by using the initial values at $\ t=t_{0} = 0$
for Jacobi's equation \eqref{jacobi} whose solutions are the Jacobi fields along $\gamma $%
. Thus if $J\in T_{\gamma }\mathcal{N}$ then $J\left( t_{0}\right)
=\sum\limits_{j=1}^{m}\mathbf{w}^{j}E_{j}\left( p\right) $ and $J^{\prime
}\left( t_{0}\right) =\sum\limits_{j=1}^{m}\mathbf{v}^{j}E_{j}\left(
p\right) $ define $J$, so a chart in $T\mathcal{U}$ is given by the map $%
\overline{\psi }$:
\begin{equation}
J\mapsto \overline{\psi }(J)=\left( \mathbf{x},\mathbf{u}; \left\langle
\mathbf{v}^{1},\ldots ,\mathbf{v}^{m}\right\rangle ,\left\langle \mathbf{w}%
^{1},\ldots ,\mathbf{w}^{m}\right\rangle \right) = \left( \mathbf{x},\mathbf{u%
} ; \mathbf{v},\mathbf{w}\right) \in \mathbb{R}^{4m-6}  \label{equation3.5}
\end{equation}%
with $\mathbf{v} = \left\langle \mathbf{v}^{1},\ldots ,\mathbf{v}%
^{m}\right\rangle := \left( \mathbf{v}^{1},\ldots ,\mathbf{v}^{m}\right)
\left( \mathrm{mod}\gamma ^{\prime }\right) $ and $\mathbf{w} = \left\langle
\mathbf{w}^{1},\ldots ,\mathbf{w}^{m}\right\rangle :=\left( \mathbf{w}%
^{1},\ldots ,\mathbf{w}^{m}\right) \left( \mathrm{mod}\gamma ^{\prime
}\right) $.

The notation $\left( \mathbf{a}^{1},\ldots ,\mathbf{a}%
^{m}\right) \left( \mathrm{mod}\gamma ^{\prime }\right) $ means $%
\sum\limits_{j=1}^{m}\mathbf{a}^{j}E_{j}\left( p\right) $ $\left( \mathrm{mod%
}\gamma ^{\prime }\left( t_{0}\right) \right) $.
We may define $m-2$
independent coordinates from $( \mathbf{v}^{1},\ldots ,\mathbf{v}^{m} )$ and $%
m-1$ from $(\mathbf{w}^{1},\ldots ,\mathbf{w}^{m})$.  Notice that  because of \eqref{orthogonal}, $J^{\prime }\left(t_{0}\right)$ is orthogonal to $\gamma^{\prime
}\left(t_{0}\right)$, so $\mathbf{v}^{1}=\mathbf{v}^{2}\mathbf{u}%
^{2}+\cdots +\mathbf{v}^{m}\mathbf{u}^{m}$.   Then, we may consider the representatives $\overline{J}, \overline{J}' \in T\mathcal{N}$ of $J\left(t_0\right)$ and $J '\left(t_0\right)$ respectively as 
\begin{equation}
\overline{J}=J\left(t_0\right)-\mathbf{w}^{1}\gamma'\left(t_0\right)=\left(\mathbf{w}^{2}-\mathbf{w}^{1}\mathbf{u}^{2}\right)E_2 + \cdots +\left(\mathbf{w}^{m}-\mathbf{w}^{1}\mathbf{u}^{m}\right)E_m
\end{equation} 
\begin{equation}
\overline{J}'=J'\left(t_0\right)-\mathbf{v}^{1}\gamma'\left(t_0\right)=\left(\mathbf{v}^{2}-\mathbf{v}^{1}\mathbf{u}^{2}\right)E_2 + \cdots +\left(\mathbf{v}^{m}-\mathbf{v}^{1}\mathbf{u}^{m}\right)E_m
\end{equation} 
therefore the coordinates $\mathbf{v}$ and $\mathbf{w}$ can be written as 
\begin{equation}
\left\{
\begin{tabular}{c}
$\mathbf{v}= \left(\overline{\mathbf{v}}^{3},\ldots , \overline{\mathbf{v}}^{m} \right)$ \\
$\mathbf{w}= \left(\overline{\mathbf{w}}^{2},\ldots , \overline{\mathbf{w}}^{m} \right)$
\end{tabular}
\right.
\end{equation} 
where $\overline{\mathbf{v}}^{k}=\mathbf{v}^{k}-\mathbf{v}^{1}\mathbf{u}^{k}$ and $\overline{\mathbf{w}}^{k}=\mathbf{w}^{k}-\mathbf{w}^{1}\mathbf{u}^{k}$ for $k=1,\ldots,m$. 
It is fair that if, for instance, $\mathbf{u}^{2}\neq 0$ then $\overline{\mathbf{v}}^{2}=\frac{-1}{\mathbf{u}^{2}}\sum_{j=3}^{m} \overline{\mathbf{v}}^{j}\mathbf{u}^{j}$ since $\mathbf{v}^{1}=\mathbf{v}^{2}\mathbf{u}^{2}+\cdots +\mathbf{v}^{m}\mathbf{u}^{m}$. 
So, we will denote, with
an slight abuse of notation, by $\left( \mathbf{x},\mathbf{u}; \mathbf{v},%
\mathbf{w}\right) $ the $4m-6$ independent coordinates thus constructed on $T%
\mathcal{U}$.

It is possible to show the compatibility between the canonical atlas defined
on the tangent bundle $T\mathcal{N}$ over the open sets $T\mathcal{U}$ with
canonical coordinates $\left(\mathbf{x}, \mathbf{u}, \mathbf{\dot{x}},
\mathbf{\dot{u}}\right)$ and the atlas defined by the local charts $%
\left( \mathbf{x},\mathbf{u},\mathbf{v},\mathbf{w} \right)$ defined previously.  Actually
in doing so we will show that the local charts  $\left( \mathbf{x},\mathbf{u},\mathbf{v},\mathbf{w} \right)$
define an atlas.  We prove first the following simple lemma.

\begin{lemma}
\label{lemmaDC92} Let $M$ be a Lorentz manifold, $\gamma:\left[0,1\right]%
\rightarrow M$ a null geodesic, $\lambda :\left( -\epsilon ,\epsilon
\right)\rightarrow M$ a curve verifying that $\lambda \left( 0\right)
=\gamma\left( 0\right)$, and $W\left( s\right) $ a null vector field along $%
\lambda $ such that $W\left( 0\right) =\gamma ^{\prime }\left( 0\right) $.
Then the family of curves:
\begin{equation*}
\mathbf{f}\left( s,t\right) =\mathrm{exp}_{\lambda \left( s\right)
}\left(tW\left( s\right) \right)
\end{equation*}%
is a geodesic variation of $\gamma (t)$ through  light rays with $\mathbf{%
f}\left( 0,t\right) =\gamma \left( t\right)$ and if $J\left( t\right) =\frac{%
\partial \mathbf{f}}{\partial s}\left(0,t\right) $, then
\begin{equation*}
\frac{DW}{ds}\left( 0\right) =\frac{DJ}{dt}\left( 0\right)
\end{equation*}
\end{lemma}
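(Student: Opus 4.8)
The plan is to treat the three assertions of the lemma separately, the first two being essentially immediate from the defining properties of the exponential map and the third following from the symmetry of the Levi--Civita connection. First I would observe that, for each fixed $s$, the curve $t\mapsto \mathbf{f}(s,t)=\exp_{\lambda(s)}(tW(s))$ is by definition the geodesic issuing from $\lambda(s)$ with initial velocity $W(s)$. Since $W(s)$ is null for every $s$, each such geodesic is null, and because $W(0)=\gamma'(0)$ is future oriented and the future null cone is an open connected component, $W(s)$ stays future oriented for $s$ near $0$ (shrinking $\epsilon$ if necessary), so each $t\mapsto\mathbf{f}(s,t)$ is a light ray and $\mathbf{f}$ is a genuine geodesic variation through light rays. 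Evaluating at $s=0$ and using $\lambda(0)=\gamma(0)$ and $W(0)=\gamma'(0)$ gives $\mathbf{f}(0,t)=\exp_{\gamma(0)}(t\gamma'(0))=\gamma(t)$, which is the second assertion; in particular the variation field $J(t)=\partial\mathbf{f}/\partial s(0,t)$ is a Jacobi field along $\gamma$.

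For the identity $DW/ds(0)=DJ/dt(0)$ I would invoke the standard symmetry of the covariant derivative of a two--parameter map: since the Levi--Civita connection is torsion free, one has
\begin{equation*}
\frac{D}{ds}\frac{\partial \mathbf{f}}{\partial t}=\frac{D}{dt}\frac{\partial \mathbf{f}}{\partial s}
\end{equation*}
at every point $(s,t)$. The whole proof then reduces to identifying the two sides of this identity at the corner point $(s,t)=(0,0)$ with the two covariant derivatives appearing in the statement.

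To carry this out I would compute the relevant boundary values. On the one hand, $\partial\mathbf{f}/\partial t(s,0)=W(s)$, because it is the initial velocity of the geodesic $t\mapsto \exp_{\lambda(s)}(tW(s))$; viewing this as a vector field along $s\mapsto \mathbf{f}(s,0)=\lambda(s)$ and differentiating covariantly at $s=0$ yields exactly $DW/ds(0)$, i.e.\ the left--hand side at $(0,0)$. On the other hand, $\partial\mathbf{f}/\partial s(0,t)=J(t)$ by definition, so its covariant $t$--derivative at $t=0$ is $DJ/dt(0)$, i.e.\ the right--hand side at $(0,0)$. Equating the two through the symmetry identity gives the claimed equality.

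I do not expect a genuine obstacle, as the argument is a routine application of the symmetry lemma; the only point requiring care is the bookkeeping of which parameter is held fixed when each boundary derivative is taken, so that $\partial\mathbf{f}/\partial t(s,0)$ is read as a field along $\lambda$ (producing $DW/ds$) while $\partial\mathbf{f}/\partial s(0,t)$ is read as a field along $\gamma$ (producing $DJ/dt$). One should also note that the smoothness of $\mathbf{f}$, needed to apply the symmetry of the mixed covariant derivatives, follows from the smooth dependence of $\exp$ on base point and initial vector together with the smoothness of $\lambda$ and $W$.
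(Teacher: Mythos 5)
Your proof is correct and follows essentially the same route as the paper's: both identify $\partial\mathbf{f}/\partial t(s,0)=W(s)$ along $\lambda$ and then apply the symmetry of mixed covariant derivatives $\frac{D}{ds}\frac{\partial\mathbf{f}}{\partial t}=\frac{D}{dt}\frac{\partial\mathbf{f}}{\partial s}$ at $(0,0)$ to conclude $\frac{DW}{ds}(0)=\frac{DJ}{dt}(0)$. Your write-up is, if anything, slightly more careful about the preliminary claims (nullity and orientation of the variation curves, and $\mathbf{f}(0,t)=\gamma(t)$), which the paper treats as immediate.
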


\begin{proof}
On one hand, $\frac{\partial\mathbf{f}}{\partial s}\left(0,0\right)$ is the
tangent vector to the curve $\mathbf{f}\left(s,0\right)$ at $s=0$, and since $%
\mathbf{f}\left(s,0\right)=\mathrm{exp}_{\lambda\left(s\right)}\left(0\cdot
W\left(s\right)\right)= \mathrm{exp}_{\lambda\left(s\right)}\left(0\right)=%
\lambda\left(s\right)$, then we have
\begin{equation*}
J\left(0\right)=\frac{\partial\mathbf{f}}{\partial s}\left(0,0\right)=\frac{d\lambda}{ds}%
\left(0\right)=\lambda ^{\prime }\left(0\right)
\end{equation*}
On the other hand, $\frac{D}{ds}\frac{\partial\mathbf{f}}{\partial t}%
\left(0,0\right)$ is the covariant derivative of the vector field $\frac{%
\partial\mathbf{f}}{\partial t}\left(s,0\right)=W\left(s\right)$ for $s=0$
along the curve $\mathbf{f}\left(s,0\right)=\lambda\left(s\right)$. Then we
can write
\begin{equation*}
\frac{DJ}{dt}\left(0\right)=\frac{D}{dt}\frac{\partial\mathbf{f}}{\partial s}\left(0,0\right)=\frac{D}{ds%
}\frac{\partial\mathbf{f}}{\partial t}\left(0,0\right)= \frac{DW}{ds}%
\left(0\right)
\end{equation*}
therefore $J$ is the Jacobi field of the geodesic variation $\mathbf{f}$.
\end{proof}

Let us consider the coordinate chart $\left(\psi, \mathcal{U}\right)$ in $%
\mathcal{N}$ given by \eqref{equation3.4} where $\gamma\left(0\right)\in C$ for each $\gamma \in \mathcal{U}$.
Now let $\Gamma_1 (s) \in \mathcal{U} \subset \mathcal{N}$, $s \in
(-\epsilon, \epsilon)$, be a curve such that its coordinates are
\begin{equation*}
\psi\left(\Gamma_1\left(s\right)\right) = \left(x_{0}^{2},\dots
,x_{0}^{m},\alpha^{3}\left(s\right),\dots,\alpha^{m}\left(s\right)\right)
\end{equation*}
This curve corresponds to a geodesic variation $\mathbf{f}\left(s,t\right)$
such that
\begin{equation*}
\lambda\left(s\right)=\mathbf{f}\left(s,0\right)=p\in M
\end{equation*}
for every $s$ because the coordinates $\mathbf{x}^k=x_{0}^{k}$ remain
constant. Moreover $\beta\left(s\right)=\frac{\partial\mathbf{f}%
\left(s,t\right)}{\partial t}\in T_{p}M$ is the curve given by
\begin{equation*}
\beta\left(s\right)=E_1\left(p\right)+\alpha^{2}\left(s\right)E_{2}\left(p%
\right)+ \alpha^{3}\left(s\right)E_{3}\left(p\right)+\ldots +
\alpha^{m}\left(s\right)E_{m}\left(p\right).
\end{equation*}
Hence $\mathbf{f}$ can be written by the expression similar to the one in
Lemma \ref{lemmaDC92}
\begin{equation*}
\mathbf{f}\left( s,t\right) =\mathrm{exp}_{p}\left( t\beta \left(
s\right)\right) \, .
\end{equation*}%
Calling $J$ the Jacobi field of $\mathbf{f}$, then by Lemma \ref{lemmaDC92}
we have that
\begin{equation}
\left\{
\begin{matrix}
J\left( 0\right) =0 \\
J^{\prime }\left( 0\right) =\beta ^{\prime }\left( 0\right)%
\end{matrix}%
\right.  \label{res1}
\end{equation}

Now, if we consider a curve $\Gamma _{2}\subset \mathcal{N}$ such that its
coordinates are
\begin{equation*}
\psi \left( \Gamma _{2}\left( s\right) \right) =\left( x^{2}\left(
s\right) ,\dots ,x^{m}\left( s\right) ,u_{0}^{3},\dots ,u_{0}^{m}\right)
\end{equation*}
This curve corresponds to a geodesic variation $\mathbf{f}\left( s,t\right) $
verifying
\begin{equation*}
\lambda \left( s\right) =\mathbf{f}\left( s,0\right) \in C\subset M
\end{equation*}
The fact of the coordinates $\mathbf{u}^{k}=u_{0}^{k}$ remain constant
implies that
\begin{equation}
W\left( s\right) =\frac{\partial \mathbf{f}}{\partial t}\left( s,0\right)=
E_{1}\left( \lambda \left( s\right) \right) +u_{0}^{2}E_{2}\left(
\lambda\left( s\right) \right) +\ldots +u_{0}^{m}E_{m}\left( \lambda
\left(s\right) \right) \in T_{\lambda \left( s\right) }M  \label{campoW}
\end{equation}
So the geodesic variation $\mathbf{f}$ corresponding to $\Gamma_2$ can be written by
\begin{equation*}
\mathbf{f}\left( s,t\right) =\mathrm{exp}_{\lambda \left( s\right)
}\left(tW\left( s\right) \right)
\end{equation*}%
Again, if $J$ is the Jacobi field of $\mathbf{f}$, then by Lemma \ref%
{lemmaDC92}
\begin{equation}
\left\{
\begin{matrix}
J\left( 0\right) =\lambda ^{\prime }\left( 0\right) \\
J^{\prime }\left( 0\right) =\frac{DW}{ds}\left( 0\right)  \, .%
\end{matrix}%
\right.  \label{res2}
\end{equation}

If we choose the curves $\Gamma_1$ and $\Gamma_2$ such that $%
\Gamma_{1}^{\prime}\left(0\right)=\left(\frac{\partial}{\partial u^i}%
\right)_{\Gamma_{1}\left(0\right)}$ y $\Gamma_{2}^{\prime}\left(0\right)=%
\left(\frac{\partial}{\partial x^j}\right)_{\Gamma_{2}\left(0\right)}$
respectively with $i=3,\ldots,m$ and $j=2,\ldots,m$, then we have that the change from canonical coordinates $\left(%
\mathbf{x},\mathbf{u},\mathbf{\dot{x}},\mathbf{\dot{u}}\right)$ to the
coordinates $\left(\mathbf{x},\mathbf{u},\mathbf{v}, \mathbf{w}\right)$
verifies:

\begin{equation}  \label{matrizcambio}
\left(%
\begin{matrix}
\mathbf{v} \\
\mathbf{w}%
\end{matrix}%
\right)=
\left(%
\begin{matrix}
\overline{\mathbf{v}}^{i} \\
\overline{\mathbf{w}}^{j}%
\end{matrix}%
\right)= \left(%
\begin{matrix}
B & I_{m-2} \\
A & 0%
\end{matrix}%
\right) \left(%
\begin{matrix}
\mathbf{\dot{x}} \\
\mathbf{\dot{u}}%
\end{matrix}%
\right) \text{ with } i=3,\ldots,m \text{ and } j=2,\ldots,m
\end{equation}
and where $I_{m-2}\in\mathbb{R%
}^{\left(m-2\right)\times \left(m-2\right)}$ is the identity matrix and $B\in \mathbb{R%
}^{\left(m-2\right)\times \left(m-1\right)}$ is the matrix whose $%
\left(k-1\right)$-th column is the vector containing the $\mathbf{v}$%
--coordinates of $\frac{D W_k}{ds}\left(0\right)$ with $k=2,\dots,m$ being
\begin{equation}  \label{campoWsub}
W_k\left( s\right) =E_{1}\left( \lambda_k \left( s\right) \right) +
u_{0}^{2}E_{2}\left( \lambda_k \left( s\right) \right)+\ldots
+u_{0}^{m}E_{m}\left( \lambda_k \left(s\right) \right) \in T_{\lambda_k
\left( s\right) }M
\end{equation}
with $\lambda_k\left(s\right)$ a curve such that $\mathbf{x}%
^j\left(\lambda_k\left(s\right)\right)=x_{0}^{j}$ are constant for $j\neq k$
and $\mathbf{x}^k\left(\lambda_k\left(s\right)\right)=x_{0}^{k}+s$.

Since $J\left(0\right)=\lambda_{k}'\left(0\right)=\left(\frac{\partial}{\partial x_k}\right)_{\Gamma_2\left(0\right)}=\sum_{j=1}^{m}\mathbf{w}^{j}_{k}E_j$ then we have that $\overline{\mathbf{w}}^j = \mathbf{w}^{j}_{k}-\mathbf{w}^{1}_{k}\mathbf{u}^j$ for $j=2,\ldots,m$. 
This implies that the matrix $A$ is given by
\begin{equation}
A=\left(\mathbf{w}^{j}_{k}-\mathbf{w}^{1}_{k}\mathbf{u}^j\right) \text{ for } j,k=2,\ldots,m
\end{equation}
Calling $\mathbb{V}=\text{span}\left\{E_j \left(\lambda_{k}\left(0\right)\right)\right\}_{j=2,\ldots,m}$, observe the projection $\pi_{\mathbf{u}}:T_{\lambda_{k}\left(0\right)}M\rightarrow \mathbb{V}$ given by 
\[
\pi_{\mathbf{u}}\left(\eta\right)=\eta -\mathbf{g}\left(\eta,E_1\right)\gamma'\left(0\right)
\]
where we have taken $\gamma'\left(0\right)=E_1+\mathbf{u}^{2}E_2+\cdots +\mathbf{u}^{m}E_m$.
The matrix $\widetilde{A}$ of $\pi_{\mathbf{u}}$ relative to the basis $\left\{\left(\frac{\partial}{\partial x_k}\right)_{\Gamma_2\left(0\right)}\right\}_{k=1,\ldots,m}$ in $T_{\lambda_{k}\left(0\right)}M$ and $\left\{E_j \left(\lambda_{k}\left(0\right)\right)\right\}_{j=2,\ldots,m}$ in $\mathbb{V}$ is 
\[
\widetilde{A} = \left( \mathbf{w}^{j}_{k}-\mathbf{w}^{1}_{k}\mathbf{u}^j\right) \text{ for } j=2,\ldots,m \text{ and } k=1,\ldots,m
\]
We have that $\mathbb{V}$ and $\mathbb{V}_2=\text{span}\left\{\left(\frac{\partial}{\partial x_k}\right)_{\Gamma_2\left(0\right)}\right\}_{k=2,\ldots,m}$ are spacelike by construction, $\text{ker}\pi_{\mathbf{u}}=\text{span}\left\{\gamma'\left(0\right)\right\}$ and the matrix of the restriction $\left. \pi_{\mathbf{u}}\right|_{\mathbb{V}_2}$ is $A$, then $\left. \pi_{\mathbf{u}}\right|_{\mathbb{V}_2}$ is an isomorphism and therefore $A$ is regular. Hence, the matrix in \eqref{matrizcambio} describing the change of coordinates along the fibers
of the tangent bundle $T\mathcal{N}$  is regular and differentiable, then the
change of coordinates
\begin{equation*}
\left( \mathbf{x},\mathbf{u},\mathbf{\dot{x}},\mathbf{\dot{u}}\right)
\longleftrightarrow \left( \mathbf{x},\mathbf{u},\mathbf{v},\mathbf{w}\right)
\end{equation*}%
is also differentiable. This also shows that $\left( \mathbf{x},\mathbf{u},%
\mathbf{v},\mathbf{w}\right) $ is a coordinate chart of the canonical
differentiable structure of $T\mathcal{N}$.

\section{The space of skies: its topology and differentiable structure}\label{differentiable_structure}

Henceforth all the strongly causal manifolds $\left( M,\mathcal{C}\right) $ that we will consider verify, in addition, the property that \emph{skies distinguish points}, i.e., if $x\neq y$ are two different events, then
$S(x)\neq S(y)$ or, in other words, the sky map $S\colon M\rightarrow \Sigma $
is injective, hence a bijection.  Notice that this property is weaker than the non--refocusing property introduced by
Low in \cite{Lo06}.

We will start by defining a natural topology on the space of skies $\Sigma $ induced by the topology of $\mathcal{N}$.

\begin{lemma}
\label{lemma3.1} The collection of sets $\mathfrak{T}=\lbrace U \subset
\Sigma : \mathcal{U}=\bigcup\limits_{X\in U}X$ is open in $\mathcal{N}
\rbrace$ is a topology in $\Sigma $.
\end{lemma}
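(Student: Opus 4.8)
The plan is to verify directly the three axioms of a topology for $\mathfrak{T}$: that $\emptyset$ and $\Sigma$ belong to it, that it is stable under arbitrary unions, and that it is stable under finite intersections. Throughout, for $U\subset\Sigma$ I write $\mathcal{U}=\bigcup_{X\in U}X\subseteq\mathcal{N}$, and I use the sky--distinguishing hypothesis: since $S$ is then a bijection, every $U\subset\Sigma$ corresponds to $A:=S^{-1}(U)\subset M$, and a light ray lies in $\mathcal{U}$ exactly when it meets $A$, i.e. $\mathcal{U}=\{\gamma\in\mathcal{N}:\gamma\cap A\neq\emptyset\}=\sigma\left(\pi^{-1}(A)\right)$, where $\pi\colon\mathbb{PN}^{+}\to M$ and $\sigma\colon\mathbb{PN}^{+}\to\mathcal{N}$ are the projections of Section 2. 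I will also record the elementary fact that, $\sigma$ being a submersion and hence an open map, if $A\subseteq M$ is open then $\mathcal{U}=\sigma(\pi^{-1}(A))$ is open in $\mathcal{N}$.

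The two easy axioms are dispatched first. For $U=\emptyset$ one has $\mathcal{U}=\emptyset$, which is open, so $\emptyset\in\mathfrak{T}$; for $U=\Sigma$ one has $\mathcal{U}=\bigcup_{p\in M}S(p)=\mathcal{N}$, since every light ray passes through some event, so $\Sigma\in\mathfrak{T}$. For an arbitrary family $\{U_{\alpha}\}_{\alpha\in I}\subset\mathfrak{T}$ with union $U=\bigcup_{\alpha}U_{\alpha}$, the point is simply that unions commute with the passage to $\mathcal{U}$:
\begin{equation*}
\mathcal{U}=\bigcup_{X\in U}X=\bigcup_{\alpha\in I}\,\bigcup_{X\in U_{\alpha}}X=\bigcup_{\alpha\in I}\mathcal{U}_{\alpha},
\end{equation*}
a union of open subsets of $\mathcal{N}$, hence open; thus $U\in\mathfrak{T}$.

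The substantive step, and the one I expect to be the main obstacle, is closure under finite intersection: given $U_{1},U_{2}\in\mathfrak{T}$ I must show $\mathcal{V}:=\bigcup_{X\in U_{1}\cap U_{2}}X$ is open. Here the naive set identity fails: one has only the inclusion $\mathcal{V}\subseteq\mathcal{U}_{1}\cap\mathcal{U}_{2}$, and it is generally strict, so openness cannot be read off from $\mathcal{U}_{1},\mathcal{U}_{2}$ alone and must be argued locally. The plan is as follows. Fix $\gamma_{0}\in\mathcal{V}$; then there is an event $p_{0}\in\gamma_{0}$ with $S(p_{0})\in U_{1}\cap U_{2}$, and the whole sky $S(p_{0})$ lies in the open set $\mathcal{U}_{1}\cap\mathcal{U}_{2}$. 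Using strong causality, I would work in a globally hyperbolic causally convex normal neighbourhood $V\ni p_{0}$ with local Cauchy surface $C\ni p_{0}$, as in Section \ref{charts}, and exploit that the fibre $\pi^{-1}(p_{0})$ is compact and maps diffeomorphically onto $S(p_{0})$, to produce a neighbourhood $W$ of $p_{0}$ in $M$ whose associated tube $\sigma(\pi^{-1}(W))$ is an open neighbourhood of $\gamma_{0}$ in $\mathcal{N}$.

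The crux—and the delicate point where the real content of the lemma resides—is to guarantee that each nearby light ray in this tube again belongs to $\mathcal{V}$, i.e. that it passes through some event $q$ with $S(q)\in U_{1}\cap U_{2}$. The difficulty is that openness of $\mathcal{U}_{1}$ and $\mathcal{U}_{2}$ only furnishes, for a ray $\gamma$ near $\gamma_{0}$, \emph{some} event of $\gamma$ in $A_{1}$ and \emph{some} event of $\gamma$ in $A_{2}$, and these two events need not coincide; membership $S(q)\in U_{1}\cap U_{2}$ requires a single common event $q\in A_{1}\cap A_{2}$ on $\gamma$. Bridging this gap between ``the ray meets each $A_{i}$'' and ``the ray meets $A_{1}\cap A_{2}$'' is exactly the obstacle to overcome; I would attempt to close it by choosing $W$ small enough that the control provided by the convex normal neighbourhood forces the intersection events for nearby rays to lie in $W$ itself, so that the tube is genuinely contained in $\mathcal{V}$.
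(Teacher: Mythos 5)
Your handling of $\emptyset$, $\Sigma$ and arbitrary unions is correct, and your diagnosis of the finite--intersection axiom is sharper than you may realize: the paper's entire proof of that step is the one--line identity $\bigcap_{k}\bigcup_{X\in U_{k}}X=\bigcup_{X\in\bigcap_{k}U_{k}}X$, which is exactly the ``naive set identity'' you reject. You are right to reject it: only the inclusion $\supseteq$ holds in general, since a ray can meet $A_{1}=S^{-1}(U_{1})$ and $A_{2}=S^{-1}(U_{2})$ at different events (two disjoint open balls centred at null--related points of $\mathbb{M}^{3}$ already give a ray in $\mathcal{U}_{1}\cap\mathcal{U}_{2}$ while $U_{1}\cap U_{2}=\emptyset$). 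Nevertheless your proposal is not a proof: the finite--intersection step is left as an explicitly acknowledged ``obstacle to overcome''. Observe that your tube plan succeeds precisely when $A_{1}\cap A_{2}$ contains a neighbourhood $W$ of $p_{0}$ in $M$, for then every ray of $\sigma\left(\pi^{-1}(W)\right)$ meets $A_{1}\cap A_{2}$ and lies in $\mathcal{V}$; so what you actually need is that $U\in\mathfrak{T}$ forces $A=S^{-1}(U)$ to be open in $M$, and you give no argument for that.

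No such argument exists, because the statement as written fails, so the gap is not closable. In $M=\mathbb{M}^{3}$ with origin $o$, let $B_{+}=\lbrace 1<t<2,\ \bigl|\sqrt{x^{2}+y^{2}}-|t|\bigr|<1/2\rbrace$ and $B_{-}=\lbrace -2<t<-1,\ \bigl|\sqrt{x^{2}+y^{2}}-|t|\bigr|<1/2\rbrace$, open thickenings of pieces of the future and past light cone of $o$. Every light ray through $o$ (as an unparametrized point set) meets both $B_{+}$ and $B_{-}$, while $B_{+}\cap B_{-}=\emptyset$ and $o\notin B_{\pm}$. Setting $U_{\pm}=S\left(B_{\pm}\cup\{o\}\right)$ one gets $\mathcal{U}_{\pm}=\bigcup_{x\in B_{\pm}}S(x)\cup S(o)=\bigcup_{x\in B_{\pm}}S(x)=\sigma\left(\pi^{-1}(B_{\pm})\right)$, which is open because $\sigma$ is an open submersion; hence $U_{\pm}\in\mathfrak{T}$. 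But $U_{+}\cap U_{-}=\{S(o)\}$ by injectivity of $S$, and $\bigcup_{X\in U_{+}\cap U_{-}}X=S(o)\cong\mathbb{S}^{1}$ is not open in the three--dimensional manifold $\mathcal{N}$. So $\mathfrak{T}$ is not closed under finite intersections, and both the paper's one--line argument and any completion of your local argument must break down. What is true, and what the rest of Section 3 actually uses, is that the family $\{S(V):V\subset M\ \mathrm{open}\}$ is a topology contained in $\mathfrak{T}$ (immediate from bijectivity of $S$ and openness of $\sigma$); the lemma should be read as defining the topology generated by $\mathfrak{T}$, or equivalently as declaring $U\subset\Sigma$ open iff $S^{-1}(U)$ is open in $M$. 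Your tube construction is essentially the proof that such $S(V)$ do lie in $\mathfrak{T}$, which is the half of the statement that survives.
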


\begin{proof}
Obviously we have that $\emptyset $ and $\Sigma $ are in $%
\mathfrak{T}$.   If $U_{\alpha }\in \mathfrak{T}$ for every $\alpha
\in I$ then $\bigcup\limits_{\alpha \in I}U_{\alpha }$ is in $%
\mathfrak{T}$. Finally, if $U_{k}\in \mathfrak{T} $ for every $k=1,\ldots ,N$ then
$$ \mathcal{V} = \bigcap\limits_{k=1}^{N} \mathcal{U}_k =  \bigcap\limits_{k=1}^{N}\bigcup\limits_{X\in
U_{k}}X=\bigcup\limits_{X\in \bigcap\limits_{k=1}^{N}U_{k}}X$$
is open in $%
\mathcal{N}$, therefore $V= \bigcap\limits_{k=1}^{N}U_{k}$ is in $\mathfrak{T}
$.
\end{proof}

\begin{definition}
The topology $\mathfrak{T}$ in $\Sigma$ defined in Lemma \ref{lemma3.1} will be
called the \emph{reconstructive topology} of $\Sigma $.
\end{definition}

\begin{lemma}
\label{lemma3.3} Given the reconstructive topology in $\Sigma $, then the
sky map $S:M\rightarrow \Sigma $ is an homeomorphism.
\end{lemma}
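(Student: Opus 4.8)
The plan is to use the bijectivity of $S$ (injectivity is the sky--distinguishing hypothesis, surjectivity is built into the definition of $\Sigma$) and to establish the two remaining properties separately: that $S$ is an open map and that $S$ is continuous. The single computation underlying both directions is that for any $W\subseteq M$ the union of the skies over $W$ is exactly the set of light rays meeting $W$, i.e.
\[
\bigcup_{p\in W}S(p)=\{\gamma\in\mathcal N:\gamma\cap W\neq\emptyset\}=\sigma\big(\pi^{-1}(W)\big),
\]
so that the reconstructive topology on $\Sigma$ is governed entirely by the interplay of the two projections $\pi\colon\mathbb{PN}^{+}\to M$ and $\sigma\colon\mathbb{PN}^{+}\to\mathcal N$.

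First I would prove the straightforward half, that $S$ is open. Let $W\subseteq M$ be open. Then $\pi^{-1}(W)$ is open in $\mathbb{PN}^{+}$ because $\pi$ is continuous, and since $\sigma$ is a smooth submersion (hence an open map) by the Proposition of Section~2, the image $\sigma(\pi^{-1}(W))$ is open in $\mathcal N$. By the displayed identity this image equals $\bigcup_{X\in S(W)}X$, so $S(W)\in\mathfrak T$ by the very definition of the reconstructive topology in Lemma~\ref{lemma3.1}. Thus $S$ carries opens to opens, equivalently the parachute map $P=S^{-1}$ is continuous.

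Next I would attack continuity of $S$, which is the delicate half. Let $U\in\mathfrak T$, so that $\mathcal U=\bigcup_{X\in U}X$ is open in $\mathcal N$, and fix $p_{0}\in S^{-1}(U)$. Since $S(p_{0})\in U$ we have $S(p_{0})\subseteq\mathcal U$, and lifting to $\mathbb{PN}^{+}$ this says that the whole fibre $\pi^{-1}(p_{0})$, a copy of the compact sphere $\mathbb{S}^{m-2}$, is contained in the open set $\sigma^{-1}(\mathcal U)$. The tube lemma, applied to the bundle $\pi$ with its compact fibre, then furnishes an open neighbourhood $N\ni p_{0}$ in $M$ with $\pi^{-1}(N)\subseteq\sigma^{-1}(\mathcal U)$; equivalently, every light ray through every point of $N$ lies in $\mathcal U$, that is $S(q)\subseteq\mathcal U$ for all $q\in N$.

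The hard part will be precisely the final step: upgrading $S(q)\subseteq\mathcal U$ to $S(q)\in U$, which is what is needed to conclude $N\subseteq S^{-1}(U)$ and hence that $S^{-1}(U)$ is open. A priori the tube argument only shows that the \emph{saturation} $\{q\in M:S(q)\subseteq\mathcal U\}$ is open, and this set may be strictly larger than $S^{-1}(U)$ unless one knows that the collection $U$ already contains every sky whose union it covers; a point $q$ all of whose rays happen to meet the family defining $\mathcal U$ need not lie in the indexing set of $U$. To close this gap I would work in a globally hyperbolic causally convex normal neighbourhood $V\ni p_{0}$ with spacelike Cauchy surface $C$ and exploit the local diffeomorphism $\sigma|_{\mathbb{PN}^{+}(C)}\colon\mathbb{PN}^{+}(C)\to\sigma(\mathbb{PN}^{+}(V))$ constructed in Section~\ref{charts}, which trivialises $\mathcal N$ near the skies in question and lets one read off, via the incidence of the light cone of $q$ with $C$, how the sphere $S(q)$ sits inside $\mathcal U$ as $q$ varies. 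I expect this incidence statement, where strong causality and the sky--distinguishing hypothesis genuinely enter, rather than the soft point--set topology, to be the real content of the lemma.
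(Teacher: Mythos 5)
Your treatment of the open--map half is correct and is essentially the paper's argument: the union of the skies over an open $W\subset M$ is $\sigma\bigl(\pi^{-1}(W)\bigr)$, hence open in $\mathcal{N}$ because $\sigma$ is an open submersion, so $S(W)\in\mathfrak{T}$ by Lemma \ref{lemma3.1}. (Your identity $\bigcup_{p\in W}S(p)=\sigma\bigl(\pi^{-1}(W)\bigr)$ is in fact a cleaner formulation of the step the paper phrases in terms of $\sigma^{-1}(\mathcal{U})$ and $\pi_{M}^{-1}(V)$.)

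The continuity half, however, is not a proof: you stop exactly at the point you yourself label the hard part. The tube--lemma step is fine and shows that $V'=\{q\in M:\ S(q)\subseteq\mathcal{U}\}$ is open, but the lemma requires openness of $V=S^{-1}(U)=\{q:\ S(q)\in U\}$, and you only announce a plan (normal neighbourhoods, the local diffeomorphism $\sigma|_{\mathbb{PN}^{+}(C)}$, an unspecified ``incidence statement'') for bridging the inclusion $V\subseteq V'$; nothing in the proposal actually establishes $V=V'$ or otherwise produces an open neighbourhood of $p_{0}$ inside $V$. As written you have proved only that $P=S^{-1}$ is continuous, i.e.\ half of the statement. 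For comparison, the paper closes this half by a different route: it claims the set identity $V=\pi_{M}\bigl(\sigma^{-1}(\mathcal{U})\bigr)$ and deduces openness of $V$ from openness of $\pi_{M}$, the crucial assertions being that a single null direction $v_{y}\in\sigma^{-1}(\mathcal{U})$ over $y$ forces the whole fibre $\mathbb{PN}^{+}_{y}$ into $\sigma^{-1}(\mathcal{U})$ ``because $\mathcal{U}$ is a union of skies'', and that $S(y)\subseteq\mathcal{U}$ then yields $S(y)\in U$. Your saturation worry is aimed at precisely these two assertions, and it is a legitimate worry --- this is exactly where hypotheses of sky--distinguishing/non--refocusing type do real work --- but identifying the crux is not the same as resolving it. To complete the proposal you must either prove the implication $S(q)\subseteq\mathcal{U}\Rightarrow S(q)\in U$ under the standing hypotheses, or carry out in full the fibre--saturation argument that the paper invokes.
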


\begin{proof}
First, we will show that $S$ is an open map. Let $V\subset M$ be an open set
in $M$ and let $S\left( V\right) =\left\{ S\left( x\right) \in \Sigma :x\in
V\right\} $ be its image through $S$. We have to prove that $\mathcal{U}%
=\bigcup\limits_{x\in V}S\left( x\right) $ is open in $\mathcal{N}$, but
this is equivalent to prove that $\sigma ^{-1}\left( \mathcal{U}\right) $ is
open in $\mathbb{PN}^{+}$ where $\sigma $ is the quotient map $\sigma :\mathbb{PN%
}^{+}\rightarrow \mathcal{N}$ and the set $\mathcal{U\subset N}$
is open if and only if $\sigma ^{-1}\left( \mathcal{U}\right) \subset
\mathbb{PN}^{+}$ is open, but we observe that $\sigma ^{-1}\left( \mathcal{U}\right)
=\pi _{M}^{-1}\left( V\right) $ where $\pi _{M}:\mathbb{PN}^{+}\rightarrow M$
is the canonical projection.

Next, we will show that $S$ is continuous. Consider $U\subset \Sigma $ an
open set, then by bijectivity of $S$, we can write $U=\left\{
S\left( x\right) \in \Sigma :x\in V\right\} $, hence $\mathcal{U}%
=\bigcup\limits_{x\in V}S\left( x\right) $ is open in $\mathcal{N}$. Since
$\pi _{M}$ is open map, proving that that $V=\pi _{M}\left( \sigma ^{-1}\left( \mathcal{U}\right) \right) $
is sufficient to proof that $V$ is open.
Indeed, if $y\in V$ then $S\left( y\right) \in U$ and $S\left(
y\right) \subset \mathcal{U}$, so $\sigma ^{-1}\left( S\left( y\right)
\right) \subset \sigma ^{-1}\left( \mathcal{U}\right) $. Since $\sigma
^{-1}\left( S\left( y\right) \right) $ coincides with the fibre $\mathbb{PN}^{+}%
_{y}$ then $y=\pi _{M}\left( \mathbb{PN}^{+}_{y}\right) =\pi _{M}\left( \sigma
^{-1}\left( S\left( y\right) \right) \right) \in \pi _{M}\left( \sigma
^{-1}\left( \mathcal{U}\right) \right) $. On the other hand, if $y\in \pi
_{M}\left( \sigma ^{-1}\left( \mathcal{U}\right) \right) $ then there exists
$v_{y}\in \mathbb{PN}^{+}_{y}$ such that $v_{y}\in \sigma ^{-1}\left( \mathcal{U}%
\right) $. By the definition of $\mathcal{U}$ as a union of skies, then the
whole fibre $\mathbb{PN}^{+}_{y}$ must be contained in $\sigma ^{-1}\left(
\mathcal{U}\right) $, and hence $S\left( y\right) \subset \mathcal{U}$. This
implies that $S\left( y\right) \in U\subset \Sigma $ and therefore $y\in V$.
This concludes the proof.
\end{proof}

A classical theorem due to Whitehead guarantees the existence of convex
normal neighbourhoods $V$ at any point $x\in M$, (see \cite[Ch. 5]{On83} for
a treatment of this result in Lorentz manifolds). If $V\subset M$ is an open
convex normal neighbourhood and $x,y\in V$, then there exists a unique
geodesic segment joining $x$ and $y$.  Let us consider the open set $U=S\left( V\right) = \{ S(x) \mid x\in V \}$,
then for every $S(x) = X\neq Y = S(y)\in U$ and $\gamma \in X\cap Y$ verifying $T_{\gamma
}X\cap T_{\gamma }Y\neq \left\{ 0\right\} $ there exist a Jacobi field $J$
such that $J\left( s_{0}\right) =J\left( s_{1}\right) =0$ where $x=\gamma
\left( s_{0}\right) $ and $y=\gamma \left( s_{1}\right) $, but that is
not possible in a convex normal neighbourhood $V$ (see \cite[Prop. 10.10]%
{On83}).
So, in this case we have that $X=Y$ and the next definition is justified.

\begin{definition}\label{normal_neigh}
\label{normal} An open set $U\subset \Sigma$ in the reconstructive topology is called \emph{normal} if for
every $X,Y\in U $ and every $\gamma \in X\cap Y$ such that $T_{\gamma }X\cap
T_{\gamma }Y\neq \left\{ 0\right\} $ implies that $X=Y$.
\end{definition}

All the convex normal neighbourhoods at $x\in M$ set up a basis for the
topology of $M$ at $x$, then by lemma \ref{lemma3.3}, all the normal
neighbourhoods also constitute a basis for the topology of $\Sigma $.

Normal neighborhoods are not good enough to construct a differentiable
structure on $\Sigma$.  The following definition states the condition 
that will be required on open sets of $\Sigma$ to define a smooth atlas. If $N$ is manifold, we denote by $\widehat{T}N$
its reduced tangent bundle, this is, $\widehat{T}N=\cup _{x\in N}\widehat{T}%
_{x}N $ where $\widehat{T}_{x}N=T_{x}N-\left\{ 0\right\} $.

\begin{definition}\label{regular_neigh}
\label{definition3.6} A normal open set $U\subset \Sigma $ is said to be a
\emph{regular} open set if $U$ verifies that $\widehat{U}=\bigcup\limits_{X%
\in U}\widehat{T}X \subset T\mathcal{N}$ is a regular submanifold of $\widehat{T}\mathcal{U}$,
where $\mathcal{U=}\bigcup\limits_{X\in U}X$.
\end{definition}

We will prove that regular open sets constitute a basis for the
reconstructive topology of $\Sigma $.

\begin{theorem}
\label{theorem1} For every $X\in \Sigma $ there exists a regular open
neighbourhood $U\subset \Sigma $ of $X$.
\end{theorem}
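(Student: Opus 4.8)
The plan is to reduce the statement to showing that an explicit parametrization of $\widehat{U}$ by Jacobi data is a regular embedding. Fix $X=S(p)$ and, exactly as in Subsection \ref{charts}, choose a globally hyperbolic, causally convex normal neighbourhood $V$ of $p$ that is moreover geodesically convex (Whitehead), with spacelike local Cauchy surface $C$, together with the adapted charts $\psi$ on $\mathcal{U}=\sigma(\mathbb{PN}^{+}(V))$ and $\overline{\psi}$ on $T\mathcal{U}$ from \eqref{equation3.4}--\eqref{equation3.5}. Set $U=S(V)$; by the discussion preceding Definition \ref{normal_neigh}, $U$ is a \emph{normal} open neighbourhood of $X$. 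Normality is the feature that makes the construction work: if $0\neq J\in T_{\gamma}Y\cap T_{\gamma}Y'$ with $Y,Y'\in U$ then $Y=Y'$, so a nonzero vector $J\in\widehat{T}_{\gamma}Y$ determines its sky $Y$ uniquely, and the assignment $(\gamma,J)\mapsto Y$ is well defined on $\widehat{U}$.

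First I would realize $\widehat{U}$ as the image of the map $\Psi(q,[\xi],\zeta)=(\gamma,J_{q,\zeta})$, where $q\in V$, $[\xi]\in\mathbb{PN}^{+}_{q}$ determines the light ray $\gamma=\sigma([\xi])$, and $J_{q,\zeta}$ is the unique Jacobi field along $\gamma$ with $J_{q,\zeta}(q)=0$ and $J_{q,\zeta}'(q)=\zeta\ (\mathrm{mod}\ \gamma')$, regarded as an element of $T_{\gamma}\mathcal{N}$. By \eqref{tangent_sky} the fibre $\widehat{T}_{\gamma}S(q)$ is exactly $\{J_{q,\zeta}:\zeta\neq 0\ (\mathrm{mod}\ \gamma')\}$, so $\operatorname{im}\Psi=\widehat{U}$, and the domain of $\Psi$ is the complement of the zero section in a rank $m-2$ bundle over $\mathbb{PN}^{+}(V)$, of dimension $(2m-2)+(m-2)=3m-4$; note $\widehat{T}\mathcal{U}$ has dimension $4m-6$, so the expected codimension is $m-2$. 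Smoothness of $\Psi$ is inherited from the smooth dependence of Jacobi fields on their base point and initial data.

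The analytic core is then threefold. \emph{Injectivity:} if $\Psi(q,[\xi],\zeta)=\Psi(q',[\xi'],\zeta')$ is the same nonzero $(\gamma,J)$, then $\gamma=\gamma'$ and a suitable representative of $J$ is a nontrivial Jacobi field vanishing at both $q$ and $q'$; since $V$ is convex there are no conjugate points along $\gamma\cap V$, so $q=q'$ by \cite[Prop. 10.10]{On83}, whence $[\xi]=[\xi']$ and $\zeta=\zeta'$. \emph{Immersion:} I would compute $d\Psi$ by separating the $\zeta$-variations, the sky-directions (moving $[\xi]$ inside $\mathbb{PN}^{+}_{q}$), and the base/geodesic-flow directions. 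The $\sigma$-horizontal directions surject onto $T_{\gamma}\mathcal{N}$ with only the geodesic-flow direction in $\ker d\sigma$; the $\zeta$-variations give $m-2$ vertical vectors spanning $T_{\gamma}S(q)$; and the geodesic-flow direction maps to the vertical vector $K=\partial_{s}J_{\gamma(s),\zeta}$, which at $\gamma(s_{0})$ equals $-\zeta\ (\mathrm{mod}\ \gamma')\neq 0$ (differentiate $J_{\gamma(s),\zeta}(\gamma(s))\equiv 0$), hence is transverse to $T_{\gamma}S(q)$. These contributions are independent, so $\operatorname{rank}d\Psi=(2m-3)+(m-1)=3m-4$ and $\Psi$ is an immersion. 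Equivalently, writing the reduced Jacobi propagator in the coordinates \eqref{equation3.5} as $\overline{J}(s)=A(s)\mathbf{w}+B(s)\mathbf{v}$, the condition $(\gamma,J)\in\widehat{U}$ reads $\exists\,s:\overline{J}(s)=0$; since $\partial_{s}\overline{J}=\zeta\neq 0$ at a zero, the implicit function theorem solves one scalar equation for $s=s(\mathbf{x},\mathbf{u},\mathbf{v},\mathbf{w})$, and substitution leaves $m-3$ equations which, together with the contact equation $\mathbf{g}(J,\gamma')=0$, cut out $\widehat{U}$ with locally constant rank $m-2$.

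The hard part, and the reason a single $V$ must be chosen with care, is upgrading this local immersed picture to a genuinely \emph{regular} (embedded) submanifold over all of $U$, i.e. showing $\Psi$ is a homeomorphism onto $\widehat{U}$ with the subspace topology. This is where I expect the announced technique of \emph{convergence of families of Jacobi fields} to enter: one must prove that if $(\gamma_{n},J_{n})\in\widehat{U}$ converge in $\widehat{T}\mathcal{U}$ to $(\gamma,J)\in\widehat{U}$, then the vanishing points $q_{n}$ and the data $\zeta_{n}$ converge in the domain of $\Psi$, so that $\Psi^{-1}$ is continuous. Controlling the zero of the limiting Jacobi field uniformly as $\gamma$ ranges over $\mathcal{U}$ and $q$ over $V$ — and, if necessary, shrinking $V$ so that every such zero stays interior and simple — is the delicate step; the absence of conjugate points in the convex normal neighbourhood is precisely what prevents the zeros from colliding or escaping, and hence secures properness. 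Once $\Psi$ is an injective immersion that is a homeomorphism onto its image, $\widehat{U}$ is a regular submanifold of $\widehat{T}\mathcal{U}$ and $U$ is the desired regular neighbourhood of $X$.
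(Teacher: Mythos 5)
Your proposal is correct and follows essentially the same route as the paper: the chart $\overline{\varphi}=(x,u,v)$ the paper puts on $\widehat{U}$ is exactly your parametrization $\Psi(q,[\xi],\zeta)$, the immersion step rests on the same computation $\left.\tfrac{d}{ds}\right|_{s=0}J_{s}(0)=-\zeta\not\equiv 0\ (\mathrm{mod}\ \gamma')$ obtained by differentiating the vanishing condition along the geodesic-flow direction, and the embeddedness step is the same sequential argument (compactness of the parameter interval via relative compactness of $V$, plus absence of conjugate points to force the limiting zeros to coincide). The only difference is that you state this last convergence step as a goal rather than executing it, but what you describe is precisely what the paper's Step 3 carries out.
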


\begin{proof}
Let $V\subset M$ be a relatively compact, globally hyperbolic, causally
convex normal neighbourhood of $q \in M$ and $U=S\left( V \right) \subset
\Sigma $ be the normal neighbourhood of $Q=S\left( q \right)$, in the sense
of Def. \ref{normal}, image of $V$ under the sky map $S$. We will use the
local coordinate chart $\psi \colon \mathcal{U} \to \mathbb{R}^{2m-3}$
described by eq. \eqref{equation3.4} on $\mathcal{U}$, with $\mathcal{U} =
\bigcup_{X\in U}X = \bigcup_{x \in V} S(x)$. Without any lack of generality,
because of the properties of $V$, we can assume the existence of a
coordinate chart $\varphi=\left( x^{1},\ldots ,x^{m}\right)$ and a
orthonormal frame $\lbrace E_{1},\ldots ,E_{m} \rbrace$ in $V$ such that the
map $\overline{\varphi}\colon \widehat{U} \to \mathbb{R}^{4m-3}$ (actually
we may use the same orthonormal frame $\lbrace E_{1},\ldots ,E_{m} \rbrace$
and coordinate chart $\varphi$ used to construct the coordinates $\overline{%
\psi}=\left(\mathbf{x},\mathbf{u},\mathbf{w},\mathbf{v}\right)$ of $T%
\mathcal{N}$) given by:
\begin{equation*}
\overline{\varphi}(J) = \left(x,u;v\right)=\left( x^{1},\ldots ,x^{m},\left[
u^{1},\dots ,u^{m} \right],\left\langle v^{1}\dots ,v^{m}\right\rangle
\right) \in \mathbb{R}^{3m-4}
\end{equation*}
defines a coordinate chart for $\widehat{U}=\bigcup\limits_{X\in U}\widehat{T%
}X$ in an analogous way to the chart $\overline{\psi}$ in \eqref{equation3.5}%
, where $J^{\prime }_{0}=\sum\limits_{j=1}^{m}v^{j}E_{j}\left( x\right) $
and again $v=\left\langle v^{1}\dots
,v^{m}\right\rangle=\left(v^1,\dots,v^m\right)\left(\mathrm{mod }\gamma
^{\prime }\right)$. Notice that because of eq. \eqref{tangent_sky} if $J$ is
tangent to a sky $S(q)$, $\gamma(0) = q$, then $J(0) = 0$, hence the local
chart $\overline{\varphi}$ is just the chart $\overline{\psi}$ setting $%
\mathbf{w} = 0$.

We will show now that the map $\overline{\varphi}$ gives a differentiable
structure to $\widehat{U}$ which does not depend on the chart $\varphi$ nor
the orthonormal frame chosen in $V$.

\begin{enumerate}
\item First, we will prove that the inclusion $i:\widehat{U}\hookrightarrow T%
\mathcal{U}\subset T\mathcal{N}$ is differentiable.   By construction of the
coordinates $\left( x,u\right) $ of $\widehat{U}$ and $\left( \mathbf{x},%
\mathbf{u}\right) $ of $T\mathcal{N}$ from the coordinates of $\mathbb{PN}^{+}%
\left( V\right) $ and $\mathbb{PN}^{+}\left( C\right) $ in eqs. \eqref{equation3.3} and \eqref{equation3.4} respectively, we have show that $\left. \sigma
\right\vert _{\mathbb{PN}^{+}\left( C\right) }:\mathbb{PN}^{+}\left( C\right)
\rightarrow \mathcal{U}$ is a diffeomorphism and therefore $\mathbf{x}\left(
x,u\right) $ and $\mathbf{u}\left( x,u\right) $ are differentiable functions
since they are the equations in coordinates of the submersion
\begin{equation*}
\sigma _{VC}=\left. \sigma \right\vert _{\mathbb{PN}^{+}\left( C\right)
}^{-1}\circ \left. \sigma \right\vert _{\mathbb{PN}^{+}\left( V\right) }:\mathbb{%
PN}^{+}\left( V\right) \mapsto \mathbb{PN}^{+}\left( C\right) \, .
\end{equation*}%
If $\mathbf{x}=\left( \mathbf{x}^{2},\dots ,\mathbf{x}^{m}\right) $, we
will denote $\left( 0,\mathbf{x}\right) =\left( 0,\mathbf{x}^{2},\dots ,%
\mathbf{x}^{m}\right) $. Consider then
\begin{equation*}
p\left( x,u\right) =\varphi ^{-1}\left( 0,\mathbf{x}\left( x,u\right)
\right) \in C\subset V
\end{equation*}%
and
\begin{equation*}
W\left( x,u\right) =E_{1}\left( p\left( x,u\right) \right) +\mathbf{u}%
^{2}\left( x,u\right) E_{2}\left( p\left( x,u\right) \right) +\dots +\mathbf{%
u}^{m}\left( x,u\right) E_{m}\left( p\left( x,u\right) \right)
\end{equation*}%
where $\mathbf{u}\left( x,u\right) =\left( \mathbf{u}^{3}\left( x,u\right)
,\dots ,\mathbf{u}^{m}\left( x,u\right) \right) \in T_{p\left( x,u\right) }M$
and $\mathbf{u}^{2}=\sqrt{1-\left( \mathbf{u}^{3}\right) ^{2}-\dots -\left(
\mathbf{u}^{m}\right) ^{2}}$. For any $\left( x,u\right) $ we define the
following map
\begin{equation*}
h\left( t,x,u\right) =\mathrm{exp}_{p\left( x,u\right) }\left( tW\left(
x,u\right) \right)
\end{equation*}%
It is clear that $h$ is differentiable by composition of differentiable
maps, and for fixed $\left( x_{0},u_{0}\right) $ the curve $\gamma _{\left(
x_{0},u_{0}\right) }\left( t\right) =h\left( t,x_{0},u_{0}\right) $ is a
null geodesic such that $\gamma _{\left( x_{0},u_{0}\right) }\left( 0\right)
\in C$. 
For any of these geodesics, we have the initial value problem of
Jacobi fields given by
\begin{equation}
\left\{
\begin{tabular}{l}
$J^{\prime \prime }=R\left( J,\gamma _{\left( x,u\right) }^{\prime }\right)
\gamma _{\left( x,u\right) }^{\prime }$ \\
$J\left( \tau \right) =0$ \\
$J^{\prime }\left( \tau \right) =\xi $%
\end{tabular}%
\ \right.  \label{JFDD}
\end{equation}%
where $R$ is the Riemann curvature tensor, $\tau $ is in the domain of $%
\gamma _{\left( x,u\right) }$ and $\xi \in T_{\gamma _{\left( x,u\right)
}\left( \tau \right) }M$.

If we express the Jacobi field $J$ as $J=\alpha ^{k}\partial / \partial x^{k}$, then %
eq. \eqref{JFDD} can be written as a system of differential equations
\begin{eqnarray*}
\frac{d^{2}\alpha ^{k}}{dt^{2}} &+&\frac{d\alpha ^{i}}{dt}\left( \Gamma
_{ij}^{k}\frac{\partial h^{j}}{\partial t}\right) +\alpha ^{i}\frac{d}{dt}%
\left( \Gamma _{ij}^{k}\frac{\partial h^{j}}{\partial t}\right) + \\
&+&\Gamma _{ln}^{k}\left( \frac{d\alpha ^{l}}{dt}+\Gamma _{ij}^{l}\alpha ^{i}%
\frac{\partial h^{j}}{\partial t}\right) \frac{\partial h^{n}}{\partial t}%
-\alpha ^{n}\frac{\partial h^{i}}{\partial t}\frac{\partial h^{j}}{\partial t%
}R_{jni}^{k}=0
\end{eqnarray*}%
for $k=1,\dots ,m$ where, for brevity, we write $\Gamma _{ij}^{k}=\Gamma
_{ij}^{k}\left( h\left( t,x,u\right) \right) $, $R_{jni}^{k}=R_{jni}^{k}%
\left( h\left( t,x,u\right) \right) $ and $h^{j}=x^{j}\circ h$.

If we transform this second order system into a first order one by using the
standard transformation $y^{k}=\alpha ^{k}$ and $y^{m+k}=d\alpha ^{k}/dt$
for $k=1,\dots ,m$ then, the system eq. \eqref{JFDD} has the form:
\begin{equation}
\left\{
\begin{array}{l}
\displaystyle{\frac{dy}{dt}=f\left( t,y,x,u\right) } \\
y\left( \tau \right) =\overline{\xi }%
\end{array}%
\right.  \label{JFFOSimple}
\end{equation}%
Let us denote as $y\left( t,x,u,\tau ,\overline{\xi }\right) $ the
solution of \ref{JFFOSimple}, corresponding to a Jacobi field $J_{\tau ,%
\overline{\xi }}\in \widehat{U}$ along the null geodesic $\gamma _{\left(
x,u\right) }$ with $J_{\tau ,\overline{\xi }}\left( \tau \right) =0$ and $ J'_{\tau ,\overline{\xi }}\left( \tau \right) = \xi$. 
By construction, for each $\left( x,u\right) $ there exists a unique $\tau $
such that $\varphi \left( h\left( \tau ,x,u\right) \right) =x$. We will
write this function as $\tau \left( x,u\right) $ and it is possible to show
easily that this $\tau $ is differentiable\footnote{It can be done applying the implicit function theorem fo the map $F\left( t,x,u\right) = \varphi (h\left( t,x,u\right)) -x$.}.
The solution $y\left( 0,x,u,\tau \left( x,u\right) ,%
\overline{\xi }\right) $ gives us the values of $J_{\tau ,\overline{\xi }%
}\left( 0\right) $ and $J_{\tau ,\overline{\xi }}^{\prime }\left( 0\right) $%
, and therefore it provides the coordinates $\mathbf{v}\left( x,u,v\right) $
and $\mathbf{w}\left( x,u,v\right) $. Because of the theorem on the regular
dependence of solutions of initial value problems with parameter (see for
instance \cite[chapter 5]{Ha64}), $y\left( 0,x,u,\tau \left( x,u\right) ,%
\overline{\xi }\right) $ is a differentiable function depending smoothly on
the data $\left( x,u,\overline{\xi }\right) $ and hence $\mathbf{v}\left(
x,u,v\right) $ and $\mathbf{w}\left( x,u,v\right) $ are differentiable
functions of $\left( x,u,v\right) $. This proves that $i:\widehat{U}%
\hookrightarrow T\mathcal{U}$ is differentiable.

\item The second step in this proof is to show that $i:\widehat{U}%
\hookrightarrow T\mathcal{U}$ is an immersion. For this purpose we will show
that any regular curve in $\widehat{U}$ is transformed by $i$ into a regular
curve in $T\mathcal{U}$. Let us consider a regular curve $c\left( s\right)
\in \widehat{U}$ with $s\in \left( -\varepsilon ,\varepsilon \right) $. This
means that $c\left( s\right) =J_{s}$ is a Jacobi field along a null
(parametrized) geodesic $\gamma _{s}$ verifying $J_{s}\left( t_{s}\right) =0$, 
and $J_{s}^{\prime }\left( t_{s}\right) =\xi \left( s\right) $ is not
proportional to $\gamma _{s}^{\prime }\left( t_{s}\right) $. We will prove
that $i_{\ast }\left( c^{\prime }\left( 0\right) \right) \neq 0$ if $%
c^{\prime }\left( 0\right) \neq 0$, that is
\begin{equation*}
c^{\prime }\left( 0\right) \neq 0\Rightarrow \left( i\circ c\right) ^{\prime
}\left( 0\right) \neq 0
\end{equation*}%
This curve $c$ can be written in coordinates as $\overline{\varphi }\left(
c\left( s\right) \right) =\left( x\left( s\right) ,u\left( s\right) ,v\left(
s\right) \right) $ with $\overline{\varphi }\left( c\left( 0\right) \right)
=\left( x_{0},u_{0},v_{0}\right) $ and it has a differentiable image in $T%
\mathcal{U}$. The inclusion $i$ transforms the coordinates of $c$ as
\begin{equation*}
\overline{\psi }\circ i\circ \left( \overline{\varphi }\right) ^{-1}\left(
x\left( s\right) ,u\left( s\right) ,v\left( s\right) \right) =
\end{equation*}%
\begin{equation*}
=\left( \mathbf{x}\left( x\left( s\right) ,u\left( s\right) \right) ,\mathbf{%
u}\left( x\left( s\right) ,u\left( s\right) \right) ,\mathbf{v}\left(
x\left( s\right) ,u\left( s\right) ,v\left( s\right) \right) ,\mathbf{w}%
\left( x\left( s\right) ,u\left( s\right) ,v\left( s\right) \right) \right)
\end{equation*}%
The map $\left( \mathbf{x}\left( x,u\right) ,\mathbf{u}\left( x,u\right)
\right) $ coincides with the map $\sigma _{VC}=\left. \sigma \right\vert _{%
\mathbb{PN}^{+}\left( C\right) }^{-1}\circ \left. \sigma \right\vert _{\mathbb{PN%
}^{+}\left( V\right) }:\mathbb{PN}^{+}\left( V\right) \mapsto \mathbb{PN}^{+}\left(
C\right) $ in coordinates, which is a submersion, then its differential has
maximal rank $2m-3$ and codimension $1$. If the curve with coordinates $%
\left( x\left( s\right) ,u\left( s\right) \right) $ is transversal to the
fibre of $\sigma _{VC}$ at $s=0$, then obviously $\left( i\circ c\right)
^{\prime }\left( 0\right) \neq 0$. In other case, we can take $c$ (defining $%
c^{\prime }\left( 0\right) $) as a regular curve verifying that $c\left(
s\right) =J_{s}$ lies on a fixed null geodesic $\gamma $, then
\begin{equation*}
\overline{\psi }\circ i\circ \left( \overline{\varphi }\right) ^{-1}\left(
x\left( s\right) ,u\left( s\right) ,v\left( s\right) \right) =\left( \mathbf{%
x}\left( x_{0},u_{0}\right) ,\mathbf{u}\left( x_{0},u_{0}\right) ,\mathbf{v}%
\left( x_{0},u_{0},v\left( s\right) \right) \mathbf{w}\left(
x_{0},u_{0},v\left( s\right) \right) \right)
\end{equation*}%
where $\left( \mathbf{x},\mathbf{u}\right) $ remains constant for every $s$.
Then the differential
\begin{equation*}
\left( d\mathbf{x}_{c\left( 0\right) }\left( c^{\prime }\left( 0\right)
\right) ,d\mathbf{u}_{c\left( 0\right) }\left( c^{\prime }\left( 0\right)
\right) \right) =\left( 0,0\right) .
\end{equation*}

This regular curve $c$ is a curve of Jacobi fields $J_{s}\in \widehat{U}$
along the null geodesic $\gamma $ such that $J_{s}\left( t_{0}+s\right) =0$
and $J_{s}^{\prime }\left( t_{0}+s\right) =\xi \left( s\right) $ for $s\in
\left( -\epsilon ,\epsilon \right) $ and hence $\xi \left( s\right) $ is a
vector field along $\gamma $ non-proportional to $\gamma ^{\prime }$ at $s=0$%
. We can assume, without any lack of generality that $t_{0}=0$ and the local
Cauchy surface $C$ associated to the chart $\overline{\psi }$ contains $%
\gamma \left( 0\right) $. We have that $J_{0}\left( 0\right) =0$. So,
\begin{equation*}
\left. \frac{d}{ds}\right\vert _{s=0}J_{s}\left( 0\right) =\lim_{s\mapsto 0}%
\frac{J_{s}\left( 0\right) -J_{0}\left( 0\right) }{s}=\lim_{s\mapsto 0}\frac{%
J_{s}\left( 0\right) }{s}
\end{equation*}%
By \cite[Prop. 10.16]{BE96}, we have that $J_{s}\left( t\right) =\left(
\mathrm{exp}_{\gamma \left( s\right) }\right) _{\ast }\left( \left(
t-s\right) \tau _{\left( t-s\right) \gamma ^{\prime }\left( s\right)
}J_{s}^{\prime }\left( s\right) \right) $ where for $v\in T_{\gamma
\left( s\right) }M$, the map $\tau _{v}:T_{\gamma \left( s\right) }M\rightarrow
T_{v}T_{\gamma \left( s\right) }M$ is the canonical isomorphism. Then
\begin{equation*}
\left. \frac{d}{ds}\right\vert _{s=0}J_{s}\left( 0\right) =\lim_{s\mapsto 0}%
\frac{1}{s}\left( \mathrm{exp}_{\gamma \left( s\right) }\right) _{\ast
}\left( \left( -s\right) \tau _{\left( -s\right) \gamma ^{\prime }\left(
s\right) }\xi \left( s\right) \right) =
\end{equation*}%
\begin{equation*}
=\lim_{s\mapsto 0}\left( \mathrm{exp}_{\gamma \left( s\right) }\right)
_{\ast }\left( \left( \frac{-s}{s}\right) \tau _{\left( -s\right) \gamma
^{\prime }\left( s\right) }\xi \left( s\right) \right) =\lim_{s\mapsto
0}\left( \mathrm{exp}_{\gamma \left( s\right) }\right) _{\ast }\left( -\tau
_{\left( -s\right) \gamma ^{\prime }\left( s\right) }\xi \left( s\right)
\right) =
\end{equation*}%
\begin{equation*}
=\left( \mathrm{exp}_{\gamma \left( 0\right) }\right) _{\ast }\left( -\tau
_{0}\xi \left( 0\right) \right) =-\xi \left( 0\right)
\end{equation*}%
Hence, we state that
\begin{equation*}
\left. \frac{d}{ds}\right\vert _{s=0}J_{s}\left( 0\right) =-\xi \left(
0\right)
\end{equation*}%
Since $\xi \left( 0\right) $ is not proportional to $\gamma ^{\prime }\left(
0\right) $ , then $d\mathbf{w}_{c\left( 0\right) }\left( c^{\prime }\left(
0\right) \right) \neq 0$, and this implies that $i\circ c$ is a regular
curve for $s=0$. Therefore $i$ is an immersion.

\item In the last step of this proof, we will show that $\widehat{U}\subset T%
\mathcal{U}$ is a regular submanifold. Let us consider the system of
ordinary differential equations \ref{JFFOSimple} for Jacobi fields in $%
\widehat{U}$. We will denote its solution as $y\left( t,x,u,\tau ,\overline{%
\xi }\right) $. If the origin of the parameter $t$ of \ref{JFFOSimple} is
lying in the local Cauchy surface $C$, we can write de Jacobi field $J$ such
that $J\left( \tau \right) =0$ and $J^{\prime }\left( \tau \right) =\xi $ as
the solution $y\left( t,\mathbf{x},\mathbf{u},\tau ,\overline{\xi }\right) $%
, where $\mathbf{x}=\left( 0,x^{2},\dots ,x^{m}\right) $ which can be
identified with the adapted coordinates $\mathbf{x}$ to $C$ in \ref%
{equation3.4}. Then, the pair $\left( \mathbf{x},\mathbf{u}\right) $ are the
coordinates of a point in $\mathbb{PN}^{+}\left( C\right) $ and therefore, they
determine the null geodesic $\gamma _{\left( \mathbf{x},\mathbf{u}\right) }$%
. In fact, $y\left( \tau ,\mathbf{x},\mathbf{u},\tau ,\overline{\xi }\right)
$ corresponds to the values $J\left( \tau \right) =0$ and $J^{\prime }\left(
\tau \right) =\xi $. Moreover, $y\left( 0,\mathbf{x},\mathbf{u},\tau ,%
\overline{\xi }\right) $ represents the values $J\left( 0\right) $ and $%
J^{\prime }\left( 0\right) $ which are lying in $C$, therefore $y\left( 0,%
\mathbf{x},\mathbf{u},\tau ,\overline{\xi }\right) $ is equivalent to give
the coordinates $\overline{\psi }\left( J\right) =\left( \mathbf{x},\mathbf{u%
},\mathbf{v},\mathbf{w}\right) $ of $J$ in $T\mathcal{N}$. Since $V$ is
relatively compact and due to the existence of flow boxes of non-vanishing
differentiable vector fields, we can assume, without any lack of generality,
that there exist a compact interval $I$ neighbourhood of $0$ such that the
parameter of any null geodesic defined by $\eta =E_{1}\left( p\right)
+u^{2}E_{2}\left( p\right) +\dots +u^{m}E_{m}\left( p\right) \in \mathbb{N}^{+}
_{p}\left( V\right) $ with $p\in V$ through $V$ is defined for $t\in I$.
Now, let us consider an arbitrary sequence $\{J_{n}\}\subset \widehat{U}%
\subset T\mathcal{N}$ converging to $J_{\infty }\in \widehat{U}\subset T%
\mathcal{N}$ in $T\mathcal{N}$. Proving that $\{J_{n}\}$
converges to $J_{\infty }$ in $\widehat{U}$ is sufficient to show that
$\widehat{U}\subset T\mathcal{U}$ is a regular submanifold.

The Jacobi fields $J_{n}$ and $J_{\infty }$ are fields along the null
geodesics $\gamma _{\left( \mathbf{x}_{n},\mathbf{u}_{n}\right) }$ and $%
\gamma _{\left( \mathbf{x}_{\infty },\mathbf{u}_{\infty }\right) }$
respectively and moreover there exist $t_{n},t_{\infty }\in I$ such that $%
J_{n}\left( t_{n}\right) $ and $J_{\infty }\left( t_{\infty }\right) $ are
proportional to $\gamma _{\left( \mathbf{x}_{n},\mathbf{u}_{n}\right)
}^{\prime }\left( t_{n}\right) $ and $\gamma _{\left( \mathbf{x}_{\infty },%
\mathbf{u}_{\infty }\right) }^{\prime }\left( t_{\infty }\right) $
respectively for every $n\in \mathbb{N}^{+}$. If their coordinates in $T\mathcal{%
N}$ are $\overline{\psi }\left( J_{n}\right) =\left( \mathbf{x}_{n},\mathbf{u%
}_{n},\mathbf{v}_{n},\mathbf{w}_{n}\right) $ and $\overline{\psi }\left(
J_{\infty }\right) =\left( \mathbf{x}_{\infty },\mathbf{u}_{\infty },\mathbf{%
v}_{\infty },\mathbf{w}_{\infty }\right) $ respectively, then we have that
\begin{equation*}
\lim_{n\mapsto \infty }\overline{\psi }\left( J_{n}\right) =\overline{\psi }%
\left( J_{\infty }\right)
\end{equation*}%
or equivalently
\begin{equation*}
\lim_{n\mapsto \infty }y\left( 0,\mathbf{x}_{n},\mathbf{u}_{n},t_{n},%
\overline{\xi }_{n}\right) =y\left( 0,\mathbf{x}_{\infty },\mathbf{u}%
_{\infty },t_{\infty },\overline{\xi }_{\infty }\right)
\end{equation*}%
Again because of the theorem on the regular dependence of solutions of
initial value problems with parameters, the solution $y\left( t,\mathbf{x},%
\mathbf{u},\tau ,\overline{\xi }\right) $ differentiably depends on the
variables $\left( t,x,u,\tau ,\overline{\xi }\right) $, therefore
\begin{equation*}
\lim_{n\mapsto \infty }y\left( t,\mathbf{x}_{n},\mathbf{u}_{n},t_{n},%
\overline{\xi }_{n}\right) =y\left( t,\mathbf{x}_{\infty },\mathbf{u}%
_{\infty },t_{\infty },\overline{\xi }_{\infty }\right)
\end{equation*}%
This implies that
\begin{equation*}
\lim_{n\mapsto \infty }J_{n}\left( t\right) =J_{\infty }\left( t\right)
\end{equation*}%
Since $I$ is compact, the sequence $\{t_{n}\}\subset I$ has a convergent
subsequence, so we can assume that $\{t_{n}\}$ itself verifies that $%
\lim_{n\mapsto \infty }t_{n}=\overline{t}\in I$. Then we have that
\begin{equation*}
\lim_{n\mapsto \infty }y\left( t_{n},\mathbf{x}_{n},\mathbf{u}_{n},t_{n},%
\overline{\xi }_{n}\right) =y\left( \overline{t},\mathbf{x}_{\infty },%
\mathbf{u}_{\infty },t_{\infty },\overline{\xi }_{\infty }\right)
\end{equation*}%
hence
\begin{equation*}
\begin{tabular}{l}
$\lim_{n\mapsto \infty }J_{n}\left( t_{n}\right) =J_{\infty }\left(
\overline{t}\right) $ \\
$\lim_{n\mapsto \infty }{J_{n}^{\prime }\left( t_{n}\right) }=J_{\infty
}^{\prime }\left( \overline{t}\right) $%
\end{tabular}%
\end{equation*}%
Since $J_{n}\left( t_{n}\right) $ is proportional to $\gamma _{\left(
\mathbf{x}_{n},\mathbf{u}_{n}\right) }^{\prime }\left( t_{n}\right) $ for
every $n\in \mathbb{N}^{+}$, then $J_{\infty }\left( \overline{t}\right) $ is
also proportional to $\gamma _{\left( \mathbf{x}_{\infty },\mathbf{u}%
_{\infty }\right) }^{\prime }\left( t_{\infty }\right) $, but $\gamma
_{\left( \mathbf{x}_{\infty },\mathbf{u}_{\infty }\right) }^{\prime }$ is a
null geodesic without conjugate points, therefore $\overline{t}=t_{\infty }$%
. This gives us
\begin{equation*}
\lim_{n\mapsto \infty }{J_{n}^{\prime }\left( t_{n}\right) }=J_{\infty
}^{\prime }\left( t_{\infty }\right)
\end{equation*}%
Recall that the coordinates of $\widehat{U}$ are given by $\overline{\varphi
}=\left( x,u,v\right) $ where $\varphi =\left( x^{1},\ldots ,x^{m}\right) $
is the chart in $V$. Then
\begin{equation*}
\lim_{n\mapsto \infty }\overline{\varphi }\left( J_{n}\right)
=\lim_{n\mapsto \infty }\left( \varphi \left( \gamma _{\left( \mathbf{x}_{n},%
\mathbf{u}_{n}\right) }\left( t_{n}\right) \right) ,\left[ \gamma _{\left(
\mathbf{x}_{n},\mathbf{u}_{n}\right) }^{\prime }\left( t_{n}\right) \right]
,\langle J_{n}^{\prime }\left( t_{n}\right) \rangle \right) =
\end{equation*}%
\begin{equation*}
=\left( \varphi \left( \gamma _{\left( \mathbf{x}_{\infty },\mathbf{u}%
_{\infty }\right) }\left( t_{\infty }\right) \right) ,\left[ \gamma _{\left(
\mathbf{x}_{\infty },\mathbf{u}_{\infty }\right) }^{\prime }\left( t_{\infty
}\right) \right] ,\langle J_{\infty }^{\prime }\left( t_{\infty }\right)
\rangle \right) =\overline{\varphi }\left( J_{\infty }\right)
\end{equation*}%
So, the sequence $\{J_{n}\}$ converges to $J_{\infty }$ in $\widehat{U}$.
This completes the proof.
\end{enumerate}
\end{proof}

\begin{corollary}
\label{corollary1} Regular open sets constitute a basis for the topology of $%
\Sigma $.
\end{corollary}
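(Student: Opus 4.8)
The plan is to use the standard characterisation of a basis for a fixed topology: a family $\mathcal{B}$ of open sets is a basis precisely when, for every open set $W$ and every point $X\in W$, there is a member of $\mathcal{B}$ squeezed between $X$ and $W$, i.e.\ some $B\in\mathcal{B}$ with $X\in B\subseteq W$. Since regular open sets are open by Definition \ref{regular_neigh}, the whole content of the corollary is to exhibit, inside an arbitrary prescribed open neighbourhood of a given sky, a regular open set that still contains that sky. So I would fix an open $W\subseteq\Sigma$ and a point $X=S(q)\in W$, and aim to construct a regular open set $U$ with $X\in U\subseteq W$.

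First I would transport the problem to $M$ through the sky map. By Lemma \ref{lemma3.3} the map $S\colon M\to\Sigma$ is a homeomorphism, so $V_W:=S^{-1}(W)$ is an open neighbourhood of $q$ in $M$. It therefore suffices to produce a regular open set of the shape $U=S(V)$ with $q\in V\subseteq V_W$: since $S$ is a bijection this forces $X=S(q)\in S(V)$ and $S(V)\subseteq S(V_W)=W$, which is exactly the required squeezing.

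The geometric input I would then invoke is that the proof of Theorem \ref{theorem1} never uses anything about $V$ beyond its being a relatively compact, globally hyperbolic, causally convex normal neighbourhood; for any such $V$ it shows that $S(V)$ is a regular open set. Hence I only need to know that neighbourhoods of this good type form a neighbourhood basis of $q$ in $M$. This is precisely the strong-causality fact already used in Section \ref{charts}: around any point of a strongly causal spacetime, and inside any prescribed neighbourhood, one can find a globally hyperbolic causally convex neighbourhood (see \cite[Thm.\ 2.1 and Def.\ 3.22]{Mi08}); intersecting it with a relatively compact convex normal neighbourhood contained in $V_W$ (whose existence is Whitehead's theorem, \cite[Ch.\ 5]{On83}) yields a $V$ of the good type with $q\in V\subseteq V_W$. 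Feeding this $V$ into the construction of Theorem \ref{theorem1} gives the regular open set $U=S(V)$, and the squeezing $X\in U\subseteq W$ follows as above.

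Combining these steps establishes the basis property, so the only delicate point — and the natural candidate for the main obstacle — is verifying that the argument of Theorem \ref{theorem1} is genuinely \emph{uniform} in $V$, i.e.\ that its regularity conclusion depends only on $V$ being of the good type and on no further global feature, so that it may be applied verbatim to the shrunken neighbourhood $V\subseteq V_W$. Once that is granted, everything else is the elementary basis criterion combined with the homeomorphism $S$ of Lemma \ref{lemma3.3}.
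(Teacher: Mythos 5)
Your proof is correct, but it follows a genuinely different route from the paper's. You shrink first in $M$: via the homeomorphism $S$ of Lemma \ref{lemma3.3} you pull the prescribed neighbourhood $W$ back to $V_W\subset M$, pick a relatively compact, globally hyperbolic, causally convex normal neighbourhood $V\subseteq V_W$ of $q$ (such neighbourhoods do form a neighbourhood basis in a strongly causal space--time, a fact the paper itself uses in Section \ref{charts} and again in the proof of Corollary \ref{corollary3.12}), and then rerun the entire construction of Theorem \ref{theorem1} on this smaller $V$; as you correctly observe, that proof uses nothing about $V$ beyond its being of this good type, so its conclusion is uniform in $V$. The paper instead invokes Theorem \ref{theorem1} only once, to obtain a single regular neighbourhood $U$ of $X$, and then shrinks inside $\Sigma$: it argues that any connected normal open subset $V\subset U$ is automatically regular, because $\widehat{V}\subset\widehat{U}$ and the regular-submanifold property of $\widehat{U}$ in $\widehat{T}\mathcal{U}$ restricts to the open set $\widehat{T}\mathcal{V}$, and then combines this with the fact that normal open sets form a basis (noted after Definition \ref{normal_neigh}). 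Your approach buys directness and avoids the heredity claim for regularity; the paper's approach buys economy, since Theorem \ref{theorem1} is used as a black box and no uniformity in $V$ needs to be checked. One small caveat on your side: obtaining the good $V\subseteq V_W$ by merely \emph{intersecting} a globally hyperbolic causally convex neighbourhood with a convex normal one is not quite right, since causal convexity and global hyperbolicity are not stable under such intersections; one should instead quote the standard result that arbitrarily small neighbourhoods enjoying all these properties simultaneously exist, which is precisely the statement from \cite{Mi08} that the paper already relies on.
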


\begin{proof}
Let $W\subset \Sigma $ be any neighbourhood of $X\in \Sigma $. By theorem %
\ref{theorem1}, there exists a regular open neighbourhood $U\subset \Sigma $
of $X$. Then for any connected normal open set $V\subset U$, we have that $%
\widehat{V}\subset \widehat{U}$ and since $\widehat{U}$ is a regular
submanifold of $\widehat{T}\mathcal{U}$ then $\widehat{V}$ is a regular
submanifold of $\widehat{T}\mathcal{V}$ hence $V$ is a regular open set.
Therefore, any connected $V\subset W\cap U$ containing $X$ is a regular open
neighbourhood of $X$ such that $X\in V\subset W$.
\end{proof}

\begin{theorem}
\label{theorem2} Let $V\subset M$ be a globally hyperbolic convex normal
open set such that $U=S\left( V\right) \subset \Sigma $ is a regular open
set. Then $U$ has a canonical differentiable structure depending only on $%
\mathcal{N}$. Moreover, the restricted sky map $S:V\rightarrow U$ is a
diffeomorphism.
\end{theorem}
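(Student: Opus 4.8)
The plan is to exploit the chart $\overline{\varphi}=(x,u,v)$ on $\widehat{U}$ produced in the proof of Theorem~\ref{theorem1}, and to exhibit the smooth structure of $U$ as the leaf space of the natural foliation of $\widehat{U}$ by the reduced tangent bundles of the skies. First I would record the geometric meaning of $\overline{\varphi}$: for a nonzero Jacobi field $J$ tangent to the sky $X=S(q)$ along a light ray $\gamma$ with $\gamma(t_*)=q$, one has $x=\varphi(q)$, $u=[\gamma'(t_*)]$ and $v=\langle J'(t_*)\rangle$. The decisive observation is that the first block $x$ depends only on $X$ and not on the chosen $(\gamma,J)$: since $V$ is a convex normal neighbourhood it contains no conjugate points along null geodesics, so by \eqref{tangent_sky} a nonzero Jacobi field tangent to $S(q)$ vanishes $(\mathrm{mod}\,\gamma')$ at the single point $q$, whence its vanishing point, and therefore $x=\varphi(q)$, is intrinsically attached to the sky. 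Thus $x$ is constant along each $\widehat{T}X$.

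Next I would introduce the projection $p\colon \widehat{U}\to U$, $J\mapsto X$, where $X$ is the unique sky with $J\in\widehat{T}X$; this map is well defined because $U$ is a normal open set, so by Definition~\ref{normal} a nonzero vector cannot be tangent to two distinct skies. Read in the chart $\overline{\varphi}$, the map $p$ becomes the projection $(x,u,v)\mapsto x$ onto the open set $\varphi(V)\subset\mathbb{R}^m$, followed by the bijection $x\mapsto S(\varphi^{-1}(x))$; its fibres are exactly the sets $\{x=\mathrm{const}\}=\widehat{T}(S(\varphi^{-1}(x)))$, the $(2m-4)$-dimensional reduced tangent bundles of the skies, along which only the remaining coordinates $(u,v)$ vary. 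Hence $p$ is a surjective submersion onto $\varphi(V)$ collapsing each fibre to a single sky, and $U$ inherits the unique smooth structure for which $p$ is a submersion; concretely this structure is given by the global chart $\chi:=\varphi\circ P\colon U\to\varphi(V)$, $X\mapsto x$, where $P=S^{-1}$ is the parachute map.

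I would then argue canonicity. The submanifold $\widehat{U}\subset T\mathcal{N}$ and its partition into the reduced tangent bundles $\widehat{T}X$ are determined solely by $\mathcal{N}$ and its collection of skies, with no reference to the auxiliary metric, the frame $\{E_i\}$, or the chart $\varphi$; equivalently, a function on $U$ is smooth precisely when its pullback by $p$ is smooth on $\widehat{U}$, an intrinsic condition. Consequently a second admissible choice $(\varphi',\{E'_i\})$ yields a chart $\chi'$ whose transition satisfies $\chi'\circ\chi^{-1}=\varphi'\circ\varphi^{-1}$, which is smooth on $\varphi(V\cap V')$ since it is a change of coordinates on $M$; so the differentiable structure on $U$ is well defined and depends only on $\mathcal{N}$ together with the skies.

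Finally, the diffeomorphism statement is immediate from the construction: in the charts $\varphi$ on $V$ and $\chi$ on $U$ the sky map reads $\chi\circ S\circ\varphi^{-1}(x)=\chi(S(\varphi^{-1}(x)))=x$, the identity on the open set $\varphi(V)$, so $S\colon V\to U$ is a diffeomorphism. The only genuine obstacle lies in the first paragraph: one must be certain that each sky-tangent Jacobi field has a single zero in $V$ and that distinct points of $V$ produce distinct skies. Both facts are exactly the content of the convexity and normality hypotheses on $V$ (absence of conjugate points, together with the separation of skies of Definition~\ref{normal}), and they are precisely the ingredients already used at the end of Step~3 in the proof of Theorem~\ref{theorem1}; once they are in place, the submersion property and the diffeomorphism follow formally from the chart $\overline{\varphi}$.
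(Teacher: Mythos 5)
Your proposal is correct and follows essentially the same route as the paper: both realize $U$ as the leaf space of the foliation of the regular submanifold $\widehat{U}\subset\widehat{T}\mathcal{U}$ by the reduced tangent bundles $\widehat{T}X$ (well defined by normality of $U$), transfer the smooth structure of that leaf space to $U$, and observe that in the adapted chart from Theorem~\ref{theorem1} the sky map becomes the identity on $\varphi(V)$. Your version merely makes explicit the coordinate expression of the quotient submersion and the canonicity argument that the paper defers to the remark following the theorem.
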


\begin{proof}
Any $X\in U$ is a regular submanifold of $\mathcal{N}$, therefore $\widehat{T%
}X$ is a regular submanifold of $\widehat{T}\mathcal{N}$. Denote $\widetilde{%
U}=\{\widetilde{X}=\widehat{T}X:X\in U\}$ and define the map $\widetilde{S}%
:V\rightarrow \widetilde{U}$ given by $\widetilde{S}\left( x\right) =%
\widetilde{S\left( x\right) }$. By definition \ref{definition3.6}, $\widehat{%
U}$ is a regular submanifold of $\widehat{T}\mathcal{U}$ which is an open
set of $\widehat{T}\mathcal{N}$ and since $\widehat{U}=\bigcup\limits_{X\in
U}\widehat{T}X$ then $\widehat{U}$ is foliated by $\{\widehat{T}X:X\in U\}$,
i.e. by $\widetilde{U}$. Denoting the distribution induced by that foliation
as $\mathcal{D}$ , we have that $\widetilde{U}=\widehat{U}/\mathcal{D}$,
hence $\widetilde{S}:V\rightarrow \widetilde{U}$ is a difeomorphism.
Moreover, by normality of $U$ then the map $U\rightarrow \widetilde{U}$
defined by $X\mapsto \widetilde{X}$ is a bijection, and it allows to
identify $U$ with $\widetilde{U}$. Therefore $U$ inherits from $\widetilde{U}
$ its structure of differentiable manifold and this implies that $%
S:V\rightarrow U$ is a difeomorphism.
\end{proof}

An important consequence of corollary \ref{corollary1} and theorem \ref%
{theorem2} is that, since $\widehat{U}$ is a regular submanifold of $T%
\mathcal{N}$, then the differentiable structure given in $\widehat{U}$
coincides with the inherited from $T\mathcal{N}$ on $\widehat{U}$. This
allows us to disregard the differentiable structure built in $\widehat{U}$
from the one involving $M$, but considering it inherited from $T\mathcal{N}$%
. In this way, the differentiable structure of $U$ is inherited from $%
\widetilde{U}=\widehat{U}/\mathcal{D} $, and then the space--time $M$ is not
necessary to obtain a differentiable structure for $\Sigma$, because it is
canonically obtained from $\mathcal{N}$. So, in order to recover the
strongly causal manifold $M$ from $\mathcal{N}$ and $\Sigma$ in section \ref%
{TReconstruccion}, we will not need $M$ itself but only $\mathcal{N}$ and $%
\Sigma$ and their corresponding structures.

\begin{corollary}
\label{corollary3.12} There exists a unique differentiable structure in $%
\Sigma$ compatible with the differentiable structure of any regular open set
$U\subset \Sigma$ given in theorem \ref{theorem2}. Moreover both, the sky
map $S\colon M \rightarrow \Sigma$ and the parachute map $P\colon \Sigma \to
M$ are diffeomorphisms.
\end{corollary}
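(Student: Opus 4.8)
The plan is to glue the local differentiable structures produced by Theorem \ref{theorem2} into a single atlas on $\Sigma$, exploiting that regular open sets form a basis of the reconstructive topology (Corollary \ref{corollary1}), and then to read off that $S$ and $P$ are diffeomorphisms from the fact that the sky map already restricts to a diffeomorphism on each such set. First I would fix, for every $X\in\Sigma$, a regular open neighbourhood $U=S(V)$ as guaranteed by Theorem \ref{theorem1}, equipped with the canonical differentiable structure of Theorem \ref{theorem2}, under which $S|_V\colon V\to U$ is a diffeomorphism. Choosing a coordinate chart $\varphi$ on $V\subset M$, the composition $\varphi\circ P|_U=\varphi\circ(S|_V)^{-1}\colon U\to\mathbb{R}^m$ is a chart on $U$; the collection of all such charts, as $X$ ranges over $\Sigma$, covers $\Sigma$ precisely because regular open sets form a basis.

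The key step is to verify $C^\infty$-compatibility of any two such charts. Given regular open sets $U_1=S(V_1)$ and $U_2=S(V_2)$ with $U_1\cap U_2\neq\emptyset$, the parachute map carries $U_1\cap U_2$ bijectively onto $V_1\cap V_2\subset M$, and on this overlap the transition between the charts $\varphi_i\circ P|_{U_i}$ is
\[
(\varphi_2\circ P)\circ(\varphi_1\circ P)^{-1}=\varphi_2\circ(P\circ S)\circ\varphi_1^{-1}=\varphi_2\circ\varphi_1^{-1},
\]
which is nothing but a change of coordinates in the ambient smooth manifold $M$, hence smooth. In particular the two a priori distinct differentiable structures induced on $U_1\cap U_2$ (one inherited from $U_1$, the other from $U_2$) coincide, since both are transported by $S$ from the single smooth structure of $M$ restricted to $V_1\cap V_2$. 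This is the heart of the matter: compatibility is automatic once Theorem \ref{theorem2} identifies each local structure on $\Sigma$ with that of the corresponding open set of $M$ through the global bijection $S$.

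The resulting maximal atlas endows $\Sigma$ with a differentiable structure for which, by construction, every $S|_V\colon V\to U$ is a diffeomorphism. Gluing these local diffeomorphisms over the covering $\{V\}$ of $M$ (respectively $\{U\}$ of $\Sigma$) shows that the bijective sky map $S\colon M\to\Sigma$ is a local diffeomorphism everywhere, hence a global diffeomorphism, and consequently $P=S^{-1}$ is a diffeomorphism as well. Uniqueness follows because regular open sets form a basis of the topology: any differentiable structure on $\Sigma$ that restricts on every regular open set to the Theorem \ref{theorem2} structure must agree with the atlas above on each member of an open cover, and a smooth structure is determined by its restriction to any open cover.

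I expect no serious obstacle here beyond bookkeeping, since the genuine analytic work was already carried out in Theorems \ref{theorem1} and \ref{theorem2}; the corollary is essentially a patching argument. The only point demanding care is confirming that the canonical structure of Theorem \ref{theorem2}, which is defined intrinsically from $\mathcal{N}$ via $\widehat{U}/\mathcal{D}$, is exactly the structure pulled back from $M$ by $S$ — and this is precisely the content of the diffeomorphism statement in that theorem, so it may be invoked directly.
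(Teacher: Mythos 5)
Your proposal is correct and follows essentially the same route as the paper: cover $\Sigma$ by regular open sets $U=S(V)$, use Theorem \ref{theorem2} to see that $S\colon V\to U$ is a diffeomorphism for the canonical structure on each $U$, and conclude by bijectivity of $S$; your explicit verification that the transition maps reduce to $\varphi_2\circ\varphi_1^{-1}$ merely spells out the compatibility that the paper's terser argument leaves implicit.
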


\begin{proof}
For every $X\in\Sigma$ there exists a regular open set $W\subset\Sigma$. If $%
x\in M$ verifies that $S\left(x\right)=X$, we can consider a globally
hyperbolic convex normal neighbourhood $V\subset M$ of $x$ such that $%
U=S\left(V\right)\subset W$. By corollary \ref{corollary1}, the set $U$ is
also a regular open set containing $X$, and therefore, by theorem \ref%
{theorem2} $S:V\rightarrow U$ is a local diffeomorphism in $X$. The
bijectivity of $S$ provides us the global diffeomorphism $S:M\rightarrow
\Sigma$.
\end{proof}

\section{The reconstruction theorem}

\label{TReconstruccion}

\begin{definition}
Let $\left( M,\mathcal{C}\right) $ and $\left( \overline{M},\overline{\mathcal{C}}\right) $ be two strongly causal manifolds and $\left(
\mathcal{N},\Sigma \right) $ and $\left( \overline{\mathcal{N}},\overline{%
\Sigma }\right) $ theirs corresponding pairs of spaces of  light rays and
skies. We say that a map $\phi :\mathcal{N}\rightarrow \overline{\mathcal{N}}
$ \emph{preserves skies} if $\phi \left( X\right) \in \overline{\Sigma }$
for any $X\in \Sigma $. Moreover, $\left( M,\mathcal{C}\right) $ is said
to be \emph{recoverable} if for $\left( \overline{\mathcal{N}},\overline{%
\Sigma }\right) $ corresponding to $\left( \overline{M},\overline{\mathcal{C}}\right) $ another strongly causal manifolds and $\phi :\mathcal{%
N}\rightarrow \overline{\mathcal{N}}$ a diffeomorphism preserving skies,
then the map
\begin{equation*}
\varphi =\overline{P}\circ \phi \circ S:M\rightarrow \overline{M}
\end{equation*}%
is a conformal diffeomorphism on its image, where $\overline{P}:\overline{%
\Sigma }\rightarrow \overline{M}$ is the parachute map to $\overline{M}$.
\end{definition}

\begin{lemma}
\label{lemma4.5} Let $\left( M,\mathcal{C}\right)$ and $\left( \overline{%
M},\overline{\mathcal{C}}\right)$ be strongly causal manifolds and let $%
\left( \mathcal{N},\Sigma\right) $ and $\left( \overline{\mathcal{N}},%
\overline{\Sigma}\right)$ be their corresponding pair of spaces of null
geodesics and skies. If $\phi :\mathcal{N} \rightarrow \overline{\mathcal{N}}
$ a diffeomorphism preserving skies then $\phi$ induces the map $\Phi
:\Sigma \rightarrow \overline{\Sigma}$ defined by $\Phi\left(X\right)=\phi%
\left(X\right)$ verifying $\Phi$ is injective, open and continuous.
\end{lemma}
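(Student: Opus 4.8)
The plan is to verify the three properties separately, transferring the question to the sky maps established in Section \ref{differentiable_structure} whenever topology is involved. Since $\phi$ preserves skies, $\phi(X)\in\overline{\Sigma}$ for every $X\in\Sigma$, so $\Phi(X)=\phi(X)$ is a genuine element of $\overline{\Sigma}$ and $\Phi$ is well defined. For injectivity I would use only that $\phi$ is a bijection of the underlying sets $\mathcal{N}\to\overline{\mathcal{N}}$: if $\Phi(X)=\Phi(Y)$, that is $\phi(X)=\phi(Y)$ as subsets of $\overline{\mathcal{N}}$, then applying $\phi^{-1}$ to both sides gives $X=Y$.

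Openness follows from the fact that taking unions of skies commutes with the set image under $\phi$. Given $U\subset\Sigma$ open, the set $\mathcal{U}=\bigcup_{X\in U}X$ is open in $\mathcal{N}$ by definition of the reconstructive topology. The union of the skies comprising $\Phi(U)$ is
\[
\bigcup_{\overline{X}\in\Phi(U)}\overline{X}=\bigcup_{X\in U}\phi(X)=\phi\Big(\bigcup_{X\in U}X\Big)=\phi(\mathcal{U}),
\]
which is open in $\overline{\mathcal{N}}$ because $\phi$ is a homeomorphism. By the definition of the reconstructive topology on $\overline{\Sigma}$ this says exactly that $\Phi(U)$ is open, so $\Phi$ is an open map.

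Continuity is the genuine obstacle. The naive approach fails: one would like $\bigcup_{X\in\Phi^{-1}(\overline{U})}X=\phi^{-1}(\overline{\mathcal{U}})$, but a sky that merely happens to be contained in the union $\overline{\mathcal{U}}$ need not be one of the skies that constitute the open set $\overline{U}$, so the reverse inclusion is not automatic. My plan is to transfer the problem to the point maps: by Lemma \ref{lemma3.3} and Corollary \ref{corollary3.12} both $S\colon M\to\Sigma$ and $\overline{S}\colon\overline{M}\to\overline{\Sigma}$ are homeomorphisms, hence $\Phi$ is continuous iff the induced map $\varphi=\overline{P}\circ\Phi\circ S\colon M\to\overline{M}$ is, where $\varphi(p)$ is characterized by $\phi(S(p))=\overline{S}(\varphi(p))$. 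To prove $\varphi$ continuous at $p_{0}$ I would exploit that skies are compact. Fixing a relatively compact convex normal neighbourhood $\overline{W}$ of $\overline{q}_{0}=\varphi(p_{0})$ for which $\overline{U}=\overline{S}(\overline{W})$ is regular (Theorem \ref{theorem1}), one has $\phi(S(p_{0}))=\overline{S}(\overline{q}_{0})\subset\overline{\mathcal{U}}$; since $\sigma^{-1}(\phi^{-1}(\overline{\mathcal{U}}))$ is open and contains the compact fibre $\pi_{M}^{-1}(p_{0})$, the tube lemma yields an open $V\ni p_{0}$ with $\phi(S(p))\subset\overline{\mathcal{U}}$ for all $p\in V$.

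The hard part is converting this containment into the localization $\varphi(p)\in\overline{W}$, and this is where I expect the main difficulty. I would recover the point $\varphi(p)$ as an intersection of light rays: choosing two distinct rays $\gamma^{1},\gamma^{2}\in S(p_{0})$ through $p_{0}$ and, for $p$ near $p_{0}$, letting $\gamma^{1}_{p},\gamma^{2}_{p}\in S(p)$ be the rays through $p$ in the same two null directions (a continuous family, by the local description of $\mathcal{N}$ in Section \ref{charts}), continuity of $\phi$ gives $\phi(\gamma^{i}_{p})\to\phi(\gamma^{i}_{0})$ in $\overline{\mathcal{N}}$, while $\varphi(p)\in\phi(\gamma^{1}_{p})\cap\phi(\gamma^{2}_{p})$. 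The limiting rays $\phi(\gamma^{1}_{0})\neq\phi(\gamma^{2}_{0})$ meet transversally and, inside the convex neighbourhood $\overline{W}$, only at $\overline{q}_{0}$; by the implicit function theorem this transverse intersection point depends continuously on the pair of rays, forcing $\varphi(p)\to\overline{q}_{0}$. The delicate point is ensuring that the intersection realizing $\varphi(p)$ is the nearby one and not a spurious distant crossing, which is precisely where the convexity of $\overline{W}$, the strong causality of $\overline{M}$, and the freedom to shrink $\overline{W}$ enter. Once continuity is secured, $\Phi$ is a continuous open injection, hence a homeomorphism onto its image, as claimed.
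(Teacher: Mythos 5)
Your treatment of well-definedness, injectivity and openness is correct and coincides with the paper's proof. For continuity, however, you should be aware that the paper itself uses exactly the ``naive'' computation you reject: it writes
\[
\phi^{-1}\bigl(\overline{\mathcal{U}}\bigr)=\bigcup_{\overline{X}\in\overline{U}}\phi^{-1}\bigl(\overline{X}\bigr)=\bigcup_{X\in\Phi^{-1}(\overline{U})}X
\]
and concludes by Lemma \ref{lemma3.1}. The last equality is immediate whenever every $\overline{X}\in\overline{U}$ is of the form $\phi(X)$ with $X\in\Sigma$ (the paper restricts to the image $\Phi(\Sigma)$ only \emph{after} the lemma), and your objection is precisely about the case where it is not: for $\overline{X}\in\overline{U}\setminus\Phi(\Sigma)$ the set $\phi^{-1}(\overline{X})$ is not a sky of $M$, and a ray $\gamma$ with $\phi(\gamma)\in\overline{\mathcal{U}}$ lies only on the skies $S(x)$ with $x\in\gamma$, none of whose images need belong to $\overline{U}$. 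So you have put your finger on a genuine subtlety in the short argument, and the localization step you isolate is indeed the whole content of the claim: knowing that every ray of the sky $\overline{S}(\varphi(p))$ meets $\overline{W}$ does not by itself place $\varphi(p)$ in $\overline{W}$ (in Minkowski space every light ray through \emph{any} point meets the slab $\{0<t<1\}$).

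The difficulty is that your replacement does not close this gap. The reduction to the point map $\varphi=\overline{P}\circ\Phi\circ S$ via Lemma \ref{lemma3.3} and the tube-lemma step giving $\phi(S(p))\subset\overline{\mathcal{U}}$ for $p$ near $p_{0}$ are both fine, but the final mechanism is unsound. Two curves meeting transversally at a point of an $m$--dimensional manifold with $m\geq 3$ do not have a stable intersection (the expected dimension of the intersection is $2-m<0$), so the implicit function theorem applied to the pair of rays $\phi(\gamma^{1}_{p}),\phi(\gamma^{2}_{p})$ neither produces a nearby intersection point nor shows that one varies continuously; these two rays intersect at all only because they lie on the common Legendrian sphere $\phi(S(p))$, and nothing in your argument excludes that the intersection realizing $\varphi(p)$ is a distant crossing outside $\overline{W}$ --- which is exactly the ``delicate point'' you acknowledge and leave open. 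A workable route must use the sky as a whole rather than two isolated rays: for instance, the regular-neighbourhood machinery of Theorems \ref{theorem1} and \ref{theorem2}, locating $\varphi(p)$ as the common zero (mod $\gamma'$) of the Jacobi fields spanning $\widehat{T}\phi(S(p))$ inside the regular submanifold $\widehat{\overline{U}}$; or an argument that for $\overline{W}$ convex and relatively compact the set of points all of whose rays meet $\overline{W}$ can be made to shrink to $\{\overline{q}_{0}\}$. As written, the continuity assertion remains unproved.
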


\begin{proof}
Obviously, $\Phi$ is well defined and injective. To show $\Phi$ is an open
map, given an open set $U\subset \Sigma$, we will study the set $\overline{U}%
=\Phi\left(U\right)$. We have that $\mathcal{U}=\bigcup\limits_{X\in U}X$ is
open in $\mathcal{N}$ and, since $\phi$ is a diffeomorphism, then $\overline{%
\mathcal{U}}=\phi\left(\mathcal{U}\right)$ is an open set in $\mathcal{%
\overline{N}}$. Moreover
\begin{equation*}
\overline{\mathcal{U}}=\phi\left(\mathcal{U}\right)=\phi\left(\bigcup%
\limits_{X\in U}X\right)=\bigcup\limits_{X\in U}\phi\left(X\right)
\end{equation*}
and since $\Phi$ is injective and $\overline{U}=\Phi\left(U\right)$ we have
\begin{equation*}
\bigcup\limits_{X\in
U}\phi\left(X\right)=\bigcup\limits_{\phi\left(X\right)\in \overline{U}%
}\phi\left(X\right)=\bigcup\limits_{\overline{X}\in \overline{U}}\overline{X}
\end{equation*}
Then, by lemma \ref{lemma3.1}, $\overline{U}$ is open in $\overline{\Sigma}$
and therefore $\Phi$ is open. Finally, we will show that $\Phi$ is
continuous. Given an open set $\overline{U}\subset \overline{\Sigma}$, the
set $\overline{\mathcal{U}}=\bigcup\limits_{\overline{X}\in \overline{U}}%
\overline{X}$ is open in $\overline{\mathcal{N}}$. Denote $U=\Phi^{-1}\left(%
\overline{U}\right)=\lbrace X\in\Sigma: \phi\left(X\right)\in\overline{U}
\rbrace$. By being $\phi$ a diffeomorphism, then the set $\mathcal{U}%
=\phi^{-1}\left(\overline{\mathcal{U}}\right)$ is open. Moreover
\begin{equation*}
\mathcal{U}=\phi^{-1}\left(\overline{\mathcal{U}}\right)=
\phi^{-1}\left(\bigcup\limits_{\overline{X}\in \overline{U}}\overline{X}%
\right)=\bigcup\limits_{\overline{X}\in \overline{U}}\phi^{-1}\left(%
\overline{X}\right)=\bigcup\limits_{X\in U}X
\end{equation*}
Again, by lemma \ref{lemma3.1}, $U$ is open in $\Sigma$ and therefore $\Phi$
is continuous.
\end{proof}

Restricting the map $\Phi$ of lemma \ref{lemma4.5} to its image, $%
\Phi:\Sigma\rightarrow\Phi\left(\Sigma\right)$ then it is clear that $\Phi$
is bijective, open and continuous, hence is a homeomorphism. This
homeomorphism induces, in virtue of lemma \ref{lemma3.3} or corollary \ref%
{corollary3.12}, the homeomorphism $\varphi=\overline{P}\circ\Phi\circ S$
onto an open set of $\overline{M}$. So, we can assume, with no lack of
generality that $\overline{\Sigma}=\Phi\left(\Sigma\right)$ and $\overline{M}%
=\overline{P}\circ\Phi\left(\Sigma\right)$.

\begin{theorem}
\label{theorem4.6} Let $\left( M,\mathcal{C}\right) $ be a strongly
causal manifold, then $M$ is recoverable.
\end{theorem}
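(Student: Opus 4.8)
The plan is to upgrade the homeomorphism $\varphi$ that has already been produced to a conformal diffeomorphism. Recall that, by Lemma \ref{lemma4.5} and the discussion following it, the sky--preserving diffeomorphism $\phi$ induces a homeomorphism $\Phi\colon\Sigma\to\overline\Sigma$ onto its image and hence a homeomorphism $\varphi=\overline P\circ\Phi\circ S$ onto an open subset of $\overline M$; after the normalization made there we may assume $\overline\Sigma=\Phi(\Sigma)$ and $\overline M=\varphi(M)$. Since the sky and parachute maps $S$, $\overline P$ are diffeomorphisms by Corollary \ref{corollary3.12}, proving that $\varphi$ is a diffeomorphism reduces to proving that $\Phi$ is a diffeomorphism, after which it only remains to verify that $\varphi$ is conformal.

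First I would show that $\Phi$ is a diffeomorphism by exploiting that, as observed after Theorem \ref{theorem2}, the differentiable structure of $\Sigma$ is intrinsic to $\mathcal N$. Fix a regular open set $U\subset\Sigma$ (these form a basis by Corollary \ref{corollary1}) and put $\overline U=\Phi(U)$. By Definition \ref{definition3.6} the set $\widehat U=\bigcup_{X\in U}\widehat TX$ is a regular submanifold of $\widehat T\mathcal U$, foliated by the leaves $\widehat TX$, and $U\cong\widehat U/\mathcal D$. Since $\phi$ is a diffeomorphism carrying each sky $X$ diffeomorphically onto the sky $\overline X=\phi(X)$, its tangent map $d\phi\colon\widehat T\mathcal N\to\widehat T\overline{\mathcal N}$ sends $\widehat TX$ onto $\widehat T\overline X$; being a diffeomorphism of the ambient reduced tangent bundles, it maps the embedded submanifold $\widehat U$ onto $\widehat{\overline U}=\bigcup_{\overline X\in\overline U}\widehat T\overline X$, which is consequently a regular submanifold of $\widehat T\overline{\mathcal U}$. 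Using the injectivity of $d\phi$ and the identity $T_{\phi(\gamma)}\overline X=d\phi(T_\gamma X)$ one checks that $\overline U$ is moreover normal, hence a regular open set with $\overline U\cong\widehat{\overline U}/\overline{\mathcal D}$. As $d\phi$ carries the foliation $\mathcal D$ to $\overline{\mathcal D}$ it descends to a diffeomorphism of the quotients, which under these identifications is precisely $\Phi|_U$. Because regular open sets cover $\Sigma$, $\Phi$ is a local diffeomorphism, and being a homeomorphism it is a diffeomorphism; the same reasoning applied to $\phi^{-1}$ settles $\Phi^{-1}$. Consequently $\varphi=\overline P\circ\Phi\circ S$ is a diffeomorphism onto $\overline M$.

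It remains to prove that $\varphi$ is conformal, and here the key point is that $\varphi$ carries light rays to light rays as point sets of $M$. Writing $\overline\gamma=\phi(\gamma)$ for $\gamma\in\mathcal N$, the definition of the sky gives, for every $z\in M$,
\[
z\in\gamma \iff \gamma\in S(z)\iff \phi(\gamma)\in\phi(S(z))=\overline S(\varphi(z))\iff \varphi(z)\in\overline\gamma,
\]
where $\phi(S(z))=\overline S(\varphi(z))$ follows from $\varphi=\overline P\circ\Phi\circ S$. Thus $\varphi$ maps the point set of $\gamma$ bijectively onto the point set of $\overline\gamma$, and being a diffeomorphism it takes the image of the null geodesic $\gamma$ onto the image of the null geodesic $\overline\gamma$. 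Hence for $x=\gamma(0)$ the differential $d\varphi_x$ carries the line $T_x\gamma=\Span\{\gamma'(0)\}$ onto $T_{\varphi(x)}\overline\gamma=\Span\{\overline\gamma'(0)\}$, so $d\varphi_x(\gamma'(0))$ is a null vector. As $\gamma$ runs over the sky $S(x)$ the directions $[\gamma'(0)]$ exhaust the future null directions at $x$, whence $d\varphi_x$ maps the null cone of $\mathbf g$ at $x$ into the null cone at $\varphi(x)$; running the same argument for $\phi^{-1}$ gives the reverse inclusion, so $d\varphi_x$ maps the null cone onto the null cone. Since a linear isomorphism between Lorentzian vector spaces taking one null cone onto the other is conformal, we conclude $\varphi^*\overline{\mathcal C}=\mathcal C$, i.e.\ $\varphi$ is a conformal diffeomorphism onto its image, and therefore $M$ is recoverable.

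I expect the main obstacle to be the second step: verifying that the tangent map $d\phi$ genuinely preserves the regular--submanifold--plus--foliation data $(\widehat U,\mathcal D)$ defining the intrinsic smooth structure of $\Sigma$, that the image $\overline U$ is again a regular open set, and that $\Phi$ thereby descends to a diffeomorphism of the quotients. Once the smooth structure is transported, the conformality of the third step is essentially formal, as it follows from the fact that a sky--preserving diffeomorphism automatically preserves the collection of light rays together with the sky--point duality.
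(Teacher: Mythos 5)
Your proposal is correct and follows essentially the same route as the paper: you show $\Phi$ is a diffeomorphism by transporting the regular-submanifold-plus-foliation data $(\widehat U,\mathcal D)$ with $d\phi$ and descending to the quotients, and then establish conformality by showing $\varphi$ carries light rays to light rays via the sky--point duality. The only cosmetic differences are that you derive the regularity of $\Phi(U)$ from $\phi$ rather than choosing a regular $\overline U$ independently, and that you spell out the null-cone-preservation argument which the paper delegates to a citation of Hawking--Ellis.
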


\begin{proof}
Let $\left( \overline{M},\overline{\mathcal{C}}\right) $ be another
strongly causal manifold with $\left( \overline{\mathcal{N}},\overline{%
\Sigma }\right) $ its corresponding spaces of  light rays and skies, such
that $\phi :\mathcal{N}\rightarrow \overline{\mathcal{N}}$ a diffeomorphism
verifying $\phi \left( \Sigma \right) =\overline{\Sigma }$. It is clear that
the differential $\phi _{\ast }:T\mathcal{N}\rightarrow T\overline{\mathcal{N%
}}$ is a diffeomorphism.
Consider $Q\in \Sigma $ and $\overline{Q}=\phi \left( Q\right) \in \overline{%
\Sigma }$. By theorem \ref{theorem1}, there exist regular neighbourhoods $%
U\subset \Sigma $ of $Q$ and $\overline{U}\subset \overline{\Sigma }$ of $%
\overline{Q}$ that, by corollary \ref{corollary1}, we can assume $\overline{U%
}=\Phi \left( U\right) $. Then $\phi \left( \mathcal{U}\right) =\overline{%
\mathcal{U}}$ with $\mathcal{U}=\bigcup\limits_{X\in U}X$ and $\overline{%
\mathcal{U}}=\bigcup\limits_{\overline{X}\in \overline{U}}\overline{X}$, and
hence, the restriction $\phi _{\ast }:\widehat{T}\mathcal{U}\rightarrow
\widehat{T}\overline{\mathcal{U}}$ is also a diffeomorphism and it can be
restricted again to $\phi _{\ast }:\widehat{U}\rightarrow \widehat{\overline{%
U}}$. Since
\begin{equation*}
\phi _{\ast }\left( \widehat{U}\right) =\phi _{\ast }\left( \bigcup\limits_{%
\overline{X}\in U}\widehat{T}X\right) =\bigcup\limits_{X\in U}\phi _{\ast
}\left( \widehat{T}X\right) =\bigcup\limits_{X\in U}\widehat{T}\phi \left(
X\right) =\widehat{\overline{U}}
\end{equation*}%
and the regularity of $U$ and $\overline{U}$, we have that $\widehat{U}$ and $%
\widehat{\overline{U}}$ are regular submanifolds of $\widehat{T}\mathcal{U}$
and $\widehat{T}\overline{\mathcal{U}}$ respectively. Then $\phi _{\ast }:%
\widehat{U}\rightarrow \widehat{\overline{U}}$ is a bijective restriction of
a diffeomorphism between two regular submanifolds of $\widehat{T}\mathcal{U}$
and $\widehat{T}\overline{\mathcal{U}}$, then $\phi _{\ast }:\widehat{U}%
\rightarrow \widehat{\overline{U}}$ is a diffeomorphism. Denoting by $\mathcal{D}=\{\widehat{T}X:X\in U\}$, and
$\overline{\mathcal{D}}=\{\widehat{T}\overline{X}:\overline{X}\in \overline{U%
}\}$ the distributions in $\widehat{U}$ and $\widehat{\overline{U}}$, we see
that $\phi _{\ast }\mathcal{D=}\overline{\mathcal{D}}$. Therefore $\phi
_{\ast }:\widehat{U}\rightarrow \widehat{\overline{U}}$ induces a smooth map
$\widehat{U}/\mathcal{D}\overset{\phi _{\ast }}{\rightarrow }\widehat{%
\overline{U}}/\overline{\mathcal{D}}$ and we have the following commutative
diagram%
\begin{equation*}
\begin{tabular}{ccc}
$\widehat{U}$ & $\overset{\phi _{\ast }}{\rightarrow }$ & $\widehat{%
\overline{U}}$ \\
$\downarrow $ &  & $\downarrow $ \\
$\widehat{U}/\mathcal{D}$ & $\overset{\phi _{\ast }}{\rightarrow }$ & $%
\widehat{\overline{U}}/\overline{\mathcal{D}}$ \\
$\downarrow $ &  & $\downarrow $ \\
$U$ & $\overset{\Phi }{\rightarrow }$ & $\overline{U}$%
\end{tabular}%
\end{equation*}%
(recall the proof of Theorem \ref{theorem2} to see that the lower vertical
arrows are diffeomorphisms). Therefore we conclude that $\Phi :U\rightarrow
\overline{U}$, and $\Phi :\Sigma \rightarrow \overline{\Sigma }$ are
diffeomorphisms. So, in virtue of corollary \ref{corollary3.12}, the map $%
\varphi =\overline{P}\circ \Phi \circ S:M\rightarrow \overline{M}$ is a
diffeomorphism. Now, we need to show that $\varphi $ maps  light rays of $%
M$ into  light rays of $\overline{M}$. We can consider all the null
geodesics in the skies of a given null geodesic $\gamma $, denoted as
\begin{equation*}
S\left( \gamma \right) =\{\beta \in \mathcal{N}:\exists \text{ }X\in \Sigma
\text{ such that }\gamma ,\beta \in X\}
\end{equation*}%
Then
\begin{equation*}
\Phi \left( S\left( \gamma \right) \right) =\phi \left( S\left( \gamma
\right) \right) =\{\phi \left( \beta \right) \in \overline{\mathcal{N}}%
:\exists \text{ }X\in \Sigma \text{ such that }\gamma ,\beta \in X\}
\end{equation*}%
and since $\phi $ is a diffeomorphism preserving skies
\begin{equation*}
\Phi \left( S\left( \gamma \right) \right) =\{\phi \left( \beta \right) \in
\overline{\mathcal{N}}:\exists \text{ }\Phi \left( X\right) \in \overline{%
\Sigma }\text{ such that }\phi \left( \gamma \right) ,\phi \left( \beta
\right) \in \Phi \left( X\right) \}
\end{equation*}%
therefore
\begin{equation*}
\Phi \left( S\left( \gamma \right) \right) =\overline{S}\left( \phi \left(
\gamma \right) \right)
\end{equation*}%
So, it implies $\varphi \left( \gamma \right) =\overline{P}\circ \Phi \circ
S\left( \gamma \right) =\overline{P}\circ \overline{S}\circ \phi \left(
\gamma \right) =\phi \left( \gamma \right) \in \overline{\mathcal{N}}$ is a
null geodesic. By \cite[section 3.2]{HE73}, $\varphi $ is a conformal
diffeomorphism.
\end{proof}

\section{Causality and Legendrian isotopies}
\label{section-leg-isot}

Let us recall first some basic concepts from contact geometry that we are going to relate to causality properties of
space--times.

Let $\left( Y,\mathcal{H}\right) $ be a co-oriented $\left(
2n-1\right) $--dimensional  contact manifold with contact distribution $\mathcal{H}=\text{ker }\alpha $
where $\alpha \in T^{\ast }Y$ is a contact 1--form which defines
the co-orientation.   A differentiable family $\{\Lambda _{s}\}_{s\in \left[
0,1\right] }$ of legendrian submanifolds is called a \emph{legendrian isotopy}.
It is possible to describe a legendrian isotopy by a parametrization $%
F:\Lambda _{0}\times \left[ 0,1\right] \rightarrow Y$ verifying $F\left(
\Lambda _{0}\times \{s\}\right) =\Lambda _{s}\subset Y$ where $s\in \left[
0,1\right] $.   Notice that we are assuming that the map $F_s \colon \Lambda_0 \to \Lambda_s$, given by
$F_s (\lambda) = F(s,\lambda)$ is a diffeomorphism for all $s \in [0,1]$.

\begin{definition}
A parametrization $F$ of a legendrian isotopy is
said to be \emph{non-negative} if $\left( F^{\ast }\alpha \right) \left( \frac{\partial
}{\partial s}\right) \geq 0$ and \emph{non-positive} if $\left( F^{\ast
}\alpha \right) \left( \frac{\partial }{\partial s}\right) \leq 0$.
\end{definition}

\begin{definition}
We will say that two legendrian isotopies are \emph{equivalent} if their
corresponding parametrizations $F,\widetilde{F}:\Lambda_0 \times \left[0,1%
\right] \rightarrow Y$ verify $F\left(\Lambda_0 \times \lbrace s\rbrace
\right)=\widetilde{F}\left(\Lambda_0 \times \lbrace s\rbrace \right)$ for
every $s\in \left[0,1\right]$.
\end{definition}

\begin{lemma}
\label{lemma00250} Let $F,\widetilde{F}:\Lambda _{0}\times \left[ 0,1\right]
\rightarrow Y$ be two parametrizations of a legendrian isotopy $\{\Lambda
_{s}\}_{s\in \left[ 0,1\right] }$. If $F$ is non-negative
(respectively non-positive) then so is $\widetilde{F}$.
\end{lemma}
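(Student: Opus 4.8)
The plan is to reduce the statement to the elementary observation that the pullback of the contact form by any parametrization of the isotopy has no component along the leaves $\Lambda_0\times\{s\}$, so that a reparametrization that respects the slicing cannot change the sign of its $ds$-component.

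First I would record the legendrian condition in coordinates. For any parametrization $F\colon \Lambda_0\times[0,1]\to Y$ and any vector $w$ tangent to a slice $\Lambda_0\times\{s\}$, the image $dF(w)$ is tangent to the legendrian submanifold $\Lambda_s$, hence lies in $\mathcal{H}=\ker\alpha$; therefore $(F^*\alpha)(w)=0$. Consequently $F^*\alpha$ annihilates every slice direction, and writing $a:=(F^*\alpha)(\partial/\partial s)$ we get $F^*\alpha=a\,ds$ on $\Lambda_0\times[0,1]$. By definition, $F$ is non-negative precisely when $a\ge 0$ everywhere, and non-positive when $a\le 0$ everywhere.

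Next I would compare the two parametrizations. Since $F$ and $\widetilde F$ parametrize the \emph{same} isotopy, $F(\Lambda_0\times\{s\})=\widetilde F(\Lambda_0\times\{s\})=\Lambda_s$ for each $s$, so $g_s:=F_s^{-1}\circ\widetilde F_s$ is a well-defined diffeomorphism of $\Lambda_0$ and $\phi(\lambda,s):=(g_s(\lambda),s)$ satisfies $\widetilde F=F\circ\phi$. The map $\phi$ is a diffeomorphism of $\Lambda_0\times[0,1]$ that preserves the parameter, i.e. $s\circ\phi=s$. Its smoothness---the only genuinely technical point---is obtained by passing to the joint maps $\hat F,\hat{\widetilde F}\colon \Lambda_0\times[0,1]\to Y\times[0,1]$, $\hat F(\lambda,s)=(F(\lambda,s),s)$, which are diffeomorphisms onto the common total space $\mathcal{L}=\{(y,s):y\in\Lambda_s\}$ of the isotopy (injectivity and bijectivity follow from each $F_s$ being a diffeomorphism, and smoothness of the inverses from the differentiable dependence on $s$); then $\phi=\hat F^{-1}\circ\hat{\widetilde F}$.

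Finally I would conclude by a one-line computation. Using $\widetilde F=F\circ\phi$ and $s\circ\phi=s$ (so that $\phi^*ds=ds$), we obtain $\widetilde F^*\alpha=\phi^*(a\,ds)=(a\circ\phi)\,ds$, whence $(\widetilde F^*\alpha)(\partial/\partial s)=a\circ\phi$. Since $(a\circ\phi)(\lambda,s)=a(g_s(\lambda),s)$, the hypothesis that $a\ge 0$ everywhere gives $a\circ\phi\ge 0$ everywhere, and symmetrically for $\le 0$. Therefore $\widetilde F$ is non-negative (resp.\ non-positive) whenever $F$ is. The main obstacle is establishing the smoothness of the reparametrization $\phi$ across all slices at once; once the legendrian condition has pinned $F^*\alpha$ to the form $a\,ds$, the remainder is formal.
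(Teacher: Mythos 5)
Your proposal is correct and follows essentially the same route as the paper: both construct the slice-preserving reparametrization $\phi$ relating the two parametrizations, establish its smoothness via the joint map $(z,s)\mapsto(\widetilde F(z,s),s)$ and the inverse function theorem, and use the Legendrian condition to discard the slice-direction contribution. Your packaging of the Legendrian condition as $F^{\ast }\alpha =a\,ds$ before pulling back is a slightly cleaner way to organize the same computation the paper performs directly on $dF(\partial /\partial s)$.
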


\begin{proof}
Let us consider a legendrian isotopy $\lbrace \Lambda_s \rbrace_{s\in\left[%
0,1\right]}$ given by two parametrizations $F,\widetilde{F}:\Lambda_0 \times %
\left[0,1\right] \rightarrow Y$.  Let us define the maps $F_s,\widetilde{F}%
_s:\Lambda_0 \rightarrow \Lambda_s \subset Y$ for $s\in\left[0,1\right]$ by $%
F_s\left(\lambda\right)=F\left(\lambda,s\right)$ as before. Then we have that
\begin{equation*}
F\left(\lambda,s\right)=\widetilde{F}\left(\varphi\left(\lambda,s\right),s%
\right)
\end{equation*}
where $\varphi\left(\lambda,s\right)=\widetilde{F}_{s}^{-1}\circ
F\left(\lambda,s\right)$.   To check that $\varphi$ is differentiable,
consider the differentiable map $\Upsilon:\Lambda_0 \times \left[0,1\right]
\rightarrow \mathcal{N}\times \left[0,1\right]$ defined by $%
\Upsilon\left(z,s\right)=\left(\widetilde{F}\left(z,s\right),s\right)$ whose differential at any $\left(z,s\right)$ is given by:
\begin{equation*}
d\Upsilon_{\left(z,s\right)} = \left(
\begin{matrix}
d\widetilde{F}_{\left(z,s\right)} \\
\text{Id}_{s}%
\end{matrix}
\right) = \left(
\begin{matrix}
\left(d\widetilde{F}_s\right)_{z} & * \\
0 & \text{Id}_{s}%
\end{matrix}
\right)
\end{equation*}
and since $\widetilde{F}_{s}$ is a diffeomorphism, then $\left(d\Upsilon%
\right)_{\left(z,s\right)}$ is a isomorphism, therefore by the Inverse
Function Theorem, $\Upsilon$ is a local difeomorphism onto its image in $%
\left(z,s\right)$ and $\varphi$ can be written locally as:
\begin{equation*}
\varphi\left(z,s\right)= \pi \circ
\Upsilon^{-1}\left(F\left(z,s\right),s\right)
\end{equation*}
where $\pi:\Lambda_0 \times \left[0,1\right] \rightarrow \Lambda_0$ is the
canonical projection.

\medskip

Defining $\phi:\Lambda_0 \times \left[0,1\right] \rightarrow \Lambda_0
\times \left[0,1\right]$ as $\phi\left(\lambda,s\right)=
\left(\varphi\left(\lambda,s\right),s\right)$, we have
\begin{align}
dF_{\left(\lambda,s\right)}\left( \frac{\partial}{\partial s}%
\right)_{\left(\lambda,s\right)} &= d\left(\widetilde{F}\circ\phi\right)_{%
\left(\lambda,s\right)}\left( \frac{\partial}{\partial s}\right)_{\left(%
\lambda,s\right)} = d\widetilde{F}_{\left(\varphi\left(\lambda,s\right),s%
\right)}\left(d\phi_{\left(\lambda,s\right)}\left( \frac{\partial}{\partial s%
}\right)_{\left(\lambda,s\right)}\right) = \nonumber \\
&= d\widetilde{F}_{\left(\varphi\left(\lambda,s\right),s\right)}\left(\left(
\frac{\partial}{\partial s}\right)_{\left(\lambda,s\right)}+d\varphi_{\left(%
\lambda,s\right)}\left( \frac{\partial}{\partial s}\right)_{\left(\lambda,s%
\right)}\right) \, . \label{dFds}
\end{align}

Notice that $\alpha\left( d\widetilde{F}_{\left(\varphi\left(\lambda,s\right),s\right)}
d\varphi_{\left(\lambda,s\right)}\left( \partial/\partial s
\right)\right)=0$, since $d\widetilde{F}_{\left(\varphi\left(\lambda,s\right),s\right)}
d\varphi_{\left(\lambda,s\right)}\left( \partial / \partial s
\right)\in T_{\left(\varphi\left(\lambda,s\right),s\right)}\Lambda_s$
because $d\varphi_{\left(\lambda,s\right)}\left( \partial /\partial s%
\right)\in T_{\varphi\left(\lambda,s\right)}\Lambda_0$.
Now, applying $\alpha$ to both sides of eq. \eqref{dFds} we get:
\begin{equation*}
\alpha\left(dF_{\left(\lambda,s\right)}\left( \frac{\partial}{\partial s}%
\right)_{\left(\lambda,s\right)} \right)= \alpha\left(d\widetilde{F}%
_{\left(\varphi\left(\lambda,s\right),s\right)}\left( \frac{\partial}{%
\partial s}\right)_{\left(\lambda,s\right)}\right)
\end{equation*}
hence
\begin{equation*}
\left(F^{*}\alpha\right) \left(\frac{\partial}{\partial s}%
\right)=\alpha\left(F_{*}\left( \frac{\partial}{\partial s}\right) \right)=
\alpha\left(\widetilde{F}_{*}\left( \frac{\partial}{\partial s}%
\right)\right)= \left(\widetilde{F}^{*}\alpha\right) \left(\frac{\partial}{%
\partial s}\right)
\end{equation*}
therefore the sign of the parametrizations $F$ and $\widetilde{F}$ coincides.
\end{proof}

As it was discussed in the introduction we are interested in the study of legendrian isotopies in the space of null
geodesics $\mathcal{N}$ of a Lorentz manifold $M$.  Recall that, in this
case, the co-orientation is defined by using the criterion
that the sign of $J\left( \mathrm{mod}\gamma ^{\prime }\right) \in T_{\gamma
}\mathcal{N}$ is the sign of $\mathbf{g}\left( J,\gamma ^{\prime
}\right) $, which is unambiguously determined for vectors $J$ in the class $[J] = J + \mathcal{J}_{\mathrm{tan}}(\gamma)$, where $\gamma \in
\mathcal{N}$ and $\mathbf{g}\in \mathcal{C}$.

Again, because of the remark after eq. \eqref{tangent_sky} the sky $X_{0} = S(x_0)\in \Sigma $ for any $x_{0}\in M$ is a legendrian
submanifold of $\mathcal{N}$ diffeomorphic to $S_{0}=\lbrace \left[ u\right] :u\in
\mathbb{N}^{+}_{x_{0}} \rbrace = \mathbb{PN}^{+}_{x_{0}} \cong S^{m-2}$, then
given a legendrian isotopy $\{X_{s}\}_{s\in \left[ 0,1\right] }$ where $%
X_{s} $ is the sky of $x_{s}\in M$ for $s\in \left[ 0,1\right] $, a
parametrization $F$ for it can be found of the form:
\begin{equation*}
F:S_{0}\times \left[ 0,1\right] \rightarrow \mathcal{N}.
\end{equation*}

\begin{lemma}
Any differentiable curve $\mu :\left[ 0,1\right] \rightarrow M$ defines a
legendrian isotopy parametrized by the function $F^{\mu }:S_{0}\times \left[ 0,1\right] \rightarrow
\mathcal{N}$ given by:
\begin{equation*}
F^{\mu }\left( \left[ u\right] ,t\right) =\gamma _{\left[ u_{s}\right] }
\end{equation*}%
with $S_{0}=\lbrace\left[ u\right] :u\in \mathbb{N}^{+}_{\mu \left(
0\right) }\rbrace$ and $u_{s}\in \mathbb{N}^{+}_{\mu \left( s\right) }$ the parallel transport
of $u\in \mathbb{N}^{+}_{\mu \left( 0\right) }$ along $\gamma $.  Moreover $F^\mu$ is a legendrian
isotopy of skies and $F^\mu_s(S_0) = S(\mu(s))$.
\end{lemma}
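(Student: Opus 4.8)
The plan is to realise $F^{\mu}$ as the composition of the projectivised parallel transport along $\mu$ with the canonical projection $\sigma\colon \mathbb{PN}^{+}\to\mathcal{N}$, and then to invoke the already-established fact (the discussion following eq. \eqref{tangent_sky}) that every sky is a legendrian submanifold of $\mathcal{N}$. Concretely, for each $s$ let $P_{s}\colon T_{\mu(0)}M\to T_{\mu(s)}M$ denote parallel transport along $\mu$ from $\mu(0)$ to $\mu(s)$, so that $u_{s}=P_{s}(u)$. Since $P_{s}$ is a linear isometry it sends $\mathbb{N}^{+}_{\mu(0)}$ into $\mathbb{N}^{+}_{\mu(s)}$: it preserves the quadratic form $\mathbf{g}$, hence maps null vectors to null vectors, and by continuity in $s$ (starting from the identity at $s=0$) it preserves the future cone. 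Because $P_{s}$ is linear it commutes with positive dilations, so it descends to a diffeomorphism $[P_{s}]\colon \mathbb{PN}^{+}_{\mu(0)}\to\mathbb{PN}^{+}_{\mu(s)}$, $[u]\mapsto[u_{s}]$. Identifying $S_{0}$ with $\mathbb{PN}^{+}_{\mu(0)}$, we then have $F^{\mu}_{s}=\sigma\circ[P_{s}]$.

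First I would check the regularity of $F^{\mu}$ as a map $S_{0}\times[0,1]\to\mathcal{N}$. The parallel transport equation $Du_{s}/ds=0$ with initial datum $u$ is a linear ODE whose coefficients depend on $s$ with the regularity of $\mu$ and whose solution depends smoothly on the initial condition $u$; hence $(u,s)\mapsto u_{s}$ is (jointly) differentiable from $\mathbb{N}^{+}_{\mu(0)}\times[0,1]$ into $\mathbb{N}^{+}$, smoothly in $u$ and with the regularity of $\mu$ in $s$. Passing to directions and composing with the smooth submersion $\sigma$ gives the required regularity of $F^{\mu}([u],s)=\sigma([u_{s}])$, enough to make $\{X_{s}\}$ a differentiable family. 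Next, for fixed $s$, the map $F^{\mu}_{s}=\sigma\circ[P_{s}]$ is a diffeomorphism onto its image: $[P_{s}]$ is a diffeomorphism by the previous paragraph, and the restriction of $\sigma$ to a single fibre $\mathbb{PN}^{+}_{\mu(s)}$ is a diffeomorphism onto the sky $S(\mu(s))$ (this is exactly the one-to-one correspondence, recorded earlier, between $\pi^{-1}(p)$ and $S(p)$ that realises $S(p)\cong\mathbb{S}^{m-2}$). In particular $F^{\mu}_{s}(S_{0})=\sigma(\mathbb{PN}^{+}_{\mu(s)})=S(\mu(s))$, which proves the last assertion of the lemma.

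Finally, since each $\Lambda_{s}:=F^{\mu}_{s}(S_{0})=S(\mu(s))$ is a sky, it is a legendrian submanifold of $(\mathcal{N},\mathcal{H})$; and since $F^{\mu}$ is differentiable with every $F^{\mu}_{s}$ a diffeomorphism onto $\Lambda_{s}$, the family $\{\Lambda_{s}\}_{s\in[0,1]}$ is a differentiable family of legendrian submanifolds, i.e. a legendrian isotopy, parametrised by $F^{\mu}$. The only point requiring genuine care is the joint regularity together with the fibrewise diffeomorphism property of the composition $\sigma\circ[P_{s}]$: one must ensure that the projectivised parallel transport lands in a single fibre $\mathbb{PN}^{+}_{\mu(s)}$ on which $\sigma$ restricts to a diffeomorphism onto the sky, and that differentiable dependence on $(u,s)$ survives both the projectivisation and the passage to the quotient $\mathcal{N}$. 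All remaining verifications (preservation of nullity and of the future orientation, and linearity of $P_{s}$) are immediate.
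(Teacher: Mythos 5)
Your proposal is correct and follows essentially the same route as the paper: both realise $F^{\mu}$ as parallel transport (a linear isometry preserving the future null cone, hence descending to the projectivised bundle) composed with the projection onto $\mathcal{N}$, and both conclude by identifying each slice $F^{\mu}_{s}(S_{0})$ with the sky $S(\mu(s))$, which is already known to be legendrian. Your explicit continuity argument for preservation of the future cone and your remark that $\sigma$ restricted to a fibre is a diffeomorphism onto the sky are minor refinements of points the paper leaves implicit.
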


\begin{proof}
Let $\mathbf{g} \in \mathcal{C}$ be a metric in the space--time $M$ and
let $\mathcal{P}:T_{\mu\left(0\right)}M\times\left[0,1\right]\rightarrow TM$
be the parallel transport with respect to the Levi--Civita connection defined by $\mathbf{g}$
along $\mu$ given by $\mathcal{P}%
\left(u,s\right)=u_s\in T_{\mu\left(s\right)}M$. It is widely known that $%
\mathcal{P}$ is differentiable and the map $\mathcal{P}_s:T_{\mu\left(0%
\right)}M\rightarrow T_{\mu\left(s\right)}M$ defined by $\mathcal{P}%
_s\left(u\right)=\mathcal{P}\left(u,s\right)$ is a linear isometry. Let us
also consider the submersion $p_{\mathbb{N}^{+}}:\mathbb{N}^{+}\rightarrow \mathcal{N%
}$ given by $p_{\mathbb{N}^{+}}\left(u\right)=\gamma_{\left[u\right]}$. By
composition of differentiable maps, $p_{\mathbb{N}^{+}}\circ \mathcal{P}$
is differentiable and because of the linearity of $\mathcal{P}$ it induces a map $F^\mu$
on the quotient space $\mathbb{PN}^+$.

Moreover, since $\mathcal{P}_{s}$ is a linear isometry, then
\begin{equation*}
\mathbf{g}\left( u_{s},u_{s}\right) =\mathbf{g}\left( u,u\right) =0, \qquad u \in \mathbb{N}^+
\end{equation*}%
for any metric $\mathbf{g}\in \mathcal{C}$, therefore $u_{s}\in \mathbb{N}^{+}%
_{\mu \left( s\right) }$ and $\mathcal{P}_{s}\left( \mathbb{N}^{+}_{\mu \left(
0\right) }\right) =\mathbb{N}^{+}_{\mu \left( s\right) }$. For $s\in \left[ 0,1%
\right] $ we have
\begin{align*}
F^{\mu }\left( S_{0}\times \{s\}\right) & =\{F^{\mu }\left( \left[ u\right]
,s\right) \in \mathcal{N}:u\in \mathbb{N}^{+}_{\mu \left( 0\right) }\}
= \{\gamma _{\left[ u_{s}\right] }\in \mathcal{N}:u\in \mathbb{N}^{+}_{\mu
\left( 0\right) }\}= \\
& =\{\gamma _{\left[ v\right] }\in \mathcal{N}:v\in \mathbb{N}^{+}_{\mu \left(
s\right) }\}
= S\left( \mu \left( s\right) \right)
\end{align*}%
Hence, $F^{\mu }$ is a legendrian isotopy.
\end{proof}

\begin{lemma}\label{lemma00300}
Let $F:S_0 \times \left[0,1\right] \rightarrow \mathcal{N}
$ be a legendrian isotopy such that $F\left(S_0 \times \lbrace
s\rbrace \right)= S\left(\mu\left(s\right)\right)\in \Sigma$. Then the curve
$\mu:\left[0,1\right]\rightarrow M$ is differentiable and $F$ is equivalent
to $F^{\mu}$.
\end{lemma}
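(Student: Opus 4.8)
The plan is to reduce everything to the differentiability of the curve $s\mapsto X_{s}:=F(S_{0}\times\{s\})=S(\mu(s))$ as a curve in the space of skies $\Sigma$. Once this is established, since by Corollary \ref{corollary3.12} the parachute map $P=S^{-1}:\Sigma\to M$ is a diffeomorphism and $\mu(s)=P(X_{s})$, the differentiability of $\mu$ follows at once; and since the previous lemma then guarantees that $F^{\mu}$ is a well-defined legendrian isotopy with $F^{\mu}(S_{0}\times\{s\})=S(\mu(s))=F(S_{0}\times\{s\})$, the equivalence $F\sim F^{\mu}$ is immediate from the definition of equivalent parametrizations. I would carry this out in three phases: first continuity of $s\mapsto X_{s}$, then its differentiability, and finally the (immediate) equivalence statement.

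For continuity I would use that $S:M\to\Sigma$ is a homeomorphism (Lemma \ref{lemma3.3}), so it suffices to check that $s\mapsto X_{s}$ is continuous into $(\Sigma,\mathfrak{T})$. Let $U=S(V)\subset\Sigma$ be open, so $\mathcal{U}=\bigcup_{X\in U}X$ is open in $\mathcal{N}$ and $\sigma^{-1}(\mathcal{U})=\pi_{M}^{-1}(V)$; since $\sigma^{-1}(X_{s})=\pi_{M}^{-1}(\mu(s))$ is the full fibre over $\mu(s)$, the containment $X_{s}\subset\mathcal{U}$ is equivalent to $\mu(s)\in V$, i.e. to $X_{s}\in U$. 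Now if $X_{s_{0}}\in U$ then $S_{0}\times\{s_{0}\}\subset F^{-1}(\mathcal{U})$, which is open because $F$ is continuous; as $S_{0}\cong\mathbb{S}^{m-2}$ is compact, the tube lemma provides an open interval $I\ni s_{0}$ with $S_{0}\times I\subset F^{-1}(\mathcal{U})$, whence $X_{s}\subset\mathcal{U}$ and so $X_{s}\in U$ for all $s\in I$. This shows the preimage of $U$ is open, proving continuity of $s\mapsto X_{s}$, and hence of $\mu$.

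The differentiability is the heart of the argument, and here I would exploit the concrete description of the smooth structure on $\Sigma$ obtained in Theorem \ref{theorem2}. Using continuity, fix $s_{0}$, choose by Theorem \ref{theorem1} a regular open neighbourhood $U=S(V)\ni X_{s_{0}}$ and an interval $I\ni s_{0}$ with $X_{s}\in U$ for $s\in I$. On such a regular open set the smooth structure is the one transported by the diffeomorphism $\widetilde{S}:V\to\widetilde{U}=\widehat{U}/\mathcal{D}$, where $\widehat{U}=\bigcup_{X\in U}\widehat{T}X$ is the regular submanifold of $\widehat{T}\mathcal{U}$ foliated by the leaves $\widehat{T}X$, $q:\widehat{U}\to\widetilde{U}$ is the associated submersion, and $\widetilde{S}(\mu(s))=\widehat{T}X_{s}=q(J)$ for every $J\in\widehat{T}X_{s}$. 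The idea is to lift $s\mapsto\widehat{T}X_{s}$ to a genuinely differentiable curve in $\widehat{U}$ by differentiating $F$ in the sphere directions: fixing a nonzero $\zeta\in T_{[u_{0}]}S_{0}$ and setting $J(s)=dF_{s}(\zeta)\in T_{F([u_{0}],s)}\mathcal{N}$, the smoothness of $F$ makes $s\mapsto J(s)$ a differentiable curve in $T\mathcal{N}$, while the fact that each $F_{s}:S_{0}\to X_{s}$ is a diffeomorphism forces $J(s)\neq0$ and $J(s)\in\widehat{T}X_{s}\subset\widehat{U}$. Composing with the submersion $q$ then yields a differentiable curve $q(J(s))=\widehat{T}X_{s}=\widetilde{S}(\mu(s))$ in $\widetilde{U}$, so $\mu|_{I}=\widetilde{S}^{-1}\circ q\circ J$ is differentiable; as $s_{0}$ was arbitrary, $\mu$ is differentiable on all of $[0,1]$.

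The main obstacle, and the only delicate point, is precisely this passage from the differentiability of $F$ to that of $s\mapsto X_{s}$ in $\Sigma$: the smooth structure on $\Sigma$ is defined only indirectly through the quotient $\widehat{U}/\mathcal{D}$ of reduced tangent bundles of skies, so one cannot differentiate $\mu$ directly and must instead produce an explicit differentiable lift into $\widehat{U}$, which is exactly what $J(s)=dF_{s}(\zeta)$ furnishes. With $\mu$ now known to be differentiable, $F^{\mu}$ is defined by the previous lemma and satisfies $F^{\mu}(S_{0}\times\{s\})=S(\mu(s))=F(S_{0}\times\{s\})$ for every $s$, so $F$ and $F^{\mu}$ are equivalent, completing the proof.
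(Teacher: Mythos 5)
Your proof is correct and follows essentially the same route as the paper's: both differentiate $F$ in a sphere direction to obtain a differentiable curve $s\mapsto (dF_s)_{z_0}(\xi)$ of sky-tangent Jacobi fields, use the regularity of $\widehat{U}$ from Theorem \ref{theorem1} to conclude that this curve is differentiable as a curve in $\widehat{U}$, and recover $\mu$ by projecting back to the base (the paper reads off the coordinate $x$ of $j(s)$ in the chart $\overline{\varphi}$, which is the same map as your $\widetilde{S}^{-1}\circ q$). Your explicit tube-lemma argument for the continuity of $s\mapsto X_s$, needed to guarantee that $(dF_s)_{z_0}(\xi)$ actually lies in $\widehat{U}$ for $s$ near $s_0$, is a welcome detail that the paper passes over silently.
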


\begin{proof}
Let us define the map $F_{s}:S_{0}\rightarrow S\left( \mu \left( s\right)
\right) \subset \mathcal{N}$ given by $F_{s}\left( z\right) =F\left(
z,s\right) $ for $s\in \left[ 0,1\right] $. It is clear that $F_{s}$ is
differentiable for any $s\in \left[ 0,1\right] $. Now, take any $z_{0}\in
S_{0}$ and $\xi \in T_{z_{0}}S_{0}$. Since $F$ and $F_{s}$ are
differentiable maps, then the curve
\begin{equation*}
j\left( s\right) =\left( dF_{s}\right) _{z_{0}}\left( \xi \right) \in
T_{F\left( z_{0},s\right) }S\left( \mu \left( s\right) \right)
\end{equation*}%
is also differentiable in $\widehat{T}\mathcal{N}$ and $j\left(
s\right) $ is a Jacobi field along the null geodesic $F\left( z_{0},s\right)
\in \mathcal{N}$ for each $s\in \left[ 0,1\right] $. Let $s_{0}\in \left[
0,1\right] $ and $U=S\left( V\right) $ be a regular open neighbourhood of $%
\mu \left( s_{0}\right) $ . Let $\left( \widehat{U},\overline{\varphi }%
=\left( x,u,v\right) \right) $ and $\left( V,\varphi =x\right) $ be
coordinate charts as in theorem \ref%
{theorem1}. Then, since $j$ is differentiable, and $\widehat{U}$ is a
neighbourhood of $j\left( s_{0}\right) $ in $\widehat{T}\mathcal{N}$ we
conclude that $j\left( s\right) \in \widehat{U}$ for $s$ close to $%
s_{0}$, is differentiable and $\mu \left( s\right) =\varphi ^{-1}\circ
x\left( j\left( s\right) \right) \in V$.
Therefore $\mu $ is differentiable.
\end{proof}

Now, we need a simple result on the geometry of causal vectors on Lorentz manifolds that we state as the following technical lemma.

\begin{lemma}
\label{lemma00350} Let $M$ be a Lorentz manifold and $p\in M$. If $v\neq 0$
is a vector in $T_p M$ verifying $\mathbf{g}\left(u,v\right)\geq 0$ for any $%
u\in \mathbb{N}^{+}_{p}$ future, then $v$ is causal past.
\end{lemma}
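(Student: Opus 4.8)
The plan is to reduce the statement to elementary linear algebra in the single tangent space $T_pM$ by fixing an orthonormal frame and extremizing the pairing $\mathbf{g}(u,v)$ over the future null cone. First I would choose an orthonormal frame $\{E_1,\dots,E_m\}$ of $T_pM$ with $E_1$ future-directed timelike, so that in this basis $\mathbf{g}=\mathrm{diag}(-1,1,\dots,1)$, matching the normalization $(\mathbf{u}^1)^2=\sum_{j\ge 2}(\mathbf{u}^j)^2$ used for null vectors in Section 2. Writing $v=\sum_j v^jE_j$ and $u=\sum_j u^jE_j$, the pairing becomes $\mathbf{g}(u,v)=-u^1v^1+\sum_{j\ge 2}u^jv^j$. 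Since the future null cone $\mathbb{N}^{+}_{p}$ is invariant under positive scaling and the hypothesis $\mathbf{g}(u,v)\ge 0$ is homogeneous in $u$, it suffices to test it on the normalized future null vectors $u=E_1+\sum_{j\ge 2}u^jE_j$ with $\sum_{j\ge 2}(u^j)^2=1$; for these the hypothesis reads $-v^1+\langle \mathbf{u}_s,\mathbf{v}_s\rangle\ge 0$, where $\mathbf{v}_s=(v^2,\dots,v^m)$, $\mathbf{u}_s=(u^2,\dots,u^m)$, and $\langle\cdot,\cdot\rangle$ denotes the Euclidean inner product on the unit sphere $\mathbb{S}^{m-2}$.

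Next I would extremize the left-hand side over $\mathbf{u}_s\in\mathbb{S}^{m-2}$. By Cauchy--Schwarz the minimum of $\langle\mathbf{u}_s,\mathbf{v}_s\rangle$ equals $-|\mathbf{v}_s|$, attained by taking $\mathbf{u}_s$ antiparallel to $\mathbf{v}_s$ (and by any unit vector when $\mathbf{v}_s=0$), so the entire family of inequalities collapses to the single sharp bound $v^1\le -|\mathbf{v}_s|$. From this two conclusions follow at once: first, $\mathbf{g}(v,v)=-(v^1)^2+|\mathbf{v}_s|^2\le 0$, so that $v$ is causal; second, $v^1\le -|\mathbf{v}_s|\le 0$, and $v^1=0$ would force $|\mathbf{v}_s|=0$ and hence $v=0$, contradicting $v\neq 0$, so in fact $v^1<0$.

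Finally, to fix the time-orientation I would compute $\mathbf{g}(v,E_1)=-v^1>0$. Since $E_1$ is future-directed timelike, a causal vector $w$ is future-directed precisely when $\mathbf{g}(w,E_1)<0$ and past-directed when $\mathbf{g}(w,E_1)>0$; hence $v$ is past-directed, and combined with the previous paragraph $v$ is causal past, as claimed. The computation is otherwise routine; the one step demanding care is the extremization, namely recognizing that testing the hypothesis against the single antipodal direction $\mathbf{u}_s=-\mathbf{v}_s/|\mathbf{v}_s|$ is exactly what converts the continuum of inequalities into the clean bound $v^1\le -|\mathbf{v}_s|$, together with tracking the sign conventions carefully so that $v^1<0$ is correctly interpreted as past- rather than future-orientation.
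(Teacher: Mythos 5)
Your proof is correct, but it takes a genuinely different route from the paper's. The paper argues by contrapositive and without coordinates: starting from a spacelike $v$ and an auxiliary future timelike $z$, it solves the quadratic $\mathbf{g}(z+\lambda v,z+\lambda v)=0$ to produce two explicit null vectors $u_1,u_2$, verifies that $\mathbf{g}(u_2,v)<0$ and that $u_2$ is future (with a small case analysis on the sign of $\mathbf{g}(z,v)$), thereby concluding that the hypothesis forces $v$ to be causal; it then rules out the future orientation by noting that a causal future $v$ would satisfy $\mathbf{g}(u,v)\leq 0$ for all future null $u$, forcing $v=0$. You instead fix an orthonormal frame, use homogeneity to restrict to the normalized null cone $u=E_1+\mathbf{u}_s$ with $\mathbf{u}_s\in\mathbb{S}^{m-2}$, and extremize via Cauchy--Schwarz to collapse the continuum of inequalities into the single sharp bound $v^1\leq -\lvert\mathbf{v}_s\rvert$, from which causality ($\mathbf{g}(v,v)\leq 0$), nonvanishing of $v^1$, and past-orientation ($\mathbf{g}(v,E_1)=-v^1>0$) all drop out at once. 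Your argument is shorter, avoids the case analysis, and in fact exhibits the hypothesis as exactly equivalent to $v$ being causal past; the paper's version has the mild advantage of being frame-free and of explicitly constructing the witnessing null direction, but both proofs are complete and correct, including your use of the standard fact that a nonzero causal $w$ satisfies $\mathbf{g}(w,E_1)\neq 0$ with sign determining its time-orientation.
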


\begin{proof}
First, we will see that if $v\in T_p M$ is spacelike, then there exists $%
u\in T_p M$ null future verifying $\mathbf{g}\left(u,v\right) <0$. So, let $%
v\in T_p M$ be spacelike and take some $z\in T_p M$ timelike future, then
since $\mathbf{g}\left(z,z\right)<0$ and $\mathbf{g}\left(v,v\right)>0$, the
equation
\begin{equation*}
\mathbf{g}\left(z+\lambda v,z+\lambda v\right)= \mathbf{g}%
\left(z,z\right)+2\lambda \mathbf{g}\left(z,v\right) +\lambda^2\mathbf{g}%
\left(v,v\right)=0
\end{equation*}
has two solutions $\lambda_1 , \lambda_2$ due to $\left(2 \mathbf{g}%
\left(z,v\right)\right)^2 - 4 \mathbf{g}\left(z,z\right) \mathbf{g}%
\left(v,v\right)>0$. These solutions can be written as
\begin{equation*}
\lambda_1=-\frac{\mathbf{g}\left(z,v\right)}{\mathbf{g}\left(v,v\right)} +
\sqrt{\frac{\mathbf{g}\left(z,v\right)^2}{\mathbf{g}\left(v,v\right)^2}-%
\frac{\mathbf{g}\left(z,z\right)}{\mathbf{g}\left(v,v\right)}}
\end{equation*}
\begin{equation*}
\lambda_2=-\frac{\mathbf{g}\left(z,v\right)}{\mathbf{g}\left(v,v\right)} -
\sqrt{\frac{\mathbf{g}\left(z,v\right)^2}{\mathbf{g}\left(v,v\right)^2}-%
\frac{\mathbf{g}\left(z,z\right)}{\mathbf{g}\left(v,v\right)}}
\end{equation*}
For $i=1,2$, let $u_i=z+\lambda_i v$ be the corresponding null vectors. We have
that
\begin{equation*}
\mathbf{g}\left(u_i,v\right)=\mathbf{g}\left(z,v\right)+\lambda_i\mathbf{g}%
\left(v,v\right)=\left(-1\right)^{i+1} \mathbf{g}\left(v,v\right) \sqrt{%
\frac{\mathbf{g}\left(z,v\right)^2}{\mathbf{g}\left(v,v\right)^2}-\frac{%
\mathbf{g}\left(z,z\right)}{\mathbf{g}\left(v,v\right)}}
\end{equation*}
hence $\mathbf{g}\left(u_2,v\right)<0$.

Let us see now that $u_2$ is null future. Since
\begin{equation*}
\mathbf{g}\left(u_1,u_2\right)=2\left[\mathbf{g}\left(z,z\right)-\frac{%
\mathbf{g}\left(v,z\right)^2}{\mathbf{g}\left(v,v\right)}\right]<0
\end{equation*}
therefore $u_1$ and $u_2$ are in the same time--cone. Moreover
\begin{equation*}
\mathbf{g}\left(u_i,z\right)=\mathbf{g}\left(v,v\right)\left[ \frac{\mathbf{g%
}\left(z,z\right)}{\mathbf{g}\left(v,v\right)}-\frac{\mathbf{g}%
\left(z,v\right)^2}{\mathbf{g}\left(v,v\right)^2} \right]\pm \sqrt{\frac{%
\mathbf{g}\left(z,v\right)^2}{\mathbf{g}\left(v,v\right)^2}-\frac{\mathbf{g}%
\left(z,z\right)}{\mathbf{g}\left(v,v\right)}}\mathbf{g}\left(z,v\right)
\end{equation*}
with the positive sign corresponding to $i=1$ and the negative to $i=2$. It
can be observed that if $\mathbf{g}\left(z,v\right)>0$ then $\mathbf{g}%
\left(u_2,z\right)<0$ therefore $u_2$ is in the same time--cone of $z$, hence $%
u_2$ is null future. In case of $\mathbf{g}\left(z,v\right)<0$ we have that $%
\mathbf{g}\left(u_1,z\right)<0$, then $u_1$ (and also $u_2$) is in the same
time--cone of $z$, therefore $u_1$ and $u_2$ are null future.

At this point, we have proven the equivalent result: If for any $u\in T_p M$
null future $\mathbf{g}\left(u,v\right)\geq 0$ is verified, then $v\in T_p M$
is causal. But if $v$ is causal future, then $\mathbf{g}\left(u,v\right)\leq
0$, hence $v=0$ contradicting the hypothesis, therefore $v$ must be causal
past.
\end{proof}

Let us recall that a curve $\mu :\left[ a,b\right] \rightarrow M$ is a \emph{null curve}
if it is differentiable and $\mathbf{g}\left( \mu ^{\prime },\mu ^{\prime
}\right) =0$. Notice that this is a conformal property and $\mu$ doesn't have to be a
regular curve.

\begin{definition}
The set of all null curves $\mu:I\rightarrow M$ will be
denoted as $\mathfrak{L}\left(M\right)$. The subset of $\mathfrak{L}%
\left(M\right)$ consisting of all time--orientable (future or past) null curves $\mu$ will be
denoted as $\mathfrak{L}_c\left(M\right)$, i.e., $\mu \in \mathfrak{L}_c\left(M\right)$ if $\mu$ is differentiable, $\mathbf{g}\left( \mu ^{\prime },\mu ^{\prime
}\right) =0$ and either $\mu'(s) \in \mathbb{N}^+$ for all $s$ or $\mu'(s) \in \mathbb{N}^-$  for all $s$.
\end{definition}

\begin{proposition}\label{prop00300} 
The curve $\mu $ is causal past (respectively causal
future) if and only if $F^{\mu }$ is a non-negative (respectively
non-positive) legendrian isotopy.
\end{proposition}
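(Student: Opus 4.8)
The plan is to compute the quantity $\left(F^{\mu*}\alpha\right)\!\left(\partial/\partial s\right)$ governing the sign of the isotopy and to identify it with the pairing of $\mu'$ against future null directions, after which the proposition reduces to Lemma~\ref{lemma00350} together with the reversed Cauchy--Schwarz inequality for causal vectors. Here $\alpha$ denotes a contact form defining the co-orientation of $\mathcal{H}$, so that on a class $[J]\in T_\gamma\mathcal{N}$ one may take $\alpha([J])=\mathbf{g}\left(J,\gamma'\right)$, which is well defined modulo $\mathcal{J}_{\mathrm{tan}}(\gamma)$ and carries exactly the co-orientation sign prescribed in Section~\ref{section-leg-isot}.

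First I would fix $[u]\in S_0$ and $s_0\in[0,1]$ and describe $F^\mu$ near $s_0$ by the geodesic variation $\mathbf{f}(s,t)=\exp_{\mu(s_0+s)}\!\left(t\,u_{s_0+s}\right)$, whose $s$-curve at fixed $t$ traces $F^\mu([u],s_0+s)=\gamma_{[u_{s_0+s}]}$. The velocity $\left(F^\mu\right)_*\!\left(\partial/\partial s\right)$ at $\gamma_{[u_{s_0}]}$ is then the class of the Jacobi field $J$ of this variation. Applying Lemma~\ref{lemmaDC92} with $\lambda(s)=\mu(s_0+s)$ and $W(s)=u_{s_0+s}$ gives the value $\mu'(s_0)$ at the base point together with $DJ/dt=Du_s/ds|_{s_0}=0$, the latter precisely because $u_s$ is the parallel transport of $u$ along $\mu$. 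Since $\mathbf{g}\left(J,\gamma'\right)$ is constant along the geodesic by \eqref{orthogonal}, I may evaluate it at the base point, obtaining
\begin{equation*}
\left(F^{\mu*}\alpha\right)\!\left(\tfrac{\partial}{\partial s}\right)\Big|_{s_0}=\mathbf{g}\!\left(J,\gamma'\right)=\mathbf{g}\!\left(\mu'(s_0),u_{s_0}\right),
\end{equation*}
where $\gamma'$ at the base point is the representative $u_{s_0}\in\mathbb{N}^+_{\mu(s_0)}$.

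As $[u]$ runs over $S_0=\mathbb{PN}^+_{\mu(0)}$ and parallel transport is a linear isometry carrying $\mathbb{N}^+_{\mu(0)}$ onto $\mathbb{N}^+_{\mu(s_0)}$, the vectors $u_{s_0}$ exhaust all future null directions at $\mu(s_0)$. Hence $F^\mu$ is non-negative if and only if $\mathbf{g}\!\left(\mu'(s),w\right)\ge 0$ for every future null $w\in\mathbb{N}^+_{\mu(s)}$ and every $s$. If this holds and $\mu'(s)\ne 0$, Lemma~\ref{lemma00350} forces $\mu'(s)$ to be causal past, so $\mu$ is causal past; conversely, if $\mu$ is causal past then $-\mu'(s)$ is future causal, and the reversed Cauchy--Schwarz inequality for two future causal vectors gives $\mathbf{g}\!\left(-\mu'(s),w\right)\le 0$, i.e. $\mathbf{g}\!\left(\mu'(s),w\right)\ge 0$, for every future null $w$ (the case $\mu'(s)=0$ being trivial). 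This establishes the non-negative/causal-past equivalence; reversing every inequality---equivalently applying the argument to the future time-cone---yields the non-positive/causal-future statement verbatim. The one delicate point is the clean reduction in the second paragraph: it hinges on the parallel-transport condition $Du_s/ds=0$, which collapses the generically two-parameter initial data $(J,J')$ of the variation field to $J=\mu'$, $J'=0$ and thereby turns the contact pairing into the single geometric quantity $\mathbf{g}(\mu',u_s)$; everything else is then a pointwise statement about causal vectors.
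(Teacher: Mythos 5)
Your proposal is correct and follows essentially the same route as the paper: the same geodesic variation $\exp_{\mu(s)}(t\,u_s)$, the same use of Lemma \ref{lemmaDC92} and the parallel-transport condition to reduce the Jacobi data to $J=\mu'$, $J'=0$, the identification of the contact pairing with $\mathbf{g}(\mu'(s),u_s)$ via \eqref{orthogonal}, and Lemma \ref{lemma00350} for the converse. Your only additions are making explicit that the $u_s$ exhaust all future null directions and spelling out the reversed Cauchy--Schwarz inequality for the direction the paper states without comment; these are elaborations of the same argument rather than a different one.
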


\begin{proof}
Let us suppose that $\mu$ is causal past. Since $F^{\mu}\left(\left[u\right]%
,s\right)=\gamma_{\left[u_s\right]}$ then giving parameters to the geodesics
$\gamma_{\left[u_s\right]}$ we can write
\begin{equation*}
F^{\mu}\left(\left[u\right],s\right)\left(t\right)=\gamma_{\left[u_s\right]%
}\left(t\right)=\mathrm{exp}_{\mu\left(s\right)}\left(tu_s\right)
\end{equation*}
which is a null geodesic variation of the null geodesic $\gamma_{\left[%
u_{s_0}\right]}$ for every $s_0 \in \left[0,1\right]$. By Lemma \ref%
{lemmaDC92}, we have that the Jacobi field $J_{s_0}\left(t\right)$ defined by
this geodesic variation verifies
that $J_{s_0}\left(0\right)=\mu ^{\prime }\left(s_0\right)$ and $J^{\prime
}_{s_0}\left(0\right)=\left.\frac{D}{ds}\right|_{s=s_0}u_s$, and since $u_s$
is the parallel transport of $u$ along $\mu$, then $J^{\prime
}_{s_0}\left(0\right)=0$. Hence, since
\begin{equation*}
F_{*}^{\mu}\left(\frac{\partial}{\partial s}\right)_{\left(\left[u\right]%
,s_0\right)}= \left.\frac{\partial}{\partial s}\right|_{\left(\left[u\right]%
,s_0\right)}F^{\mu}\left(\left[u\right],s\right)=\left.\frac{\partial}{%
\partial s}\right|_{\left(s_0,t\right)}\left(\mathrm{exp}_{\mu\left(s%
\right)}\left(tu_s\right)\right)=J_{s_0}\left(t\right)
\end{equation*}
we have that
\begin{equation*}
\alpha\left(F_{*}^{\mu}\left(\frac{\partial}{\partial s}\right)
\right)_{\left(\left[u\right],s_0\right)} = \alpha\left(
J_{s_0}\left(t\right)\right)=\mathbf{g}\left(J_{s_0}\left(t\right),\gamma
^{\prime }_{\left[u_{s_0}\right]}\left(t\right)\right)=
\end{equation*}
\begin{equation*}
=\mathbf{g}\left(J_{s_0}\left(0\right),\gamma ^{\prime }_{\left[u_{s_0}%
\right]}\left(0\right)\right) =\mathbf{g}\left(\mu ^{\prime
}\left(s_0\right),u_{s_0}\right) \geq 0
\end{equation*}
since $\mu ^{\prime }\left(s_0\right)$ is causal past where it does not
vanish and $u_{s_0}$ null future. This shows that $F^{\mu}$ is a
non-negative legendrian isotopy.

Now, let us suppose that $F^{\mu}$ is non-negative. So, we have as before
\begin{equation*}
F^{\mu}\left(\left[u\right],s\right)\left(t\right)=\gamma_{\left[u_s\right]%
}\left(t\right)=\mathrm{exp}_{\mu\left(s\right)}\left(tu_s\right)
\end{equation*}
then if $\alpha\left(F_{*}^{\mu}\left(\frac{\partial}{\partial s}\right)
\right)_{\left(\left[u\right],s_0\right)}\geq 0$ for any $\left(\left[u%
\right],s_0\right)$, we have that
\begin{equation*}
0 \leq \alpha\left(F_{*}^{\mu}\left(\frac{\partial}{\partial s}\right)
\right)_{\left(\left[u\right],s_0\right)} = \mathbf{g}\left(\mu ^{\prime
}\left(s_0\right),u_{s_0}\right) .
\end{equation*}
Then because of Lemma \ref{lemma00350} we obtain that $\mu ^{\prime
}\left(s_0\right)$ is causal past provided that $\mu ^{\prime }\left(s_0\right)\neq 0$
for every $s_0 \in\left[0,1\right]$.
\end{proof}

\begin{corollary}
\label{cor00300} A legendrian isotopy of skies $\lbrace
S\left(\mu\left(s\right)\right) \rbrace_{s\in\left[0,1\right]}$ is non-negative
if and only if the curve $\mu:\left[0,1\right] \rightarrow M$ is causal past.
\end{corollary}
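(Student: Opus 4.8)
The plan is to obtain this corollary as an essentially formal assembly of the three preceding results: it upgrades Proposition \ref{prop00300}, which concerns the distinguished parametrization $F^{\mu}$, to a statement about the legendrian isotopy $\{S(\mu(s))\}_{s\in[0,1]}$ regarded as an \emph{unparametrized} family of skies. The one conceptual point to settle first is that the phrase ``non-negative legendrian isotopy of skies'' is meaningful, i.e. independent of the chosen parametrization; this is precisely the content of Lemma \ref{lemma00250}, which guarantees that if one parametrization of a legendrian isotopy is non-negative then so is every other.

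First I would start from an arbitrary parametrization $F\colon S_{0}\times[0,1]\to\mathcal{N}$ realizing the given isotopy of skies, so that $F(S_{0}\times\{s\})=S(\mu(s))$ for each $s$. Lemma \ref{lemma00300} then does two jobs at once: it guarantees that the underlying curve $\mu\colon[0,1]\to M$ is differentiable (so that $F^{\mu}$ is even defined, via the lemma constructing $F^{\mu}$), and it shows that $F$ is equivalent to $F^{\mu}$, since both trace out the same family of skies, $F(S_{0}\times\{s\})=F^{\mu}(S_{0}\times\{s\})=S(\mu(s))$.

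Next, since $F$ and $F^{\mu}$ are two parametrizations of one and the same legendrian isotopy, Lemma \ref{lemma00250} lets me transfer the sign: $F$ is non-negative if and only if $F^{\mu}$ is non-negative. Hence the isotopy $\{S(\mu(s))\}$ is non-negative (for one, equivalently every, parametrization) exactly when $F^{\mu}$ is. Finally I would invoke Proposition \ref{prop00300}, which establishes that $F^{\mu}$ is non-negative if and only if $\mu$ is causal past, and chain the two equivalences to obtain the claim. The causal future / non-positive case is identical with the signs reversed.

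The main obstacle is not in the final chaining, which is purely formal, but in having the right hypotheses in place for Lemma \ref{lemma00300}: one must know that an abstract legendrian isotopy whose leaves happen to be skies is forced to come from a \emph{differentiable} curve of base points, so that the machinery of $F^{\mu}$ and Proposition \ref{prop00300} applies. That differentiability is extracted in Lemma \ref{lemma00300} from the regular-open-set charts $(\widehat{U},\overline{\varphi})$ of Theorem \ref{theorem1}, by reading off the base point $\mu(s)=\varphi^{-1}\circ x(j(s))$ from a differentiable curve $j(s)$ of Jacobi fields; once that is granted, the sign computation of Proposition \ref{prop00300}, which rests on Lemma \ref{lemma00350}, carries all the geometric weight and the corollary follows immediately.
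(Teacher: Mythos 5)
Your proposal is correct and follows essentially the same route as the paper: Lemma \ref{lemma00300} to extract a differentiable $\mu$ with $F$ equivalent to $F^{\mu}$, Lemma \ref{lemma00250} to transfer the sign between equivalent parametrizations, and Proposition \ref{prop00300} to convert non-negativity of $F^{\mu}$ into past causality of $\mu$. The only detail the paper adds, which you elide, is the remark that Proposition \ref{prop00300} directly controls $\mu'$ only where it does not vanish, so one concludes by observing that $\mu$ is a union of causal past segments.
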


\begin{proof}
By Lemma \ref{lemma00300}, a legendrian isotopy of skies $F:S_{0}\times \left[ 0,1\right] \rightarrow
\mathcal{N}$ defines a differentiable curve $\mu :\left[ 0,1\right]
\rightarrow M$ such that $F$ is equivalent to $F^{\mu }$. By Lemma \ref%
{lemma00250}, $F^{\mu }$ is non-negative, then Proposition \ref{prop00300}
shows that every regular segment of $\mu $ is causal past, therefore $\mu $
is causal past because is the union of causal past segments.
\end{proof}

\section{Celestial curves and reconstruction theorem}

\begin{definition}
\label{definition4.1} A tangent vector $J\neq 0$ at $T_{\gamma}\mathcal{N}$
will be called a \emph{celestial vector} if there exists a sky $S\in \Sigma $
such that $J\in T_{\gamma}S\subset T\mathcal{N}$.
We will denote the set of all \emph{celestial vectors} by $\widehat{\Sigma }\subset T%
\mathcal{N}$  i.e. with the notation introduced in Section \ref{differentiable_structure}, $\widehat{\Sigma }%
=\bigcup\limits_{X\in \Sigma }\widehat{T}X\subset \widehat{T}\mathcal{N}$.

A differentiable curve $\Gamma :I\rightarrow \mathcal{N}$ is called a \emph{%
celestial curve} if $\Gamma ^{\prime }\left( s\right) \in \widehat{\Sigma }$
for every $s\in I$. We denote the set of celestial curves as $\mathfrak{C%
}\left( \mathcal{N}\right) $.
\end{definition}

\begin{lemma}
\label{lemma4.2} Let $\Gamma :\left[ a,b\right] \rightarrow \mathcal{N}$ be
a differentiable curve in $\mathcal{N}$ such that $\Gamma \left( s\right)
=\gamma_{s}\subset M$. Then there exists a geodesic variation $\mathbf{f}:%
\mathcal{W}_{0}\rightarrow M$, where $\mathcal{W}_{0}=\left\{ \left(
s,t\right) \in \left[ a,b\right]\times\mathbb{R}:t\in I_{s}\right\} $ and $%
I_{s}$ is an open neighbourhood of $0$, such that
\begin{equation*}
\mathbf{f}\left( s,t\right) =\gamma _{s}\left( t\right)
\end{equation*}%
for every $\left( s,t\right) \in \mathcal{W}_{0}$. Furthermore, $\mathcal{W}%
_{0}$ is open in $\left[ a,b\right]\times\mathbb{R}$.
\end{lemma}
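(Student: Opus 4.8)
The plan is to realize $\Gamma$ as a geodesic variation by choosing, smoothly in $s$, an initial future null vector generating each light ray $\gamma_{s}$, and then flowing it out with the exponential map. Concretely, I would first produce a smooth lift $\widetilde{\Gamma}\colon[a,b]\to\mathbb{N}^{+}$ of $\Gamma$ through the projection $\mathbb{N}^{+}\to\mathcal{N}$ that sends a future null vector to the light ray it generates (this projection factors through the submersion $\sigma$), i.e. a smooth assignment $s\mapsto\widetilde{\Gamma}(s)\in\mathbb{N}^{+}$ whose associated light ray is $\Gamma(s)=\gamma_{s}$. Local lifts are readily available from the atlas of Section \ref{charts}: on a chart $(\mathcal{U},\psi)$ the inverse of the diffeomorphism $\sigma|_{\mathbb{PN}^{+}(C)}\colon\mathbb{PN}^{+}(C)\to\mathcal{U}$ is a smooth local section of $\sigma$, and normalizing the $E_{1}$--component to $1$ (exactly the vector field $W$ constructed in the proof of Theorem \ref{theorem1}) lifts it to $\mathbb{N}^{+}$; thus $s\mapsto W(s)$ is smooth whenever $\Gamma$ remains in a single chart, with base point $p(s)=\gamma_{s}\cap C$.

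To globalize, I would cover the compact image $\Gamma([a,b])$ by finitely many such charts and partition $[a,b]$ into subintervals $[\tau_{k-1},\tau_{k}]$ each mapped by $\Gamma$ into a single chart. Since $\sigma$, and hence the projection $\mathbb{N}^{+}\to\mathcal{N}$, is a surjective submersion with connected fibres (the fibre over $\gamma$ is the set of positively scaled future tangents along $\gamma$, diffeomorphic to $\gamma\times\mathbb{R}^{+}$), the chartwise local lifts can be patched across the junction points $\tau_{k}$ into a single smooth lift $\widetilde{\Gamma}$. This is the standard path--lifting for submersions over an interval, where the only freedom to reconcile at each $\tau_{k}$ lies within the connected fibre.

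With $\widetilde{\Gamma}$ in hand I would define
\begin{equation*}
\mathbf{f}(s,t)=\exp_{\pi_{M}(\widetilde{\Gamma}(s))}\bigl(t\,\widetilde{\Gamma}(s)\bigr).
\end{equation*}
For fixed $s$ the slice $t\mapsto\mathbf{f}(s,t)$ is the null geodesic with initial velocity $\widetilde{\Gamma}(s)$, hence a parametrization of $\gamma_{s}$, so indeed $\mathbf{f}(s,t)=\gamma_{s}(t)$; and $\mathbf{f}$ is smooth as a composition of the smooth map $\widetilde{\Gamma}$ with the exponential map, so it is a genuine geodesic variation (of the type in Lemma \ref{lemmaDC92}) through light rays. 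For the domain I would consider the smooth map $G(s,t)=t\,\widetilde{\Gamma}(s)$ into $TM$ and set $\mathcal{W}_{0}=G^{-1}(\Omega)$, where $\Omega\subset TM$ is the open maximal domain of $\exp$. Then $\mathcal{W}_{0}$ is open in $[a,b]\times\mathbb{R}$, it contains $(s,0)$ for every $s$ since $G(s,0)=0\in\Omega$, and each slice $I_{s}=\{t:(s,t)\in\mathcal{W}_{0}\}$ is the maximal interval of definition of the corresponding geodesic, hence an open neighbourhood of $0$.

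The main obstacle is the globalization step: producing a single smooth lift $\widetilde{\Gamma}$ out of the chartwise sections, since the base points $\gamma_{s}\cap C$ and the scalings jump as one passes from one chart to the next. The content there is precisely that the fibres of $\mathbb{N}^{+}\to\mathcal{N}$ are connected, so the mismatch at each junction $\tau_{k}$ can be absorbed by a smooth rescaling and reparametrization supported near $\tau_{k}$; once the lift exists, the smoothness of $\mathbf{f}$ and the openness of $\mathcal{W}_{0}$ are routine.
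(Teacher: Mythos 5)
Your proposal is correct and follows essentially the same route as the paper's proof: lift $\Gamma$ to $\mathbb{N}^{+}$ by finitely many local sections of $\sigma$, glue the chartwise lifts across overlaps by smoothly interpolating the affine reparametrization (rescaling plus base-point shift along the ray) with a cutoff, and then set $\mathbf{f}(s,t)=\exp(t\,\widetilde{\Gamma}(s))$. The only cosmetic difference is that you obtain openness of $\mathcal{W}_{0}$ as the preimage of the open maximal domain of $\exp$, whereas the paper argues via a flow box for the geodesic spray; these are equivalent.
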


\begin{proof}
Let us consider the following differentiable maps: the canonical projections $\sigma:\mathbb{N}^{+}\rightarrow \mathcal{N}$ and $\pi^{\mathbb{N}^{+}}_{M}:\mathbb{N}^{+}\rightarrow M$ and the exponential map $\mathrm{exp}:J\times \mathbb{N}^{+}\rightarrow M$ defined by $\mathrm{exp}\left(t,v\right)=\mathrm{exp}_{\pi_{M}^{\mathbb{N}^{+}}\left(v\right)}\left(tv\right)$ with $J\subset \mathbb{R}$ an interval containing $0\in \mathbb{R}$.
Given the differentiable curve $\Gamma:\left[ a,b\right]\rightarrow \mathcal{N}$, a lift of $\Gamma$ can be constructed in $\mathbb{N}^{+}$ by local sections of $\sigma$.
By compactness of $\Gamma$, it is elementary to check that there exist differentiable local sections $s_i:U_i\rightarrow \mathbb{N}^{+}$ and a sequence of intervals $\lbrace I_i \rbrace_{i=1,\ldots,n}$ such that $\lbrace U_i\rbrace_{i=1,\ldots,n}$ is a finite covering of $\Gamma$ in $\mathcal{N}$, $\lbrace I_i \rbrace_{i=1,\ldots,n}$ is a covering of $\left[ a,b\right]$ such that $\Gamma\left(I_i\right)\subset U_i$ for every $i=1,\ldots,n$ with nonempty intersections $\emptyset\neq \left(a_i , b_i\right)=I_i \cap I_{i+1}$ for every $i=1,\ldots,n-1$ where $a_i\in I_i$ and $b_i\in I_{i+1}$. 
The restriction to $\Gamma\left(I_i\right)$ of every corresponding section defines a curve $\beta_i=\left. s_i\right|_{\Gamma}:I_i\rightarrow \mathbb{N}^{+}$ such that $\pi^{\mathbb{N}^{+}}_{M}\left(\beta_i\right)=\alpha_i\subset M$.
We can define variations $\mathbf{x}_i$ in $M$ from the lifts $s_i$ as $\mathbf{x}_i\left(s,t_i\right)=\mathrm{exp}\left(t_i,\beta_i\left(s\right)\right)=\mathrm{exp}_{\alpha_i\left(s\right)}\left(t_i\beta_i\left(s\right)\right)$.
These variations run through  light rays of segments of the curve $\Gamma\subset \mathcal{N}$.
Moreover we have that
\begin{equation}
\begin{tabular}{ccccccl}
    &  $\left.\Gamma\right|_{I_i}$     &  & $\left. s_i\right|_{\Gamma}$ &  & $\mathrm{exp}\left(t_i,\cdot\right)$ & \\
$I_i$ & $\longrightarrow$ & $\mathcal{N}$ & $\longrightarrow$ & $\mathbb{N}^{+}$ & $\longrightarrow$ & $M$ \\
$s$   & $\mapsto$ & $\Gamma\left(s\right)$ & $\mapsto$ & $\beta_i\left(s\right)$ & $\mapsto$ & $\mathbf{x}_i\left(s,t\right)=\mathrm{exp}_{\alpha_i\left(s\right)}\left(t_i\beta_i\left(s\right)\right)$
\end{tabular}
\end{equation}
is a composition of differentiable maps, then the variations $\mathbf{x}_i$ are differentiable.
We need to glue in a differentiable way all the $\mathbf{x}_i$.
For every $i=1,\ldots,n-1$ there exist differentiable functions $\lambda,\tau:\left(a_i,b_i\right)\rightarrow \mathbb{R}$ with $\lambda\left(s\right)> 0$ for all $s\in\left(a_i,b_i\right)$ such that 
\begin{equation}
\label{lemma6-2-variation}
\mathbf{x}_i\left(s,t_i\right)=\mathbf{x}_{i+1}\left(s,t_{i+1}\right)=\mathbf{x}_{i+1}\left(s,\lambda\left(s\right)t_i+\tau\left(s\right)\right).
\end{equation}
Let us take $c_i,d_i\in\left(a_i,b_i \right)$ such that $c_i<d_i$ and a $C^{\infty}$ function $\varphi_i$ verifying $\varphi_i\left(s\right)=0$ if $s\leq c_i$ and $\varphi_i\left(s\right)=1$ if $s\geq d_i$.
Define, then, the curve $\beta$ as
\begin{equation}
\beta\left(s\right) = \left\{
\begin{array}{lcl}
\left(1-\varphi_i\left(s\right)\left( \frac{\lambda\left(s\right) - 1}{\lambda\left(s\right)} \right)\right)\cdot \frac{\partial \mathbf{x}_i }{\partial t_i}\left(s,-\varphi_i\left(s\right)\frac{\tau\left(s\right)}{\lambda\left(s\right)}\right) & & \text{if } s\in \left[c_i,d_i\right] \\
\beta_i\left(s\right) & & \text{if } s\in \left(d_{i-1},c_i\right)
\end{array}
\right.
\end{equation}
with $i=1,\ldots,n-1$. Clearly, $\beta$ verifies that $\sigma\left(\beta\left(s\right)\right)=\gamma_s = \Gamma\left(s\right)$.
The curve
\begin{equation}\label{lemma6-2-curve}
s \mapsto\left(1-\varphi_i\left(s\right)\left( \frac{\lambda\left(s\right) - 1}{\lambda\left(s\right)} \right)\right)\cdot \frac{\partial \mathbf{x}_i }{\partial t_i}\left(s,-\varphi_i\left(s\right)\frac{\tau\left(s\right)}{\lambda\left(s\right)}\right)\in \mathbb{N}^{+}
\end{equation}
is defined and differentiable for all $s \in \left(a_i,b_i\right)$. Moreover, by \ref{lemma6-2-variation}, we have that 
\[
\frac{\partial \mathbf{x}_i}{\partial t_i}\left(s,t_i\right)=\lambda\left(s\right)\frac{\partial \mathbf{x}_{i+1}}{\partial t_{i+1}}\left(s,\lambda\left(s\right)t_i + \tau\left(s\right)\right)
\]
Then, if $s>d_i$ we have that $\varphi\left(s\right)=1$ and $\beta_{i+1}\left(s\right)= \frac{\partial \mathbf{x}_{i+1}}{\partial t_{i+1}}\left(s,0\right)$ and therefore 
\[
\left(1-\varphi_i\left(s\right)\left( \frac{\lambda\left(s\right) - 1}{\lambda\left(s\right)} \right)\right)\cdot \frac{\partial \mathbf{x}_i }{\partial t_i}\left(s,-\varphi_i\left(s\right)\frac{\tau\left(s\right)}{\lambda\left(s\right)}\right) =
\left(1-\left( \frac{\lambda\left(s\right) - 1}{\lambda\left(s\right)} \right)\right)\cdot \frac{\partial \mathbf{x}_i }{\partial t_i}\left(s,-\frac{\tau\left(s\right)}{\lambda\left(s\right)}\right)=
\]
\[
=\frac{1}{\lambda\left(s\right)}\cdot \frac{\partial \mathbf{x}_i }{\partial t_i}\left(s,-\frac{\tau\left(s\right)}{\lambda\left(s\right)}\right)= 
\frac{\partial \mathbf{x}_{i+1}}{\partial t_{i+1}}\left(s,0\right)= \beta_{i+1}\left(s\right)
\]
where we are taking $t_i=-\frac{\tau\left(s\right)}{\lambda\left(s\right)}$ and hence $t_{i+1}=\lambda\left(s\right)\left(-\frac{\tau\left(s\right)}{\lambda\left(s\right)} \right)+ \tau\left(s\right)=0$. This implies that for any $s>d_i$ the curve of \ref{lemma6-2-curve} coincides with $\beta_{i+1}\left(s\right)$, and moreover, it is trivial to observe that also coincides with $\beta_i\left(s\right)$ for $s<c_i$. Then $\beta$ is differentiable.
Now, if we denote $\alpha\left(s\right)=\pi^{\mathbb{N}^{+}}_{M}\left(\beta\left(s\right)\right)$ we can define the required variation $\mathbf{f}\left(s,t\right)=\mathrm{exp}_{\alpha\left(s\right)}\left(t\beta\left(s\right)\right)$.

To prove $\mathcal{W}_{0}$ is open, consider the geodesic spray $X_{\mathbf{g%
}}\in \mathfrak{X}\left( TM\right) $ and choose any $\left(
s_{0},t_{0}\right) \in \mathcal{W}_{0}$. The curve $\mathbf{f}\left(
s_{0},t\right) $ is a null geodesic passing through $\mathbf{f}\left(
s_{0},t_{0}\right) $, then the curve in $TM$ given by $\left( \mathbf{f}%
\left( s_{0},t\right) ,\frac{\partial \mathbf{f}}{\partial t}\left(
s_{0},t\right) \right) \in TM$ is an integral curve of $X_{\mathbf{g}}$
passing through $\left( \mathbf{f}\left( s_{0},t_{0}\right) ,\frac{\partial
\mathbf{f}}{\partial t}\left( s_{0},t_{0}\right) \right) $. By \cite[Theorem
4.1.5]{Ab88}, there exists an open neighbourhood $U_{0}$ of $\left( \mathbf{f%
}\left( s_{0},t_{0}\right) ,\frac{\partial \mathbf{f}}{\partial t}\left(
s_{0},t_{0}\right) \right) \in TM$ and an open interval $I$ such that the
flow $F$ of $X_{\mathbf{g}}$ is defined in $U_{0}\times I$. Restricting $F$
to the set
\begin{equation*}
\left\{ \left( \mathbf{f}\left( s,t_{0}\right) ,\frac{\partial \mathbf{f}}{%
\partial t}\left( s,t_{0}\right) \right) \in TM:s\in \left[ a,b\right]
\right\} \cap U_{0}
\end{equation*}%
then there exists an open neighbourhood $H_{0}$ of $s_{0}\in \left[
a,b\right] $ such that
\begin{equation*}
\mathcal{K}=\left\{ \left( \mathbf{f}\left( s,t_{0}\right) ,\frac{\partial
\mathbf{f}}{\partial t}\left( s,t_{0}\right) \right) \in TM:s\in
H_{0}\right\}
\end{equation*}%
is totally contained in $U_{0}$ so that the flow $F$ is defined in $\mathcal{%
K}\times I$. Therefore, $\mathbf{f}\left( s,t\right) $ (and also $\frac{%
\partial \mathbf{f}}{\partial t}\left( s,t\right) $) is defined in $%
H_{0}\times I$. Since $H_{0}\times I$ is an open neighbourhood of $\left(
s_{0},t_{0}\right) $ and since by definition of $\mathbf{f}$, it is
contained in $\mathcal{W}_{0}$, then we can conclude that $\mathcal{W}_{0}$
is open in $\left[ a,b\right] \times \mathbb{R}$.
\end{proof}

\begin{proposition}
\label{proposition4.3} If the curve $\Gamma :\left[ 0,1\right]
\rightarrow \mathcal{N}$ with $\Gamma \left( s\right) =\gamma _{s}\in
\mathcal{N}$ is celestial then there exists a
null curve $\mu :\left[ 0,1\right] \rightarrow M$ such that $\gamma
_{s}\left( \tau \right) =\exp _{\mu \left( s\right) }\left( \tau \sigma
\left( s\right) \right) $ where $\sigma \left( s\right) \in \mathbb{N}^{+}_{\mu
\left( s\right) }$ is a differentiable curve proportional to $\mu ^{\prime
}\left( s\right) $ wherever $\mu $ is regular.
\end{proposition}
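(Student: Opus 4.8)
The plan is to realise the celestial curve as a geodesic variation, to locate $\mu(s)$ as the point where the associated Jacobi field becomes tangent to a sky, and then to read off every claimed property by a single differentiation. First I would apply Lemma \ref{lemma4.2} to $\Gamma$ to obtain a geodesic variation $\mathbf{f}\colon\mathcal{W}_0\to M$ with $\mathbf{f}(s,t)=\gamma_s(t)$ on an open set $\mathcal{W}_0$, together with its Jacobi fields $J_s(t)=\partial\mathbf{f}/\partial s\,(s,t)$, so that $\Gamma'(s)=[J_s]\in T_{\gamma_s}\mathcal{N}$. Since $\Gamma$ is celestial, $\Gamma'(s)\in\widehat{\Sigma}$, i.e.\ for each $s$ there is a sky $S(\mu(s))$ with $\Gamma'(s)\in T_{\gamma_s}S(\mu(s))$; by \eqref{tangent_sky} this is exactly the condition that $J_s(\tau(s))$ be proportional to $\gamma_s'(\tau(s))$, where $\mu(s)=\gamma_s(\tau(s))$. (Being tangent to a sky, $[J_s]\in\mathcal{H}_{\gamma_s}$, so $J_s\perp\gamma_s'$ and the tangency condition is the vanishing of the screen component of $J_s$ at $\tau(s)$.) I would then define $\mu(s)$ to be this incidence point and set $\sigma(s)=\gamma_s'(\tau(s))=\partial\mathbf{f}/\partial t\,(s,\tau(s))\in\mathbb{N}^{+}_{\mu(s)}$, reparametrising each light ray so that $\gamma_s(\tau)=\exp_{\mu(s)}(\tau\sigma(s))$.

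The substantive step is to prove that $\mu$, and hence $\tau$ and $\sigma$, is differentiable, and this is where I expect the main obstacle. Fix $s_0$ and a regular neighbourhood $U=S(V)$ of $S(\mu(s_0))$ with $V$ a globally hyperbolic convex normal neighbourhood, as in Theorem \ref{theorem1}. Inside such a $V$ no Jacobi field can vanish (mod $\gamma'$) at two distinct points, so $\mu(s)$ is the \emph{unique} incidence point lying in $V$; this is the local uniqueness already used to justify Definition \ref{normal_neigh}. The difficulty is that, in the presence of conjugate points, $\Gamma'(s_0)$ may be tangent to several skies and $\widehat{\Sigma}$ need not be embedded near $\Gamma'(s_0)$, so one must check that the branch staying in $V$ persists for $s$ near $s_0$.

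I would handle this by exploiting the transversality built into the chart of Theorem \ref{theorem1}. Since $\Gamma'(s_0)\neq 0$, the representative vanishing at $\tau_0$ has $J_{s_0}'(\tau_0)$ non-proportional to $\gamma_{s_0}'(\tau_0)$, so the component of the screen field $[J_s(t)]$ along $[J_{s_0}'(\tau_0)]$ has non-zero $t$-derivative at $(s_0,\tau_0)$; the implicit function theorem then yields a smooth candidate $\theta(s)$ with $\theta(s_0)=\tau_0$ solving that scalar equation, and the celestial hypothesis (which guarantees a genuine zero for every $s$) together with compactness of the flow-box interval $I$ and local uniqueness in $V$ forces the true zero to coincide with $\theta(s)$ for $s$ near $s_0$. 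Equivalently, one shows that the differentiable curve $s\mapsto\Gamma'(s)$ remains in the regular submanifold $\widehat{U}\subset\widehat{T}\mathcal{U}$ for $s$ near $s_0$; then $\mu(s)=\varphi^{-1}(x(\Gamma'(s)))$ is obtained from the base-point ($x$-)part of the chart $\overline{\varphi}=(x,u,v)$ of Theorem \ref{theorem1}, exactly as in Lemma \ref{lemma00300}, and is therefore smooth. Differentiability of $\tau$ follows by solving $\mathbf{f}(s,\tau)=\mu(s)$ with the implicit function theorem (as in the footnote to Theorem \ref{theorem1}), so that $\sigma(s)=\partial\mathbf{f}/\partial t\,(s,\tau(s))$ is smooth and future null.

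Finally I would read off the conclusion by a one-line computation. Differentiating $\mu(s)=\mathbf{f}(s,\tau(s))$ gives
\begin{equation*}
\mu'(s)=J_s(\tau(s))+\tau'(s)\,\gamma_s'(\tau(s)),
\end{equation*}
and since $J_s(\tau(s))=\kappa(s)\,\gamma_s'(\tau(s))$ for a scalar $\kappa(s)$ by the sky-tangency, this reads $\mu'(s)=(\kappa(s)+\tau'(s))\,\sigma(s)$. Hence $\mu'$ is everywhere proportional to the future null vector $\sigma(s)$, so $\mathbf{g}(\mu',\mu')=(\kappa+\tau')^2\,\mathbf{g}(\sigma,\sigma)=0$, i.e.\ $\mu$ is a null curve, and wherever $\mu$ is regular $\sigma(s)$ is proportional to $\mu'(s)$. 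The reparametrisation chosen above gives $\gamma_s(\tau)=\exp_{\mu(s)}(\tau\sigma(s))$, which completes the argument.
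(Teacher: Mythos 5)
Your overall architecture coincides with the paper's: realise $\Gamma$ as a geodesic variation via Lemma \ref{lemma4.2}, characterise the incidence parameter $\tau(s)$ through \eqref{tangent_sky} as the vanishing of the screen part of $J_s$, prove that $\tau$ can be chosen differentiably, and then read everything off from $\mu'(s)=J_s(\tau(s))+\tau'(s)\gamma_s'(\tau(s))$. That final computation and the identification of the incidence condition are exactly as in the paper and are correct, and you rightly isolate the differentiability of $\tau$ as the substantive point. The paper attacks that point by applying the implicit function theorem in $t$ to the scalar $h(s,t)=\mathbf{g}\left(J_s(t),J_s(t)\right)$, whose zero set is precisely the incidence set (a null $J_s(t)$ orthogonal to the null $\gamma_s'(t)$ must be proportional to it). Your substitute --- one linear screen component of $[J_s(t)]$ taken along $[J_{s_0}'(\tau_0)]$ --- is actually the more robust function to feed to the implicit function theorem: its $t$-derivative at $(s_0,\tau_0)$ is $\bigl\Vert [J_{s_0}'(\tau_0)]\bigr\Vert^2>0$ in the positive definite screen metric, whereas $h\geq 0$ attains its minimum on its zero set, so $\partial h/\partial t$ vanishes wherever $h$ does and the nondegeneracy the paper needs there is itself delicate.

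The genuine gap is the step where you pass from the candidate $\theta(s)$ back to the incidence set. For $m\geq 4$ the zero set of your single component strictly contains the incidence set, so you must show that the \emph{full} screen field vanishes at $\bigl(s,\theta(s)\bigr)$, and the sentence ``the celestial hypothesis together with compactness of $I$ and local uniqueness in $V$ forces the true zero to coincide with $\theta(s)$'' does not do this. The celestial hypothesis hands you a genuine incidence point somewhere along $\gamma_s$, with no control on its location; since the proposition only assumes $M$ strongly causal (not null non-conjugate), $\gamma_{s_0}$ may carry several incidence points, and the genuine zeros for $s$ near $s_0$ could all accumulate on a branch through a different incidence point $t^*\neq\tau_0$, in which case $\theta(s)$ is not an incidence point for $s$ on one side of $s_0$ and the branch you are following dies. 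Local uniqueness inside a normal neighbourhood bounds the number of incidence points landing in $V$ by one but does not produce one. The same objection applies to your alternative route: a continuous curve in $\widehat{\Sigma}$ can switch sheets at a crossing, so ``$\Gamma'(s)$ remains in $\widehat{U}$ for $s$ near $s_0$'' is precisely what has to be proved, not a consequence of regularity of $\widehat{U}$. For $m=3$ the screen bundle is a line bundle, your component is the whole screen field, and your argument closes cleanly (indeed more cleanly than the paper's); for $m\geq 4$ an additional argument selecting a coherent branch of incidence points is required.
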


\begin{proof}
First, assume the existence of a celestial curve $\Gamma $ such that $\Gamma
\left( 0\right) =\gamma _{0}$ and $\Gamma \left( 1\right) =\gamma _{1}$.

By lemma \ref{lemma4.2} there exists a geodesic variation $\mathbf{f}:%
\mathcal{W}_{0}\rightarrow M$ where
\begin{equation*}
\mathcal{W}_{0}=\left\{ \left( s,t\right) \in \left[ 0,1\right] \times
\mathbb{R}:t\in I_{s}\right\}
\end{equation*}%
being $I_{s}$ the domain of the parametrization of $\gamma _{s}$ defined by $%
\mathbf{f}$ . Now, we want to prove that there exists a differentiable
function $t:\left[ 0,1\right] \rightarrow I$ such that for every $s\in \left[
0,1\right] $, the Jacobi field $J_{s}$ along $\gamma _{s}$ defined by $%
\mathbf{f}$ verifies
\begin{equation*}
J_{s}\left( t\left( s\right) \right) =\lambda _{s}\gamma _{s}^{\prime
}\left( t\left( s\right) \right) \in T_{\gamma _{s}\left( t\left( s\right)
\right) }M
\end{equation*}%
for some $\lambda _{s}\in \mathbb{R}$. By definition \ref{definition4.1}, $%
J_{s}$ must be proportional to $\gamma _{s}^{\prime }$ in some point $t_{s}$%
. In lemma \ref{lemma4.2}, we stated that $\mathcal{W}_{0}$ is open, then
for every $\left( s,t\right) \in \mathcal{W}_{0}$ there exist intervals $%
K_{s}$, $H_{s}$ such that $\left( s,t\right) \in K_{s}\times H_{s}\subset
\mathcal{W}_{0}$ where the geodesic variation $\mathbf{f}$ is defined.
Choose a pair $\left( s_{0},t_{0}\right) $ verifying $J_{s_{0}}\left(
t_{0}\right) =\lambda \gamma _{s_{0}}^{\prime }\left( t_{0}\right) $.
Without any lack of generality, we can consider $K_{s_{0}}\times H_{s_{0}}$
such that $S\left( \mathbf{f}\left( K_{s_{0}}\times H_{s_{0}}\right) \right)
\subset U$ where $U\subset \Sigma $ is a normal neighbourhood. Define the
set
\begin{equation*}
A_{s_{0}}=\left\{ \left( s,t\right) \in K_{s_{0}}\times
H_{s_{0}}:J_{s}\left( t\right) =\lambda \gamma _{s}^{\prime }\left( t\right)
,\lambda \in \mathbb{R}\right\}
\end{equation*}%
We will prove that $A_{s_{0}}$ is defined locally at $s=s_{0}$ by a
differentiable function $t=t_{s_{0}}\left( s\right) $. Define the function
\begin{equation*}
\begin{array}{crcl}
h_{s_{0}}: & K_{s_{0}}\times H_{s_{0}} & \rightarrow & \mathbb{R} \\
& \left( s,t\right) & \mapsto & \mathbf{g}\left( J_{s}\left( t\right)
,J_{s}\left( t\right) \right)%
\end{array}%
\end{equation*}%
where $\mathbf{g}$ denotes the metric in $M$, and define the set
\begin{equation*}
\widehat{A}_{s_{0}}=\left\{ \left( s,t\right) \in K_{s_{0}}\times
H_{s_{0}}:h_{s_{0}}\left( s,t\right) =0\right\}
\end{equation*}

It is clear that $h_{s_{0}}$ is differentiable and $A_{s_{0}}\subset
\widehat{A}_{s_{0}}$. To prove that $\widehat{A}_{s_{0}}\subset A_{s_{0}}$
consider any $\left( s,t\right) \in \widehat{A}_{s_{0}}$, then
\begin{equation}  \label{equation4.1-1}
\mathbf{g}\left( J_{s}\left( t\right) ,J_{s}\left( t\right) \right) =0
\end{equation}%
but, since the curve $\Gamma $ is celestial, then $\Gamma ^{\prime
}\left(s\right) \in \widehat{\Sigma}$ for every $s\in I$ and $J_{s}\left(
t\right)$ must also verifies
\begin{equation}  \label{equation4.1-2}
\mathbf{g}\left( J_{s}\left( t\right) ,\gamma _{s}^{\prime }\left(
t\right)\right) =0
\end{equation}%
The equations \ref{equation4.1-1} and \ref{equation4.1-2} imply that $%
J_{s}\left(t\right) =\lambda _{s}\gamma _{s}^{\prime }\left( t\right) $ for
some $\lambda _{s}\in \mathbb{R}$, therefore $\left( s,t\right) \in
A_{s_{0}} $.

Calculating $\frac{\partial h_{s_{0}}}{\partial t}\left( s_{0},t_{0}\right) $%
, we obtain
\begin{equation*}
\frac{\partial h_{s_{0}}}{\partial t}\left( s_{0},t_{0}\right) =\left. \frac{%
\partial }{\partial t}\right\vert _{\left( s_{0},t_{0}\right) }\mathbf{g}%
\left( J_{s}\left( t\right) ,J_{s}\left( t\right) \right) =
\end{equation*}%
\begin{equation*}
=2\mathbf{g}\left( \left. \frac{D}{dt}\right\vert _{t_{0}}J_{s_{0}}\left(
t\right) ,J_{s_{0}}\left( t_{0}\right) \right) =2\mathbf{g}\left(
J_{s_{0}}^{\prime }\left( t_{0}\right) ,J_{s_{0}}\left( t_{0}\right) \right)
\end{equation*}%
where $\frac{D}{dt}$ denotes the covariant derivative along $\gamma _{s_{0}}$%
. If we suppose $\frac{\partial h_{s_{0}}}{\partial t}\left(
s_{0},t_{0}\right) =0$ then we have that
\begin{equation*}
0=\mathbf{g}\left( J_{s_{0}}^{\prime }\left( t_{0}\right) ,J_{s_{0}}\left(
t_{0}\right) \right) =\mathbf{g}\left( J_{s_{0}}^{\prime }\left(
t_{0}\right) ,\lambda _{s_{0}}\gamma _{s_{0}}^{\prime }\left( t_{0}\right)
\right)
\end{equation*}%
thereby $J_{s_{0}}^{\prime }\left( t_{0}\right) =\eta _{s_{0}}\gamma
_{s_{0}}^{\prime }\left( t_{0}\right) $ for some $\eta _{s_{0}}\in \mathbb{R}
$. In this case, $J_{s_{0}}\left( t_{0}\right) $ is proportional to $\gamma
_{s_{0}}^{\prime }\left( t_{0}\right) $, then $\Gamma ^{\prime }\left(
s_{0}\right) =0$, but this conflicts with $\Gamma $ is a regular curve. So
we state $\frac{\partial h_{s_{0}}}{\partial t}\left( s_{0},t_{0}\right)
\neq 0$. By implicit function theorem, $h_{s_{0}}\left( s,t\right) =0$
implicitly define a function $t=t_{s_{0}}\left( s\right) $ in an open
neighbourhood $W_{s_{0}}=\left( s_{0}-\epsilon _{s_{0}},s_{0}+\epsilon
_{s_{0}}\right) $ of $s_{0}$. By compactness of the interval $\left[ 0,1%
\right] $, there is a finite covering $\left\{ W_{k}\right\} _{k=1,\ldots
,N} $ of $\left[ 0,1\right] $ with its corresponding functions $\left\{
t_{k}\right\} _{k=1,\ldots ,N}$. For every $s\in W_{k}\cap W_{j}$ then $%
t_{k}\left( s\right) =t_{j}\left( s\right) $ and it is possible to define
the function $t:\left[ 0,1\right] \rightarrow \mathbb{R}$ as
\begin{equation*}
t\left( s\right) =t_{k}\left( s\right) \hspace{1cm}\text{if }s\in W_{k}
\end{equation*}

Now, consider the curve in $M$
\begin{equation*}
\mu \left( s\right) =\mathbf{f}\left( s,t\left( s\right) \right)
\end{equation*}%
The tangent vector $\mu ^{\prime }$ is given by
\begin{equation*}
\mu ^{\prime }\left( s\right) =J_{s}\left( t\left( s\right) \right)
+t^{\prime }\left( s\right) \gamma _{s}^{\prime }\left( t\left( s\right)
\right) =
\end{equation*}%
\begin{equation*}
=\lambda \left( s\right) \gamma _{s}^{\prime }\left( t\left( s\right)
\right) +t^{\prime }\left( s\right) \gamma _{s}^{\prime }\left( t\left(
s\right) \right) =\left[ \lambda \left( s\right) +t^{\prime }\left( s\right) %
\right] \gamma _{s}^{\prime }\left( t\left( s\right) \right)
\end{equation*}%
So we have that $\mu $ is a null curve and $\sigma \left( s\right) =$ $%
\gamma _{s}^{\prime }\left( t\left( s\right) \right) $ is a differentiable
curve which is proportional to $%
\mu ^{\prime }\left( s\right) $ where $\mu $ is regular. If we define $%
\overline{\mathbf{f}}\left( s,\tau \right) =\exp _{\mu \left( s\right)
}\left( \tau \sigma \left( s\right) \right) $, we have that $\overline{%
\mathbf{f}}\left( s,0\right) =\mathbf{f}\left( s,t\left( s\right) \right)
=\mu \left( s\right) $. Then
\begin{equation*}
\overline{\mathbf{f}}\left( s,\tau \right) =\exp _{\mu \left( s\right)
}\left( \tau \sigma \left( s\right) \right) =\exp _{\mu \left( s\right)
}\left( \tau \gamma _{s}^{\prime }\left( t\left( s\right) \right) \right) =%
\overline{\gamma }_{s}\left( \tau \right)
\end{equation*}%
and $\overline{\gamma }_{s}$ is a re-parametrization of $\gamma _{s}$.
Therefore $\Gamma\left( s\right) =\overline{\gamma _{s}}=\gamma _{s}\in \mathcal{N}$.
\end{proof}

The previous proposition describes a celestial curve $\Gamma$ as a pair $%
\left(\mu , \sigma\right)\subset M \times \mathbb{N}^{+}$ where $\mu$ is a null
curve that cannot be geodesic because in this case $\Gamma$ would not be
regular.  Moreover the regularity of $\mu$ is not guaranteed at all, in fact, it is
possible to exhibit examples of celestial curves such that $\mu $ stops for $%
s\in\left[a,b\right]\subset \mathbb{R}$ where $a=b$ is not excluded. While $%
\mu$ remains at $\mu\left(s\right)=p\in M$, the curve $\sigma\left(s\right)$
moves smoothly in $\mathbb{N}^{+}_{p}$. The time-orientation of $\mu$ is not
guaranteed neither, as the next example shows.

\begin{example}
\label{example-mu} Let $\mathbb{M}^3$ be the 3--dimensional Minkowski
space--time with coordinates given by $\left(t,x,y\right)\in \mathbb{R}^3$
and metric $\mathbf{g}=-dt\otimes dt + dx\otimes dx +dy\otimes dy$. Let us
denote its space of  light rays as $\mathcal{N}$. We consider the curve $%
\Gamma:\left[-\varepsilon, \varepsilon\right] \rightarrow \mathcal{N}$
defined by the geodesic variation
\begin{equation*}
\mathbf{f}\left(s,\tau\right)=\gamma_{s}\left(\tau\right)=\left(\tau+\frac{1%
}{2}s^2 , s\sin s+ \left(1+\tau\right)\cos s , -s\cos s+
\left(1+\tau\right)\sin s \right)
\end{equation*}
as $\Gamma\left(s\right)=\gamma_s$. An easy calculation shows that $\Gamma$
is a celestial curve. For this curve, $\mu$ is defined as
\begin{equation*}
\mu\left(s\right)=\mathbf{f}\left(s,\tau\left(s\right)\right)=\mathbf{f}%
\left(s,0\right)=\left( \frac{1}{2}s^2 , s\sin s+ \cos s , -s\cos s+ \sin
s\right)
\end{equation*}
hence,
\begin{equation*}
\mu^{\prime }\left(s\right)=\left( s, s\cos s , s\sin s\right)=s\left( 1,
\cos s , \sin s\right)
\end{equation*}
and $\mu$ is a null curve since
\begin{equation*}
\mathbf{g}\left(\mu^{\prime }\left(s\right) , \mu^{\prime }\left(s\right)
\right)=0
\end{equation*}
but the $s$ factor in $\mu ^{\prime }$ changes the time--orientation of $\mu$%
: if $s<0$ then $\mu$ is past--oriented and if $s>0$ then $\mu$ is
future--oriented. It is trivial to observe that $\mu$ is not a regular curve
when $s=0$.
\end{example}

The previous example motivates the following definitions \ref{definition5.5}-\ref{definition5.7}.

\begin{definition}
\label{definition5.5} With the same notations used in Proposition \ref%
{proposition4.3}, a celestial curve $\Gamma \subset \mathcal{N}$ is called a
\emph{sky curve} if $\Gamma \subset X$ for some sky $X\in \Sigma$. We denote
the set of all sky curves as $\mathfrak{C}_s\left(\mathcal{N}\right)$.
\end{definition}

\begin{definition}
We say that $\left( M,\mathcal{C}\right) $ is \emph{null non-conjugate} if
there are no conjugate points in any null geodesic segment or, equivalently,
if $\widehat{T}X \cap \widehat{T}Y \neq \emptyset$ for two skies $X, Y$ lying on a null geodesic segment, then $X = Y$.
\end{definition}

It is easy to prove that the property of being null non-conjugate does not depend on the chosen auxiliary
metric $\mathbf{g}\in \mathcal{C}$.  Notice that a convex normal neighbourhood
$V$ at any point $x\in M$ is null non-conjugate because it is normal (recall Def. \ref{normal_neigh}) and similarly, a neighbourhood ``small'' enough of any closed spacial surface has this property too.

By convention, we can consider $M \subset \mathfrak{L}\left(M\right)$ since
any point $p\in M$ can be identified with a constant curve.   Moreover, if $M$ is
null non--conjugate, then the
map $\pi_{CL}: \mathfrak{C}\left(\mathcal{N}\right) \rightarrow \mathfrak{L}%
\left(M\right)$ given by $\pi_{CL}\left(\Gamma\right)=\mu$ is well defined
and $\mu$ is characterized by $\Gamma ^{\prime }\left(s\right)\in
\widehat{T}_{\Gamma\left(s\right)}S\left(\mu\left(s\right)\right)$ for every
$s$ \footnote{In the general case, $\Gamma \in \mathfrak{C}\left(\mathcal{N}\right)$ can be
defined by several curves $\mu_i$ with $i=1,2,\ldots $, and so $%
\pi_{CL}\left(\Gamma\right)$ should be interpreted as the family $\lbrace
\mu_i \rbrace$.}.  We call $\{ S(\mu(s))\}$ the Legendrian isotopy of $\Gamma$.

\begin{definition}
\label{definition5.7}  Let $(\mathcal{N}, \Sigma)$ the space of rays and skies of a null non-conjugate strongly causal space--time $M$.
We define the set of \emph{causal celestial curves} as
\begin{equation*}
\mathfrak{C}_c \left(\mathcal{N}\right)= \left\{ \Gamma\in \mathfrak{C}
\left(\mathcal{N}\right) : \mu =  \pi_{CL}\left(\Gamma\right) \in 
\mathfrak{L}_c\left(M\right) \right\}
\end{equation*}
\end{definition}

Definition \ref{definition5.7} of the class of causal celestial curves in $\mathcal{N}$ uses explicitly the space $M$, however because of the
results of Section \ref{section-leg-isot} we can provide a characterization of $%
\mathfrak{C}_{c}\left( \mathcal{N}\right) $ without making any reference to $M$. In
fact, using  Corolary \ref{cor00300} and Propositions \ref{prop00300}, \ref{proposition4.3}, we see that $\mu \in \mathfrak{L}%
_{c}\left( M\right) $ if and only if $\mu $ is a null curve defining a non-positive (or non--negative)
legendrian isotopy and we get the following corollary that could be used as an alternative definition of $\mathfrak{C}_c \left(\mathcal{N}\right)$.

\begin{corollary}\label{causal_legendrian}
A celestial curve $\Gamma \in \mathfrak{C} \left(\mathcal{N}\right)$ is a
past (future) causal celestial curve if and only if $\Gamma$ defines a non-negative (non-positive) legendrian
isotopy of skies.
\end{corollary}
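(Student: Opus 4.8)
The plan is to read this corollary as the bookkeeping assembly of the results already proved in Section \ref{section-leg-isot} together with Proposition \ref{proposition4.3}; essentially no new geometry is needed, only a careful chaining of biconditionals. I would begin with an arbitrary celestial curve $\Gamma \in \mathfrak{C}(\mathcal{N})$ and apply Proposition \ref{proposition4.3} to extract the underlying null curve $\mu = \pi_{CL}(\Gamma)\colon[0,1]\to M$, for which each $\Gamma(s)=\gamma_s$ is the reparametrized null geodesic $\tau\mapsto \exp_{\mu(s)}(\tau\sigma(s))$ with $\sigma(s)\in\mathbb{N}^{+}_{\mu(s)}$ differentiable and proportional to $\mu'(s)$ wherever $\mu$ is regular. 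By the construction introduced just before Definition \ref{definition5.7}, the Legendrian isotopy attached to $\Gamma$ is exactly the family $\{S(\mu(s))\}_{s\in[0,1]}$; since $\mu$ is differentiable, this really is a legendrian isotopy (parametrized, for example, by $F^{\mu}$), so Corollary \ref{cor00300} applies to it without modification.

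The heart of the argument is then the chain of equivalences
\[
\Gamma \text{ past causal} \iff \mu \text{ causal past} \iff \{S(\mu(s))\}\text{ non-negative}.
\]
The left-hand equivalence is just the unwinding of Definition \ref{definition5.7}: saying $\Gamma$ is a past causal celestial curve means $\mu=\pi_{CL}(\Gamma)\in\mathfrak{L}_c(M)$ carries the past orientation, i.e. $\mu'(s)\in\mathbb{N}^{-}$ wherever $\mu$ is regular, which is precisely the statement that $\mu$ is causal past. The right-hand equivalence is Corollary \ref{cor00300} applied to the isotopy of skies $\{S(\mu(s))\}$. I would then obtain the future/non-positive case by repeating the identical chain with all orientations and signs reversed, invoking the non-positive branch of Corollary \ref{cor00300}.

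I do not expect a genuine obstacle, because every analytic difficulty has already been discharged upstream: differentiability of $\mu$ is Proposition \ref{proposition4.3} (and Lemma \ref{lemma00300}); independence of the sign from the parametrization is Lemma \ref{lemma00250}; the passage from the pointwise inequality $\mathbf{g}(\mu'(s),u_s)\geq 0$ to causal-pastness of $\mu$ is Lemma \ref{lemma00350} via Proposition \ref{prop00300}; and the assembly into a single biconditional for skies is Corollary \ref{cor00300}. The only points that demand attention are purely definitional: one must verify that the family of skies designated ``the Legendrian isotopy of $\Gamma$'' coincides with the $\{S(\mu(s))\}$ appearing in Corollary \ref{cor00300}, and that the word ``past'' in the present corollary matches the ``causal past'' convention there. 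Both are immediate, and the possibly degenerate behaviour of $\mu$---stopping, or failing to be regular at isolated parameters, as in Example \ref{example-mu}---poses no difficulty, since Corollary \ref{cor00300} already certifies $\mu$ as causal past by viewing it as the union of its causal past regular segments.
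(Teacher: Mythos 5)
Your proposal is correct and follows essentially the same route as the paper, which proves this corollary only by the remark preceding it: combine Proposition \ref{proposition4.3} (to extract $\mu=\pi_{CL}(\Gamma)$ and identify the isotopy of $\Gamma$ with $\{S(\mu(s))\}$) with Corollary \ref{cor00300} and Proposition \ref{prop00300}, then unwind Definition \ref{definition5.7}. Your additional remarks on parametrization independence (Lemma \ref{lemma00250}) and the non-regular points of $\mu$ are exactly the definitional checks the paper leaves implicit.
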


\begin{definition}
Let $M_{1}$ and $M_{2}$ be two strongly causal spaces and let $\mathcal{%
N}_{1}$ and $\mathcal{N}_{2}$ be their corresponding spaces of light rays.
A diffeomorphism $\phi :\mathcal{N}_{1}\rightarrow \mathcal{N}%
_{2}$ will be called a celestial map if it preserves celestial vectors, (i.e. $\phi _{\ast
}\left( \widehat{\Sigma }_{1}\right) \subset \widehat{\Sigma }_{2}$).  \end{definition}

The following Lemma is a direct consequence of the definitions.

\begin{lemma}\label{lemma5.12}
Any celestial map
$\phi :\mathcal{N}_{1}\rightarrow \mathcal{N}_{2}$ preserves celestial curves.
\end{lemma}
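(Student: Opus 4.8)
The plan is to verify directly from the definitions that the image of a celestial curve under a celestial map is again a celestial curve; as the lemma's preamble indicates, this is essentially a chain-rule computation. Let $\Gamma\colon I\rightarrow\mathcal{N}_{1}$ be a celestial curve, so that by Definition \ref{definition4.1} the curve $\Gamma$ is differentiable and $\Gamma^{\prime}(s)\in\widehat{\Sigma}_{1}$ for every $s\in I$. I would set $\widetilde{\Gamma}=\phi\circ\Gamma\colon I\rightarrow\mathcal{N}_{2}$; since $\phi$ is a diffeomorphism it is in particular smooth, so $\widetilde{\Gamma}$ is a differentiable curve in $\mathcal{N}_{2}$.

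The key step is the chain rule: $\widetilde{\Gamma}^{\prime}(s)=(\phi\circ\Gamma)^{\prime}(s)=\phi_{\ast}(\Gamma^{\prime}(s))$ for each $s\in I$. Because $\Gamma$ is celestial we have $\Gamma^{\prime}(s)\in\widehat{\Sigma}_{1}$, and because $\phi$ is a celestial map we have by definition $\phi_{\ast}(\widehat{\Sigma}_{1})\subset\widehat{\Sigma}_{2}$; combining these pointwise gives $\widetilde{\Gamma}^{\prime}(s)=\phi_{\ast}(\Gamma^{\prime}(s))\in\widehat{\Sigma}_{2}$ for every $s\in I$, which is exactly the condition that $\widetilde{\Gamma}$ be celestial.

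The only point that requires a separate (trivial) check, and hence the closest thing to an obstacle, is regularity: the definition of a celestial curve requires $\widetilde{\Gamma}^{\prime}(s)\neq 0$. This is immediate, since celestial vectors are nonzero by construction ($\widehat{\Sigma}_{i}\subset\widehat{T}\mathcal{N}_{i}$ with $\widehat{T}_{\gamma}\mathcal{N}=T_{\gamma}\mathcal{N}-\{0\}$), so $\Gamma^{\prime}(s)\neq 0$, while $\phi_{\ast}$ restricts to a linear isomorphism on each tangent space and is therefore injective, forcing $\phi_{\ast}(\Gamma^{\prime}(s))\neq 0$. Thus $\widetilde{\Gamma}^{\prime}(s)\in\widehat{\Sigma}_{2}$ for all $s$, i.e.\ $\phi\circ\Gamma\in\mathfrak{C}(\mathcal{N}_{2})$, and the lemma follows; no deeper structural input beyond the defining inclusion $\phi_{\ast}(\widehat{\Sigma}_{1})\subset\widehat{\Sigma}_{2}$ is needed.
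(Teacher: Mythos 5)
Your proof is correct and follows essentially the same route as the paper's: apply the chain rule to get $(\phi\circ\Gamma)'(s)=\phi_{\ast}(\Gamma'(s))$ and invoke the defining inclusion $\phi_{\ast}(\widehat{\Sigma}_{1})\subset\widehat{\Sigma}_{2}$. The extra remark on nonvanishing of the derivative is a harmless elaboration of what the paper leaves implicit, since $\widehat{\Sigma}_{2}$ already consists of nonzero vectors.
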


\begin{proof}
If $\Gamma :I\rightarrow \mathcal{N}_{1}$ is a celestial curve, then $\Gamma
^{\prime }\left( s\right) \in \widehat{\Sigma }_{1}$ for every $s\in I$. Since $\phi$ is celestial then $\left( \phi \circ \Gamma \right)
^{\prime }\left( s\right) =\phi _{\ast }\left( \Gamma ^{\prime }\left(
s\right) \right) \in \widehat{\Sigma }_{2}$ and hence, $\phi \circ \Gamma
:I\rightarrow \mathcal{N}_{2}$ is a celestial curve. Moreover $\phi$ induces a map
$ \phi :\mathfrak{C}\left( \mathcal{N}_{1}\right) \rightarrow \mathfrak{C}\left( \mathcal{N}_{2}\right)$.
\end{proof}

Finally we have the following definition:

\begin{definition} Let $M_{1}$ and $M_{2}$ be two strongly causal spaces and let $\mathcal{%
N}_{1}$ and $\mathcal{N}_{2}$ be their corresponding spaces of  light rays.
A celestial map $\phi :\mathcal{N}_{1}\rightarrow \mathcal{N}%
_{2}$ will be called a causal celestial map if $\phi $ preserves causal
celestial curves, that is
\begin{equation*}
\phi :\mathfrak{C}_{c}\left( \mathcal{N}_{1}\right) \rightarrow \mathfrak{C}%
_{c}\left( \mathcal{N}_{2}\right)
\end{equation*}%
\end{definition}

\begin{theorem}
\label{lemma5.13} Let $M_{1}$ and $M_{2}$ be two strongly causal spaces,
suppose that $M_{2}$ is null non-conjugate, and let $\left( \mathcal{N}%
_{1},\Sigma _{1}\right) $ and $\left( \mathcal{N}_{2},\Sigma _{2}\right) $
be their corresponding pairs of spaces of  light rays and skies. Let $%
\phi :\mathcal{N}_{1}\rightarrow \mathcal{N}_{2}$ be a celestial map. Then
the following conditions are equivalent:

\begin{enumerate}
\item \label{TR-i} $\phi $ is a  causal celestial map, that is $\phi \circ
\Gamma _{1}\in \mathfrak{C}_{c}\left( \mathcal{N}_{2}\right) $, for all
$\Gamma _{1}\in \mathfrak{C}_{c}\left( \mathcal{N}_{1}\right) $

\item \label{TR-ii} $\phi $ is a celestial sky map, that is $\phi \circ
\Gamma _{1}\in \mathfrak{C}_{s}\left( \mathcal{N}_{2}\right)$, for all $\Gamma
_{1}\in \mathfrak{C}_{s}\left( \mathcal{N}_{1}\right) $.

\item \label{TR-iii} There exists a conformal immersion $\Phi
:M_{1}\rightarrow M_{2}$ such that $\phi \left( \gamma \right) =\Phi \circ
\gamma $ for every $\gamma \in \mathcal{N}_{1}$.
\end{enumerate}
\end{theorem}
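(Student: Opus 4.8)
The plan is to prove the cycle $(\ref{TR-iii})\Rightarrow(\ref{TR-i})\Rightarrow(\ref{TR-ii})\Rightarrow(\ref{TR-iii})$. The implications issuing from $(\ref{TR-iii})$ are routine: if $\phi(\gamma)=\Phi\circ\gamma$ for a conformal immersion $\Phi$, then $\phi(S(x))=S(\Phi(x))$, so $\phi$ already preserves skies; since a conformal map carries time-orientable null curves to time-orientable null curves, Proposition \ref{proposition4.3} and the definition of $\pi_{CL}$ show $\phi(\mathfrak{C}_c(\mathcal{N}_1))\subset\mathfrak{C}_c(\mathcal{N}_2)$, which is $(\ref{TR-i})$, and $\phi$ sends a curve lying in a single sky to a curve lying in a single sky, giving $(\ref{TR-ii})$ as well. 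The substance of the theorem is therefore contained in $(\ref{TR-ii})\Rightarrow(\ref{TR-iii})$ and $(\ref{TR-i})\Rightarrow(\ref{TR-ii})$.

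For $(\ref{TR-ii})\Rightarrow(\ref{TR-iii})$ the key is that $M_2$ is null non-conjugate, which lets me \emph{read off a point of $M_2$ from a celestial vector}: by \eqref{tangent_sky} a nonzero $w\in\widehat{T}_\beta S(y)$ is represented by a Jacobi field vanishing $(\mathrm{mod}\,\beta')$ exactly at $y$, and null non-conjugacy forces $y$ to be unique along $\beta$. Since $\phi$ is celestial, this defines a smooth, scaling-invariant map $F:\widehat{\Sigma}_1\to M_2$, $F(J)=y$ where $\phi_*J\in\widehat{T}_{\phi\gamma}S(y)$. I would then show that $(\ref{TR-ii})$ forces $F$ to be constant on $\widehat{T}X$ for each sky $X=S(x)\cong S^{m-2}$. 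First $F$ is constant on each fibre $\widehat{T}_\gamma X$: for two independent directions $J_0,J_1\in T_\gamma X$ (using $\dim X=m-2\geq 2$; the case $m=3$ follows from scaling-invariance) pick a sky curve $\Gamma\subset X$ with $\Gamma(0)=\Gamma(1)=\gamma$, $\Gamma'(0)=J_0$, $\Gamma'(1)=J_1$; by $(\ref{TR-ii})$ the image $\phi\circ\Gamma$ lies in one sky $S(y)$, so both $\phi_*J_0$ and $\phi_*J_1$ are tangent to $S(y)$ at $\phi(\gamma)$ and $F(J_0)=y=F(J_1)$. Thus $F$ descends to $\bar F:X\to M_2$, and along any sky curve $\Gamma\subset X$ the relation $\phi\circ\Gamma\subset S(y)$ gives $\bar F\circ\Gamma\equiv y$; connectedness of $X$ yields $\bar F\equiv y$. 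Hence $\phi_*(T_\gamma X)=T_{\phi\gamma}S(y)$ for all $\gamma$, so $\phi|_X$ is an injective immersion of $X$ into the compact connected sphere $S(y)$ of equal dimension, therefore onto: $\phi(X)=S(y)$. So $\phi$ preserves skies and Theorem \ref{theorem4.6} delivers the conformal immersion $\Phi$ with $\phi(\gamma)=\Phi\circ\gamma$.

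For $(\ref{TR-i})\Rightarrow(\ref{TR-ii})$ I would use the sign of the induced isotopy. For a celestial curve $\Gamma$ with $\mu=\pi_{CL}(\Gamma)$ set $\rho_\Gamma=(F^{\mu})^{*}\alpha(\partial/\partial s)=\mathbf{g}(\mu',u)$ (Proposition \ref{prop00300}); by Corollary \ref{causal_legendrian} $\Gamma$ is past (future) causal iff $\rho_\Gamma\geq 0$ (resp.\ $\leq 0$), and $\Gamma\in\mathfrak{C}_s$ iff $\rho_\Gamma\equiv 0$, i.e.\ iff $\mu$ is constant. Parameter reversal exchanges past- and future-causal curves and their images; combined with $(\ref{TR-i})$, the continuity of $\rho$ under $C^1$-convergence, and connectedness of the space of (strictly) past-causal celestial curves, this forces $\phi$ to preserve or reverse the causal co-orientation \emph{globally}, so that past-causal curves map to curves of one fixed sign of $\rho$ and future-causal curves to the opposite sign. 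Now approximate a sky curve $\Gamma_0\subset S(x)$ in $C^1$ by strictly past-causal $\Gamma_\epsilon$ (take $\mu_\epsilon$ a short past-null non-geodesic arc shrinking to $x$, with directions tending to those of $\Gamma_0$) and, symmetrically, by strictly future-causal $\Gamma_\epsilon'$. Passing to the limit, $\rho_{\phi\Gamma_\epsilon}\geq 0$ and $\rho_{\phi\Gamma_\epsilon'}\leq 0$ both converge to $\rho_{\phi\Gamma_0}$, forcing $\rho_{\phi\Gamma_0}\equiv 0$; hence $\phi\circ\Gamma_0\in\mathfrak{C}_s(\mathcal{N}_2)$, which is $(\ref{TR-ii})$.

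The main obstacle is the rigidity step inside $(\ref{TR-ii})\Rightarrow(\ref{TR-iii})$: a celestial diffeomorphism need \emph{not} preserve skies, because an $(m-2)$-plane all of whose nonzero vectors are celestial need not be tangent to a single sky. Null non-conjugacy of $M_2$ is precisely what makes the vanishing-point map $F$ well defined, and it is the hypothesis that \emph{whole} sky curves (not merely tangent vectors) map into \emph{whole} skies that, through the connectedness of $X\cong S^{m-2}$, upgrades the pointwise celestial data into $\phi(X)=S(y)$ and thereby connects to Theorem \ref{theorem4.6}. A secondary delicate point, which I expect to require the most technical care in $(\ref{TR-i})\Rightarrow(\ref{TR-ii})$, is the global consistency of the co-orientation behaviour of $\phi$ and the explicit construction of the two one-sided causal approximations of a sky curve together with the $C^1$-continuity of $\pi_{CL}$.
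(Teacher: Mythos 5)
Your implication (\ref{TR-ii})$\Rightarrow$(\ref{TR-iii}) is sound, and in fact more careful than the paper's: the paper passes from ``$\phi$ preserves sky curves'' to ``$\phi$ preserves skies'' in one sentence, whereas your vanishing-point map $F:\widehat{\Sigma}_1\to M_2$ (well defined precisely because $M_2$ is null non-conjugate), the descent to $\bar F:X\to M_2$ via closed sky curves, and the compactness argument giving $\phi(X)=S(y)$ exactly fill that in. The steps issuing from (\ref{TR-iii}) are also fine.

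The genuine gap is in (\ref{TR-i})$\Rightarrow$(\ref{TR-ii}). Your squeeze argument needs both one-sided approximations to map to causal curves of \emph{opposite} sign: if $\phi\circ\Gamma_\epsilon$ and $\phi\circ\Gamma'_\epsilon$ both happened to satisfy $\rho\geq 0$, the limit would only give $\rho_{\phi\Gamma_0}\geq 0$, which does not force $\phi\circ\Gamma_0$ to be a sky curve. You delegate this to a ``global co-orientation consistency'' obtained from connectedness of the space of strictly past-causal celestial curves plus local constancy of the sign of $\rho_{\phi\Gamma}$. But the sign is exactly \emph{not} locally constant at a curve whose image is a sky curve ($\rho\equiv 0$), and nothing in hypothesis (\ref{TR-i}) rules out a strictly causal curve mapping to a sky curve; so the consistency claim presupposes control over the very degeneration you are trying to detect, and reversing parameters only relates $\Gamma_\epsilon$ to its own reversal, not to $\Gamma'_\epsilon$. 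The paper avoids all of this with one device you do not use: take a \emph{closed} sky curve $\Gamma_1$ with $\Gamma_1(0)=\Gamma_1(1)$. Since sky curves are causal celestial, (\ref{TR-i}) makes $\Gamma_2=\phi\circ\Gamma_1$ a closed causal celestial curve, so $\mu_2=\pi_{CL}(\Gamma_2)$ is a time-oriented null curve whose two endpoints $\mu_2(0),\mu_2(1)$ both lie on the single light ray $\gamma_2=\Gamma_2(0)=\Gamma_2(1)$. If $\mu_2$ were non-constant, strong causality gives $\mu_2(0)\neq\mu_2(1)$, and \cite[Prop.~10.51]{On83} makes these two points chronologically related, which forces a conjugate pair along $\gamma_2$ and contradicts null non-conjugacy of $M_2$; hence $\mu_2$ is constant and $\Gamma_2$ is a sky curve. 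This is a purely causal-theoretic rigidity statement requiring no approximation, no continuity of $\pi_{CL}$, and no orientation bookkeeping; I recommend you replace your limit argument with it (your treatment of ``preserves sky curves $\Rightarrow$ preserves skies'' can then be kept as is before invoking Theorem \ref{theorem4.6}).
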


\begin{proof}
(\ref{TR-i}) $\Rightarrow $ (\ref{TR-ii}) and (\ref{TR-iii}) $\Rightarrow $ (%
\ref{TR-i}) are trivial.

(\ref{TR-ii}) $\Rightarrow $ (\ref{TR-iii}) Consider $X_{1}\in \Sigma _{1}$ and a closed sky curve $\Gamma
_{1}\in \mathfrak{C}_{s}\left( \mathcal{N}_{1}\right) $ such that $\Gamma
_{1}:\left[ 0,1\right] \rightarrow X_{1}\subset \mathcal{N}_{1}$. Since $\phi $ is a diffeomorphism and by lemma \ref%
{lemma5.12}, then $\Gamma _{2}=\phi \circ \Gamma _{1}$ is a closed celestial
curve. Let $\mu _{2}$ and $\sigma _{2}$ be the curves defining $\Gamma _{2}$%
, according to proposition \ref{proposition4.3}. Then, the endpoints verify
\begin{equation*}
\mu _{2}\left( 0\right) ,\mu _{2}\left( 1\right) \in \Gamma _{2}\left(
0\right) =\Gamma _{2}\left( 1\right) =\gamma _{2}\in \mathcal{N}_{2}
\end{equation*}%
By the hypothesis we have that $\Gamma _{2}\in \mathfrak{C}_{c}\left(
\mathcal{N}_{2}\right) $ and therefore $\mu _{2}\in \mathfrak{L}_{c}\left(
M\right) $ . We will show that $\mu _{2}$ is a constant, and therefore that $%
\Gamma _{2}$ is a sky curve. Suppose that $\mu _{2}$ is no constant, then
we can construct a future null curve $\overline{\mu }_{2}$ such that $%
\mathrm{Im}\left( \overline{\mu }_{2}\right) =\mathrm{Im}\left( \mu
_{2}\right) $ and $\mu _{2}\left( 0\right) ,\mu _{2}\left( 1\right)
\in \gamma _{2}\cap \overline{\mu }_{2}$. Since $M_{2}$ is strongly
causal, then $\mu _{2}\left( 0\right) \neq \mu _{2}\left( 1\right) $ and by
\cite[Prop. 10.51]{On83}, $\mu _{2}\left( 0\right)$ and $\mu _{2}\left( 1\right)$
are timelikely related and there exists a conjugate point of $\mu_2\left(0\right)$ in $\gamma _{2}$
before $\mu_2\left(1\right)$ contradicting that $M_2$ is conformal non-conjugate.
Therefore $\mu_2$ must be constant. This shows that $\phi$ preserves sky curves and hence
also skies. Then Thm. \ref{theorem4.6} gives us the desired result.
\end{proof}

The following example illustrates that the existence of a contactomorphism preserving celestial vectors between the spaces of  light rays of two space--times is not sufficient to induce a conformal diffeomorphism (on its image) between them, showing that condition (1) in Thm. \ref{lemma5.13} cannot be weakened.

\begin{example}
Let $M=\mathbb{M}^3$ be the 3--dimensional Minkowski space--time with coordinates given by $\left(t,x,y\right)\in \mathbb{R}^3$ and let $\mathcal{N}$ be its space of  light rays.
The hypersurface $C \equiv \left\{t=0\right\}$ is a Cauchy surface, then $\left(x,y,\theta\right)\in \mathbb{R}^2
\times \mathbb{S}^1$ are coordinates in $\mathcal{N}$ for any null geodesic $\gamma\left(s\right)=\left(s, x+s\cos \theta,y+s\sin \theta \right)$.
Then $\left\{ \left(\frac{\partial}{\partial x}\right)_{\gamma}, \left(\frac{\partial}{\partial y}\right)_{\gamma}, \left(\frac{\partial}{\partial \theta}\right)_{\gamma} \right\}$ is a basis of $T_{\gamma}\mathcal{N}$.
The contact hyperplane $\mathcal{H}_{\gamma}$ is generated by the tangent spaces of two different skies containing $\gamma$, therefore
\[
\mathcal{H}_{\gamma}=\mathrm{span}\left\{ \left(\frac{\partial}{\partial \theta}\right)_{\gamma}, \sin \theta \left(\frac{\partial}{\partial x}\right)_{\gamma} - \cos \theta \left(\frac{\partial}{\partial y}\right)_{\gamma} \right\}
\]
and a contact form $\alpha$ can be written as
\[
\alpha = \cos \theta dx + \sin \theta dy
\]
For this $\gamma$, we have that $T_{\gamma}S\left(\gamma\left(s\right)\right)=\mathrm{span}\left\{ s \left(\sin \theta \left(\frac{\partial}{\partial x}\right)_{\gamma} - \cos \theta \left(\frac{\partial}{\partial y}\right)_{\gamma}\right) + \left(\frac{\partial}{\partial \theta}\right)_{\gamma} \right\}$ with $s\in \mathbb{R}$ and hence the celestial vectors at $\gamma$ are given by $\widetilde{\gamma}=\bigcup_{s\in\mathbb{R}}T_{\gamma}S\left(\gamma\left(s\right)\right)$.
It can be easily observed that the whole $\mathcal{H}_{\gamma}$ is covered by $\widetilde{\gamma}$ except the subspace $\mathrm{span}\left\{ \sin \theta \left(\frac{\partial}{\partial x}\right)_{\gamma} - \cos \theta \left(\frac{\partial}{\partial y}\right)_{\gamma} \right\} $.

We can restrict this space to $M_0=\left\{\left(t,x,y\right)\in \mathbb{M}^3 : t<0 \right\}$ denoting $\mathcal{N}_0$ its corresponding space of  light rays.
By global hyperbolicity of $M$ and $M_0$, every null geodesic $\gamma_0 \in \mathcal{N}_0$ can be written as $\gamma_0 = \gamma \cap M_0$ for a unique null geodesic $\gamma \in \mathcal{N}$, then we can define the restriction map
\begin{equation*}
\begin{tabular}{rccl}
$\rho :$ & $\mathcal{N}$ & $\longrightarrow$ & $\mathcal{N}_0$ \\
 & $\gamma$   & $\longmapsto$  & $\gamma_0 = \gamma \cap M_0$
\end{tabular}
\end{equation*}
and the extension map
\begin{equation*}
\begin{tabular}{rccl}
$\varepsilon :$ & $\mathcal{N}_0$ & $\longrightarrow$ & $\mathcal{N}$ \\
         & $\gamma_0$          & $\longmapsto$     & $\gamma$
\end{tabular}
\end{equation*}
Both $\rho$ and $\varepsilon$ are contactomorphisms and they verify $\varepsilon = \rho^{-1}$ and hence we have that $\mathcal{N} \simeq \mathcal{N}_0$.

Now, let us consider $M_{\epsilon}=\left\{\left(t,x,y\right)\in \mathbb{R}^3 : t<\epsilon \right\}$ for $\epsilon >0$, equipped with the metric
\[
\mathbf{g}_{\epsilon} = -\left(1+f\left(t\right)\right)dt\otimes dt + 2f\left(t\right)dt\otimes dx + \left(1-f\left(t\right)\right)dx\otimes dx+dy\otimes dy
\]
where $f$ is a smooth function verifying $f\left(t\right)=0$ for every $t\leq 0$.
We can see $\mathbf{g}_{\epsilon}$ as a small perturbation of the metric $\mathbf{g}$ of $M$ for $0<t<\epsilon$.
Trivially, we observe that $M$ and $M_{\epsilon}$ are two space--times extending $M_0$.
By \cite{NM11}, the value of $\epsilon$ can be chosen small enough such that $M_{\epsilon}$ remains globally hyperbolic, then we can consider $\mathcal{N}_{\epsilon} \simeq \mathcal{N}$ and therefore $\mathcal{H}_{\gamma} \simeq \mathcal{H}_{\gamma_{0}} \simeq \mathcal{H}_{\gamma_{\epsilon}}$ for $\gamma_0 =\gamma \cap M_0$ and $\gamma_{\epsilon} =\gamma \cap M_{\epsilon}$. This extension is independent from the coordinates $x$ and $y$. Denoting by $\widetilde{\gamma_{\epsilon}}$, $\widetilde{\gamma_0}$ the celestial vectors at the corresponding curve, and working at $\mathcal{N}$ with certain notation abuse we have that $\widetilde{\gamma_0}=\bigcup_{s\in\left(-\infty,0\right)}T_{\gamma}S\left(\gamma\left(s\right)\right)\subset \widetilde{\gamma}\cap \widetilde{\gamma_{\epsilon}}$ then the value $\epsilon$ also can be selected small enough such that $\widetilde{\gamma_{\epsilon}}\subset \widetilde{\gamma}$ and therefore the contactomorphism $\Phi:\mathcal{N}_{\epsilon}\rightarrow\mathcal{N}$ preserves celestial vectors.
In spite of the existence of $\Phi$ preserving celestial vectors, the space--times $M$ and $M_{\epsilon}$ can not be conformally equivalent.
Observe that 3--dimensional Minkowski space--time $M$ is flat. Denoting as $R_{ij}$, $R$ and $g^{\epsilon}_{ij}$ the Ricci curvature, the scalar curvature and the metric in $M_{\epsilon}$ respectively, then the components of the Cotton tensor $\mathbf{C}_{\epsilon}$ in $M_{\epsilon}$ are given by $C_{ijk}=\nabla _{k}R_{ij}-\nabla _{j}R_{ik}+\frac{1}{4}\left( \nabla_{j}Rg^{\epsilon}_{ik}-\nabla _{k}Rg^{\epsilon}_{ij}\right)$. It is widely known that one 3--dimensional manifold is locally conformally flat if its Cotton tensor vanishes.
A straightforward calculation shows that $\mathbf{C}_{\epsilon}\neq 0$, then $M_{\epsilon}$ is not conformally flat and therefore it can not be conformal to $M$.
\end{example}

\newpage

\end{document}